\theoremstyle{plain}
\newtheorem{thm}{Theorem}
\newtheorem{lem}{Lemma}
\newtheorem{prop}{Proposition}
\newtheorem{cor}{Corollary}
\newtheorem{cond}{Condition}
\theoremstyle{definition}
\newtheorem{defn}{Definition}
\theoremstyle{remark}
\newcommand{\abs}[1]{\left| #1\right|}
\newcommand{\R}{\mathbb{R}}
\newcommand{\Var}{\mathrm{Var}}
\newcommand{\vertiii}[1]{{\left\vert\kern-0.25ex\left\vert\kern-0.25ex\left\vert #1 
    \right\vert\kern-0.25ex\right\vert\kern-0.25ex\right\vert}}
\newcommand{\domin}{d_{\min}}
\newcommand{\hatwj}{\hat w_j}
\newcommand{\wj}{w_j}
\newcommand{\exi}{x_i}
\newcommand{\xone}{x_1}
\newcommand{\xn}{x_n}
\newcommand{\hatR}{{R}}
\newcommand{\hatRi}{{R}_i}
\newcommand{\hatRo}{{R}_{\text{ora}}}
\newcommand{\hatRoi}{{R}_{\text{ora},i}}
\newcommand{\hatRA}{\hat{R}}
\newcommand{\hatRG}{\hat{R}_{G}}
\newcommand{\hatRGi}{\hat{R}_{G,i}}
\newcommand{\Rajen}[1]{{ \color{blue} RDS: #1}}
\newcommand{\algname}[1]{Algorithm }
\newcommand{\del}{{\bf\Delta}}
\newcommand{\Do}{{D_\beta}}
\newcommand{\er}{\text{rem}_1}
\newcommand{\erone}{\text{rem}_{11}}
\newcommand{\ertwo}{\text{rem}_{12}}
\newcommand{\erzero}{\text{rem}_{0}}
\newcommand{\bmu}{{\mu}}
\newcommand{\s}{\frac{1}{n}\sum_{i=1}^n}
\DeclareMathOperator*\argmin{argmin}
\newcommand{\bigCI}{\mathrel{\text{\scalebox{1.07}{$\perp\mkern-10mu\perp$}}}}
\newenvironment{myalgo}[3][]{\refstepcounter{myalgocounter}\par\medskip
  \noindent
  \rule{\textwidth}{0.7pt}
	\vskip -0.1cm
  \noindent 
	\noindent \textbf{Algorithm~\themyalgocounter. #1#2} \rmfamily
	\\
	\noindent
  \rule[0.6\baselineskip]{\textwidth}{0.4pt}
	\vskip -.3cm
	\noindent
	}{\medskip 
	}
\newcounter{myalgocounter}[section]\setcounter{myalgocounter}{0}
\renewcommand{\themyalgocounter}{
\arabic{myalgocounter}}
\def\input@path{{/figures/}}
\begin{document}
\title{Goodness-of-fit testing in high-dimensional generalized linear models}
\author{Jana Jankov\'a$^\ast$, Rajen D. Shah$^\ast$, Peter B\"uhlmann$^\dag$ and Richard J. Samworth$^\ast$\\$^\ast$University of Cambridge and $^\dag$ETH Z\"urich}
\date{\today}

\maketitle

\begin{abstract}
We propose a family of tests to assess the goodness-of-fit of a high-dimensional generalized linear model. Our framework is flexible and may be used to construct an omnibus test or directed against testing specific non-linearities and interaction effects, or for testing the significance of groups of variables. The methodology is based on extracting left-over signal in the residuals from an initial fit of a generalized linear model. This can be achieved by predicting this  signal from the residuals using modern flexible regression or machine learning methods such as random forests or boosted trees.
Under the null hypothesis that the generalized linear model is correct, no signal is left in the residuals and our test statistic has a  Gaussian limiting distribution, translating to asymptotic control of type I error.
Under a local alternative, we establish a guarantee on the power of the test. 
We illustrate the effectiveness of the methodology on simulated and real data examples by testing goodness-of-fit in logistic regression models.
 Software implementing the methodology is available in the \textsf{R} package \texttt{GRPtests} \citep{GRPtests}.
\end{abstract}

\section{Introduction}
\label{sec:intro}
In recent years, there has been substantial progress in developing methodology for estimation in generalized linear models 
in high-dimensional settings, where the number of covariates in the model may be much larger than the number of observations.
A standard technique for estimation is the Lasso for generalized linear models \citep{park2007l1}, 
which has a fast implementation in the R package \texttt{glmnet} \citep{glmnet} and is widely used. 
The Lasso enjoys good empirical and theoretical properties for estimation and variable selection, provided that we are searching for a sparse approximation to the regression coefficients in the generalized linear model. 

Once a generalized linear model has been fitted to the high-dimensional data, 
it is important to assess the quality of the fit. 
Literature on testing goodness-of-fit in low-dimensional settings is extensive: we refer to Section \ref{sec:literature} below for an overview. However, the methods typically rely on properties that only hold in low-dimensional settings such as asymptotic linearity and normality of the maximum likelihood estimator, for example. These may fail to hold with an increasing number of covariates in the model; as a consequence it is typically not possible to extend these approaches   in an obvious way to the high-dimensional setting.
This motivates us to develop a new method that may be used for detecting misspecification in the fit of a (potentially high-dimensional) generalized linear model.
\par
To fix ideas, suppose we have data $(x_i, Y_i)_{i=1}^n$ formed of feature vectors $x_i \in \R^p$ and univariate responses $Y_i \in \mathcal{Y} \subseteq \R$. Let us write $X = [\xone,\dots,\xn]^T = [X_1,\dots,X_p] = (X_{ij})_{1\leq i\leq n,1\leq j \leq p}$ for the $n\times p$ design 
matrix and $Y = (Y_1,\dots,Y_n)^T$ for the vector of responses. Consider a generalized linear model \citep{mccullagh} for the data. Specifically, consider the setting where the $Y_i$ are independent conditional on $X$, and the conditional distribution of $Y_i$ only depends on $X$ through the linear combination $x_i^T\beta_0$ for some coefficient vector $\beta_0\in\mathbb R^p$.  In particular, for the conditional expectation, this implies a structure of the form
\[
\mathbb E (Y_i|x_i=x) =: m_0(x) = \mu(x^T\beta_0);
\]
the function $\mu(\cdot)$ is a known inverse link function and $\beta_0$ is unknown. 
Moreover, we assume that $\text{var}(Y_i|x_i=x)=\mu'(x^T\beta_0).$
This structure of the variance arises in generalized linear models derived from exponential families with canonical link functions, such as logistic regression or Poisson log-linear models.

\par 
We will focus on the detection of misspecification in the conditional mean function. 
In a low-dimensional setting, we understand that the model is misspecified in the conditional mean when there does not exist a $\beta_0\in\mathbb R^p$ such that $m_0(x)  = \mu(x^T \beta_0)$. 
In a high-dimensional setting where $p \geq n$, this concept becomes more complicated at first sight; for example, with fixed design points 
$\xone,\dots,\xn$, there always exists 
$\beta_0\in\mathbb R^p$ such that
$m_0(\exi) = \mu(\exi^T\beta_0)$ for all $i=1,\dots,n$,
meaning that the model can never be misspecified. 
However, in a high-dimensional setting, it is impossible to estimate $\beta_0$ consistently without additional structural assumptions. An assumption that is often used, and which we adopt in this paper, is sparsity of the model. Therefore, we address the question of whether a \emph{sparse} model fits well to the observations, or whether a ({sparse})
non-linear model is more appropriate. If we restrict ourselves to sparse models, 
 then misspecification can happen in the same way as in low-dimensional settings, even for fixed design. 
Some of the most important types of misspecification that are of interest in applications are missing nonlinear terms such as quadratic effects or interaction terms. 
Examples of generalized linear models that may be covered by our framework include 
logistic regression, Poisson regression, robust regression (Huber loss, Cauchy loss) and linear regression.

\subsection{Overview of our contributions} \label{sec:overview}
We now briefly outline our strategy for goodness-of-fit testing; a more detailed description is given in Section~\ref{sec:method}. 
Let $\hat{\beta}$ be an estimate of $\beta_0$ derived from a Lasso-penalised generalized linear model (GLM Lasso) fit.
Our starting point is the vector $\hatR$ of Pearson residuals, with $i$-th coordinate 
$$\hatRi:=\frac{Y_i-\mu(\exi^T\hat\beta)}{\sqrt{\mu'(\exi^T\hat\beta)}}, \quad i=1,\ldots,n.$$
%
Now consider taking as a test statistic the scalar product $w^T \hatR$, for some (fixed) unit vector $w \in \R^n$.
 If the generalized linear model were correct, then $w^T \hatR$ would be approximately an average of zero-mean random variables, and under reasonable conditions, should converge to a centred Gaussian random variable. On the other hand, if the model were misspecified, the residuals would contain some signal, and were $w$ to be positively correlated with this signal, the lack of fit should be exposed by the test statistic taking a large value.
 
In the alternative setting, the signal in the residuals may be picked up by more flexible regression methods, such as random forests \citep{breiman2001random} or boosted trees \citep{chen2016xgboost}. However using such flexible regressions to inform the choice of $w$ directly would make $w$ strongly dependent on $\hatR$ even under the null; as such calibration of the resulting test statistic would be problematic. Our approach therefore is to construct $w$ based on an independent auxiliary dataset $(X_A,Y_A)$ (e.g.\ derived through sample splitting) in the following way. We first perform a GLM Lasso fit on the auxiliary dataset to obtain an additional set of residuals. Regressing these residuals back on to the explanatory variables $X_A$ using a flexible regression method, we obtain an estimated regression function $\tilde{f}: \mathbb R^p \to \mathbb R$ that aims to predict the signal in the residuals; we refer to the $n$-fold concatenation of such an $\tilde{f}$ as a \emph{residual prediction function} $\hat f:\mathbb R^{n\times p} \rightarrow \mathbb R^n$. We may then choose $w$ proportional to $\hat{f}(X)$ to give a direction $w$ independent of $Y$. 

One important issue that arises in the high-dimensional setting is that although components of $\hatR$ are close to zero-mean under the null, their bias can drive a substantial shift in the mean of $w^T \hatR$. To prevent this, we replace $w$ with the residuals from a particular weighted (square-root) Lasso regression of $w$ on to $X$. This final step ensures that $w$ is almost orthogonal to the bias in the residuals and as a consequence, the limiting distribution under the null is a centred Gaussian.  A notable feature of our construction is that the asymptotic null distribution is essentially invariant to the residual prediction method used. This can therefore be as flexible as needed to detect the type of mean misspecification we would like to uncover.

We provide a software implementation of our methodology in the \textsf{R} package \texttt{GRPtests} \citep{GRPtests}.


\subsection{Related literature}
\label{sec:literature}
\textbf{High dimensions.}
Our work is related to that of \cite{shah2018} who study goodness-of-fit tests for the linear model. They consider test statistics based on a proxy for the prediction error of a flexible regression method applied to the scaled residuals following a square-root Lasso fit to the data. It is shown that when a Gaussian linear model holds, these residuals depend only weakly on the unknown regression coefficients, motivating calibration of the tests via a parametric bootstrap. As there is no analogue of this result for other generalized linear models, it does not seem possible to extend this approach to our more general setting. Our methodology shares the idea of `predicting' the residuals but, even when we specialize our approach to the Gaussian linear model, differs substantially in the construction of test statistics and the form of calibration.

\par
In recent years, there has been much work on inference and testing in high-dimensional generalized linear models, particularly for the linear model. The work on significance testing includes flexible approaches based on (multiple) sample-splitting \citep{wasserman,pvals, meinshausen2010stability, shah2013variable} which may be combined with other methods.  Another line of work, initiated by \citet{zhang}, proposes a method of de-biasing the Lasso
that can be used for testing significance of variables in the linear regression. \cite{vdgeer13} extend the methodology to generalized linear models; further developments include \citet{stanford1}, \citet{dezeure2017high} and \citet{yu2018confidence}; see also \citet{belloni2014inference}. 
General frameworks for testing  low-dimensional hypotheses about the parameter can be based for example on Neyman orthogonality conditions \citep{vch1,chernozhukov2018double} or on a profile likelihood testing framework \citep{ning2014likelihood}.
In recent work, \cite{zhu2017projection} propose a method for testing more general hypotheses about the parameter vector, 
such as the sparsity level of the model parameter and minimum signal strength. 
\cite{javanmard2017flexible} suggest a procedure to test similar hypotheses about the parameter in linear or logistic regression. 

\textbf{Low dimensions.} 
There are numerous methods for testing goodness-of-fit of a model in low-dimensional settings, especially for the case of logistic regression, which is one of the focuses of this work. The most standard tests are residual deviance and Pearson's chi-squared tests; however, they behave unsatisfactorily if the data contain only a small number of observations for each pattern of covariate values.
There have been a number of strategies to circumvent this difficulty, mainly based on grouping strategies, residual smoothing or 
modifications of Pearson's chi-squared test.

\par
\cite{hosmer1980} proposed two methods of grouping based on ranked estimated logistic probabilities that form groups of equal numbers
of subjects. The disadvantage of these tests (as noted in \cite{le1991goodness}) is that as they are based on a grouping strategy in the space of responses, they lack power to detect departures from the model in regions of the covariate space that yield the same estimated probabilities. For example, a model with a quadratic term may have very different covariate values with the same estimated probability. \cite{tsiatis1980note} circumvents the difficulties faced by Hosmer--Lemeshow tests using a grouping strategy in the covariate space. However, different partitions of the space of covariates may still lead to substantially different conclusions.

\par
\cite{le1991goodness} introduced a test based on residual smoothing using nonparametric kernel methods. Smoothed residuals replace each residual with a weighted average of itself and other residuals that are close in the covariate space. If residuals close to each other are strongly correlated, smoothing does not affect the magnitude of the residuals strongly, while if they are not correlated smoothing will shrink the residuals towards zero.  \cite{su1991lack} proposed a goodness-of-fit test for the generalized linear model based on
a cumulative sum of residuals, which was later adapted by \cite{lin2002model} and a weighted version was proposed in \cite{hosmer2002goodness}.
Another approach based on modifications of Pearson's chi-squared test was studied in \cite{osius1992normal} 
and \cite{farrington1996assessing} who derived a large-sample normal approximation
for Pearson's chi-squared test statistic.

\subsection{Organization and notation}
The rest of the paper is organised as follows. In Section~\ref{sec:method} we motivate and present our goodness-of-fit testing methodology. In Section~\ref{sec:theory}, we study its theoretical properties, providing guarantees on the type I error and power.
In Section~\ref{sec:simulations}, we illustrate the empirical performance of the method on simulated and semi-real genomics data. 
A brief discussion is given in Section~\ref{sec:discuss}.
Proofs are deferred to Section~\ref{sec:proofs} and Appendix~\ref{sec:appendix}.

For a vector $x\in\mathbb R^d$, we let $x_j$ denote its $j$-th entry and write $\|x\|_p:= (\sum_{j=1}^d |x_j|^p)^{1/p}$ for $p \in \mathbb N$, $\|x\|_\infty := \max_{j=1,\dots,d}|x_j|$ and  $\|x\|_0$ for the number of non-zero entries of $x.$
For a matrix $A\in\mathbb R^{n\times p}$, we use the notation $A_{ij}$ or $(A)_{ij}$ for its $(i,j)$-th entry, 
$A_j$ to denote its $j$-th column and we let $\|A\|_\infty:=\max_{i,j}|A_{ij}|$. 
Letting  $G\subseteq \{1,\dots,p\}$, we denote by $A_G$ the matrix containing only columns from $A$ whose indices are in $G$, 
and by $A_{-G}$ the columns of $A$ whose indices are in the complement of $G$.
We use $\Lambda_{\min}(A)$ and $\Lambda_{\max}(A)$ to denote the minimum and maximum eigenvalue of a square matrix $A$.


For sequences of random variables $X_n,Y_n$, we write $X_n=\mathcal O_P(Y_n)$ if $X_n/Y_n$ is bounded in probability and $X_n=o_P(1)$ if $X_n$ converges to zero in probability.  We write $a \lesssim b$ to mean that there exists $C > 0$, which may depend on other quantities designated as constants in our assumptions, such that $a \leq Cb$.  If $a \lesssim b$ and $b \lesssim a$, we write $a \asymp b$.  
Finally, for a function $f:\mathbb R \rightarrow \mathbb R$ and a vector $z=(z_1,\dots,z_n)\in\mathbb R^n$ we will use $f(z)$ to denote the coordinate-wise application of $f$ to $z$, that is $f(z) = (f(z_1),\dots,f(z_n))$.

\section{Methodology: Generalized Residual Prediction tests}
\label{sec:method}

As discussed in Section~\ref{sec:overview}, our Generalized Residual Prediction (GRP) testing methodology relies on an initial fit of the Lasso for generalized linear models, which is defined by
$$\hat\beta:=\argmin_{\beta\in\mathbb R^p} \left\{\s \rho(Y_i,\exi^T\beta) + \lambda \|\beta\|_1\right\}.$$
Here 
$\rho:\mathcal Y \times \mathbb R\rightarrow \mathbb R$ 
is a loss function, usually derived from the negative log-likelihood associated with the model.
Our general framework for goodness-of-fit testing will also assume we have available an auxiliary dataset $(X_A, Y_A) \in \R^{n_A \times p} \times \mathcal{Y}^{n_A}$ independent of $(X, Y)$, sharing the same conditional distribution structure as that of $(X, Y)$. In the rest of the paper, we take $n_A = n$ for simplicity, although this is not needed for our procedures. 
Consider the Pearson-type residuals
\[
\hatR_i = \frac{Y_i - \mu(x_i^T\hat{\beta})}{\sqrt{\mu'(x_i^T\tilde{\beta})}}, \quad i=1,\ldots,n.
\]
Here $\tilde{\beta} \in \R^p$ is an additional estimate of $\beta_0$ that may be computed using the auxiliary dataset, or in certain circumstances may be taken as $\hat{\beta}$ itself: we discuss these two cases in the following sections.
Given the vector $\hatR$ of residuals, 
the basic form of our test statistic is $w^T \hatR$; here $w \in \R^n$ is a direction typically derived using the auxiliary dataset.
We describe in detail the construction of such a $w$ in Section~\ref{subsec:nonlin}, where the goal is general goodness-of-fit testing.

A further modification of the method can allow us to use multiple directions $w$ to test simultaneously for different departures from the null or to aggregate  over different directions derived using flexible regression methods with different tuning parameters.
Given a set $W\subseteq \mathbb R^n$ of direction vectors $w$, our proposed test statistic then takes the form
\[
\sup_{w \in W} w^T \hatR.
\]
\par
We illustrate the use of this more general form of our test statistic for testing the significance  of a group of variables. Such a problem may not immediately seem like goodness-of-fit testing, but is equivalent to testing the adequacy of a model not involving the group of variables in question. We explain how this may be addressed by our framework in Section~\ref{subsec:grouptesting} and describe a wild bootstrap procedure \citep{chernozhukov2013gaussian} to approximate the distribution of the test statistic under the null.

\subsection{Goodness-of-fit testing}
\label{subsec:nonlin}

\par
To motivate our general procedure for goodness-of-fit testing, 
consider the vector $\hatRo$ of Pearson residuals with an oracle variance scaling, whose $i$-th component is given by
$$
\hatRoi 
:= \frac{Y_i -\mu(\exi^T\hat{\beta})}{D_{\beta_0,ii}}, \quad i=1,\dots,n, 
$$
where $D_{\beta_0,ii}^2 := {\mu'(\exi^T\beta_0)}$. 
We may decompose the residuals into noise and estimation error terms by writing
\begin{align} \label{eq:decomp}
\hatRoi &= \varepsilon_i + r_i ,
\end{align}
where $\varepsilon_i:=\bigl\{Y_i  - \mu(\exi^T\beta_0)\bigr\}/D_{\beta_0,ii}$ and $r_i:=\bigl\{\mu(\exi^T\beta_0) - \mu(\exi^T\hat\beta)\bigr\}/D_{\beta_0,ii}$.
If the generalized linear model is correct, then $\mathbb E(\varepsilon_i|\exi) = 0$ {and $\Var(\varepsilon_i | x_i) = 1$}. 
Turning to the remainder term $r_i$, a first-order Taylor expansion of $\mu$ yields the approximation   
$${r_i} \approx D_{\beta_0,ii}\exi^T(\beta_0-\hat\beta).$$
Writing $D_{\beta_0}$ for the diagonal matrix with entries $D_{\beta_0,ii}$ for $i=1,\dots,n$,
 and $\varepsilon:=(\varepsilon_1,\dots,\varepsilon_n)$, we obtain the decomposition
\begin{equation} \label{eq:approx_resid}
\hatRo \approx \varepsilon + D_{\beta_0} X(\beta_0 - \hat{\beta}).
\end{equation}
Consider a unit vector $w \in \R^n$; as discussed in Section~\ref{sec:overview}, this will typically be constructed from an application of a residual prediction method on the auxiliary data.
Our oracle $w^T\hatRo$ 
then satisfies
\begin{equation}
\label{e1}
w^T \hatRo \approx w^T\varepsilon +w^T D_{\beta_0} X(\beta_0 - \hat{\beta}) {=: w^T\varepsilon + \delta}.
\end{equation}
Under suitable conditions on $w$ and on the moments of the errors, the Berry--Esseen theorem should ensure that the pivot term $w^T \varepsilon$ is well approximated by a standardised Gaussian random variable.
To keep the remainder term $\delta$
 in \eqref{e1} under control we can leverage the fact that under the null, we can expect $\|\hat{\beta} - \beta_0\|_1$ to be small. If $w$ satisfies a near-orthogonality condition
 \begin{equation} \label{eq:near_ortho}
 \|X^TD_{\beta_0} w\|_\infty \leq C\sqrt{\log p},
 \end{equation}
for some $C > 0$, then H\"older's inequality will yield $|\delta| \leq C\sqrt{\log p} \|\hat \beta - \beta_0\|_1$, which asymptotically vanishes under suitable conditions on the sparsity of $\beta_0.$

To guarantee the near-orthogonality condition \eqref{eq:near_ortho}, we may use the square-root Lasso \citep{sqrtlasso,sun2012scaled}: for $\lambda_{\mathrm{sq}} > 0$, let
\[
\hat{\beta}_{\text{ora-sq}} := \argmin_{\beta\in\mathbb R^p} \left\{\frac{1}{\sqrt{n}}
\|D_{\beta_0}({\hat f} (X) - X \beta)\|_2 + \lambda_{\mathrm{sq}}\|\beta\|_1\right\}. 
\]
The Karush--Kuhn--Tucker (KKT) conditions for the convex programme imply that the resulting vector of scaled residuals,
\[
w_{\text{ora}} := \frac{D_{\beta_0} ({\hat f} (X) - X \hat\beta_{\text{ora-sq}})}{\|D_{\beta_0}({\hat f} (X) - X \hat\beta_{\mathrm{ora-sq}})\|_2},
\]
satisfies the near-orthogonality property $\|X^T D_{\beta_0} w_{\text{ora}} \|_\infty \leq C\sqrt{\log p}$ when $\lambda_{\mathrm{sq}} = C \sqrt{\log p/n}$. Note that in performing this square-root Lasso regression, we are not assuming that $\hat{f}$ is well-approximated by a sparse linear combination of variables: we are simply exploiting the stationarity properties of the solution to the square-root Lasso optimisation problem\footnote{In principle, there is a possibility that we obtain a degenerate solution with ${\hat f} (X)= X \hat\beta_{\text{ora-sq}}$. However, we can obseve directly whether or not this occurs, and have never seen this happen in any of our numerical experiments.}. 


From the reasoning above, we conclude that, under appropriate conditions, a simple test based on the asymptotic normality of $w_{\text{ora}}^T \hatRo$ will keep the type I error under control. In order to create a version of the test statistic that does not require oracular knowledge of $D_{\beta_0}$, we may replace this quantity with variance estimates based either on $\hat{\beta}$ or on an estimate derived from the auxiliary data; we use the latter approach as this simplifies the analysis. The overall procedure is summarised in Algorithm 1 below.



\par
%

\par

\begin{myalgo}{\bf Goodness-of-fit testing.}{}
\small
\label{rptest}
 \textbf{Input:} 
sample $(X,Y)\in\mathbb R^{n\times p} \times \mathcal Y^n$;
auxiliary sample $(X_A,Y_A)\in\mathbb R^{n\times p} \times \mathcal Y^n; \lambda, \lambda_A, \lambda_{\mathrm{sq}}>0.$

\begin{enumerate}[1:]
\item
\textbf{Estimation:}
Fit a GLM Lasso to $(X,Y)$ and $(X_A,Y_A)$ (with tuning parameters $\lambda, \lambda_A$ respectively) yielding estimators $\hat\beta$ and  $\hat\beta_A$, respectively.
\item
\textbf{Residual prediction:} Compute the residuals 
${Y_A - \bmu(X_A\hat\beta_A)}$ 
and 
fit a flexible regression method of these residuals versus $X_A$ to obtain a prediction function 
${\hat f}:\mathbb R^{n\times p} \rightarrow \mathbb R^n.$
%
%
\item
\textbf{Near orthogonalization:}
Construct the diagonal weight matrix $\hat{D}_A^2:= \text{diag(}\bmu'(X \hat\beta_A))$ and compute an approximate projection of the prediction $\hat D_A {\hat f} (X)$ onto the column space of $\hat D_A X$:
\begin{equation}
\label{sqrt.lasso}
\hat\beta_{\mathrm{sq}}:=
\argmin_{\beta\in\mathbb R^p} \left\{\frac{1}{\sqrt{n}}
\|\hat D_A({\hat f} (X) - X \beta)\|_2 + \lambda_{\mathrm{sq}}\|\beta\|_1\right\}. 
\end{equation}
Define a direction  
\begin{equation} \label{eq:w_hat_def}
\hat w_A:= \frac{\hat D_A ({\hat f} (X) - X \hat\beta_{\mathrm{sq}})}{\|\hat D_A ({\hat f} (X) - X \hat\beta_{\mathrm{sq}})\|_2}.
\end{equation}
\item
\textbf{Test statistic:} Compute  
the residual vector 
$\hatRA := \hat D_A ^{-1}(Y - \bmu(X\hat\beta))$ and let 
 $T:=\hat w_A^T  \hatRA$.
\end{enumerate}
\textbf{Output:}
$p_{\text{value}}= 1 - \Phi(T)$\\
\rule[0.6\baselineskip]{\textwidth}{0.4pt}

\end{myalgo}

%
%

\par

In practice, the auxiliary dataset $(X_A,Y_A)$ would be obtained through sample splitting.  The effect of the randomness induced by such a split can be mitigated using methods designed to aggregate over multiple sample splits, as studied for instance in \citet{pvals}.
\par

%

%



\subsection{Group testing}
\label{subsec:grouptesting}
Our framework of residual prediction tests also encompasses significance testing of groups of regression coefficients in a generalized linear model. Suppose that we wish to test $H_0:\beta_G =0$ for a given group $G\subseteq \{1,\dots,p\}$. We first form the vector $\hatRG$ of residuals based on a GLM Lasso fit of $Y$ on $X_{-G}$. Then, rather than constructing a single direction $w$ using an auxiliary dataset, we can use multiple directions given by the columns of $X_G$. Specifically, we use the test statistic $\max_{j \in G} |\hat w_j^T \hatRG|$ where $\hat w_j$ is given by the scaled residuals of the weighted square-root Lasso regressions of $X_j$ on to $X_{-G}$.

Note that under the null, $X_G$ will be independent of the noise $\varepsilon$ \eqref{eq:decomp}, and so sample splitting is not necessary in this case to mitigate the potentially complicated dependence of the directions and residuals $\hatRG$. The limiting distribution of the test however will not be Gaussian due to the maximisation over multiple directions. Instead, we argue that $\max_{j \in G} |\hat{w}_j^T \hatRG| \approx \max_{j \in G} |\hat{w}_j^T \varepsilon|$ and then use a wild bootstrap procedure to approximate the distribution of this latter quantity.
The overall procedure is summarised in Algorithm 2 below.
 
\begin{myalgo}{Group Test.}{}
\small
\label{grptest}
\noindent
\textbf{Input:} Group $G\subseteq \{1,\dots,p\};$ sample $(X,Y)\in\mathbb R^{n\times p }\times \mathbb R^n;$ 
$B\in \mathbb N$; $\lambda, \lambda_{\text{nw}}>0.$
\begin{enumerate}[1:]
\item
Fit a GLM Lasso to $(X_{-G},Y)$ with a tuning parameter $\lambda$ to obtain an estimator $\hat\beta_{-G}\in\mathbb R^{p-|G|}$. 
Let $\hat D^2 := \text{diag}(\mu'(X_{-G}\hat\beta_{-G})).$ Compute the vector of residuals
$\hatRG := \hat D^{-1}(Y - \bmu(X_{-G}\hat\beta_{-G})).$
\item
For each $j\in G,$ 
 compute the nodewise regression estimator 
$$\hat \gamma_j:=\argmin_{\gamma\in\mathbb R^{p-|G|}}
\frac{1}{\sqrt{n}}\|\hat D(X_j-X_{-G}\gamma)\|_2 
+ 
\lambda_{\text{nw}} \|\gamma\|_1,$$
and let 
$$\hatwj:= \frac{\hat D(X_j-X_{-G}\hat\gamma_j)}{\|\hat D(X_j-X_{-G}\hat\gamma_j)\|_2}.$$

\item
Evaluate the test statistic $T:=\max_{j\in G} |{\hatwj^T \hatRG}|.$ 
\item
For $b=1,\dots,B$ generate independent random variables $e_1^b,\dots,e_n^b\sim \mathcal N(0,1)$ and let 
$$T^b := \max_{j\in G}\abs{\sum_{i=1}^n  \hat w_{j,i}  \hatRGi e_i^b },$$
where $\hat w_{j,i} $ and $\hatRGi$ are the $i$-th entries of $\hat w_j$ and $\hatRG$, respectively.
\item
Calculate the p-value 
\[
p_{\text{value}}:= \frac{1}{B+1} \biggl(1+\sum_{b=1}^B \mathds{1}_{\{T^b \geq T\}}\biggr).
\]
\end{enumerate}
\textbf{Output:} $p_{\text{value}}$ \\
\rule[0.6\baselineskip]{\textwidth}{0.4pt}
\end{myalgo}

%

Our Algorithm \ref{grptest} is similar to the de-biased Lasso for generalized linear models \citep{vdgeer13}.
 The main difference however is that the de-biased Lasso aims to ensure the directions $\hat{w}_j$ are almost orthogonal to $X_{-j}$, whereas we only impose near-orthogonality with respect to $X_{-G}$. Thus for large groups $G$, more of the direction of $X_G$ is preserved in the $\hat{w}_j$, which typically leads to better power of the test.

\section{Theoretical guarantees}
\label{sec:theory}
In this section we provide theoretical guarantees for the tests proposed in Algorithms~\ref{rptest} and \ref{grptest}. 
We consider an asymptotic regime with the sample size $n$ tending to infinity and the number of parameters $p=p_n$ growing as a function of $n$. 

\subsection{Size of the test}
In the following sections, we show that under the null hypothesis, the size of the test is asymptotically correct. We explore goodness-of-fit testing in Section \ref{subsec:rptest} and group testing in Section~\ref{subsec:group.testing}.

\subsubsection{Goodness-of-fit testing}
\label{subsec:rptest}
Here we show that our test statistic has a Gaussian limiting distribution and we establish a bound on the type I error of the test. In this section, we condition on the design and the auxiliary dataset.  Our result makes use of the fact that when the model is well specified, the GLM Lasso performs well in terms of estimation. 
Specifically, under certain conditions, 
it holds with high probability that $\hat\beta\in\Theta(\lambda,\beta_0,X)$,
where $\Theta(\lambda,\beta_0,X)$ is a local neighbourhood of $\beta_0$ defined by
$$
\Theta(\lambda,\beta_0,X):= 
\left\{\vartheta\in\mathbb R^p: \|\vartheta-\beta_0\|_1 \leq s\lambda, \|X(\vartheta-\beta_0)\|_2^2/{n} \leq s\lambda^2 \right\},
$$
with $s:=\|\beta_0\|_0$ as the number of non-zero entries of $\beta_0$; see for example \citet[][Corollary~6.3]{hds}. Sufficient conditions for this to occur include $\lambda \asymp \sqrt{\log p/n}$, $s=o(n/\log p)$ and further conditions on the tail behaviour of the errors $Y_i - \mu(\exi^T\beta_0)$, the design matrix $X$ and the link function, as detailed below.

\begin{cond}
\label{model}
Assume that 
$\mathbb E (Y_i|\exi = x) = \mu(x^T\beta_0)$ and that $\emph{var}(Y_i|\exi=x)=\mu'(x^T\beta_0)$, that the inverse  link function $u\mapsto \mu(u)$ is differentiable, $u\mapsto \mu'(u)$ is Lipschitz with constant $L$, and that $\mu'(u)>0$ for all $u\in\mathbb R$. Suppose moreover that the weights satisfy
$\min_i D_{\beta_0,ii}^{}\geq \domin$ for some constant $\domin>0.$ 
Assume that $\mathbb E\bigl\{|Y_i - \mu(\exi^T\beta_0)|^3/D_{\beta_0,ii}^3\bigm|X\bigr\} \leq C_{\varepsilon}$ for some constant $C_{\varepsilon}>0$, that $\max_{i=1,\dots,n}\|\exi\|_\infty \leq K_X$ for some $K_X\geq 1$ and 
that $12\domin^{-2} LK_X s\lambda_A\leq 1.$
\end{cond}

Condition \ref{model} is satisfied for generalized linear models with canonical links under mild additional conditions.
For example, in the case of logistic regression, the condition on the weights is satisfied if the class probability $\pi_0(x) = \mathbb{P}(Y_i = 1|\exi =x)$ is bounded away from zero and one. Boundedness of the design (along with other conditions, including $12d_{\min}^{-2} LK_X s\lambda_A\leq 1$) guarantees that the weights can be consistently estimated.  
For our result below, it is convenient to introduce the shorthand notation ${Z_A}:= (X, X_A,Y_A)$.

\begin{thm}
\label{glmnonlinH0}
Consider Algorithm \ref{rptest} with tuning parameters $\lambda,\lambda_A,\lambda_{\mathrm{sq}}>0$.  Assume that Condition~\ref{model} is satisfied, that $\hat\beta_A\in \Theta(\lambda,\beta_0,X_A)$ and let
\begin{equation*}
\label{consistency.lasso}
\delta := \mathbb{P}\bigl(\hat\beta \notin \Theta(\lambda,\beta_0,X)\bigm|X\bigr).
\end{equation*}
Then there exists a constant\footnote{Here and below, the constants in the conclusions of our results may depend upon quantities introduced as constants in the relevant conditions for these results.} $C>0$ such that whenever $Z_A$ satisfies $\hat{f}(X) \neq X\hat{\beta}_{\mathrm{sq}}$, we have for any $z\in\mathbb R$ that
\begin{equation}
\label{T1.bound}
 \left|\mathbb{P}\left(T  \leq  z|{Z_A}\right) - \Phi_{}(z)\right|
 \leq
\delta + C\bigl\{\lambda_{\mathrm{sq}} \sqrt{n} s\lambda + \|\hat w_A\|_\infty {s}(\lambda^2 + \lambda_{A}^2) {n}  + K_Xs\lambda_A+ 
 \|\hat w_A\|_\infty \bigr\}, 
\end{equation}
where $\Phi$ denotes the standard normal distribution function.

\end{thm}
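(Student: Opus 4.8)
The plan is to condition throughout on $Z_A=(X,X_A,Y_A)$, so that $\hat\beta_A$, $\hat f$, the weight matrix $\hat D_A$ and hence the direction $\hat w_A$ are all deterministic, leaving only $Y$ (and thus $\hat\beta$ and $\hatRA$) random. Since the auxiliary sample is independent of $(X,Y)$, conditioning on $Z_A$ is equivalent to conditioning on $X$ for any event depending on $(X,Y)$; in particular $\mathbb P(\hat\beta\notin\Theta(\lambda,\beta_0,X)\mid Z_A)=\delta$. First I would split the response around its conditional mean to write $T=\hat w_A^T\tilde\varepsilon+\hat w_A^T\tilde r$, where $\tilde\varepsilon_i:=\hat D_{A,ii}^{-1}\{Y_i-\mu(\exi^T\beta_0)\}$ is a conditionally mean-zero pivot and $\tilde r_i:=\hat D_{A,ii}^{-1}\{\mu(\exi^T\beta_0)-\mu(\exi^T\hat\beta)\}$ is the estimation-error remainder, in the spirit of the oracle decomposition \eqref{eq:decomp}. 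The strategy is then (i) to show the pivot is close to $N(0,1)$, (ii) to bound the remainder uniformly on the good event $\mathcal E:=\{\hat\beta\in\Theta(\lambda,\beta_0,X)\}$, and (iii) to combine these through a sandwiching argument.

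For the pivot, conditionally on $Z_A$ the summands $\hat w_{A,i}\tilde\varepsilon_i$ are independent and mean-zero with total variance $\sigma^2=\sum_i\hat w_{A,i}^2\,D_{\beta_0,ii}^2/\hat D_{A,ii}^2$. Because $\hat D_{A,ii}^2=\mu'(\exi^T\hat\beta_A)$ and $\mu'$ is $L$-Lipschitz, using $\|\hat\beta_A-\beta_0\|_1\lesssim s\lambda_A$ from the neighbourhood hypothesis on $\hat\beta_A$, together with $\min_i D_{\beta_0,ii}\geq\domin$ and the bound $12\domin^{-2}LK_X s\lambda_A\leq1$ from Condition~\ref{model} (which keeps $\hat D_{A,ii}$ bounded away from $0$), gives $|D_{\beta_0,ii}^2/\hat D_{A,ii}^2-1|\lesssim K_X s\lambda_A$ and hence $|\sigma^2-1|\lesssim K_X s\lambda_A$. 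I would then apply the Berry--Esseen theorem to $\hat w_A^T\tilde\varepsilon/\sigma$: the third-moment condition $\mathbb E\{|Y_i-\mu(\exi^T\beta_0)|^3/D_{\beta_0,ii}^3\mid X\}\leq C_\varepsilon$ controls $\sum_i\mathbb E|\hat w_{A,i}\tilde\varepsilon_i|^3\lesssim\|\hat w_A\|_\infty\sum_i\hat w_{A,i}^2=\|\hat w_A\|_\infty$, so the standardised pivot is within $\lesssim\|\hat w_A\|_\infty$ of $\Phi$ in Kolmogorov distance. Re-standardising from $\sigma$ to $1$ costs a further $|\sigma-1|\lesssim K_X s\lambda_A$ via a mean-value bound on $|\Phi(z/\sigma)-\Phi(z)|$. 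These two steps produce the $\|\hat w_A\|_\infty$ and $K_X s\lambda_A$ terms.

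For the remainder, on $\mathcal E$ I would Taylor-expand $\mu(\exi^T\beta_0)-\mu(\exi^T\hat\beta)=\hat D_{A,ii}^2\,\exi^T(\beta_0-\hat\beta)+\tilde R_i$, taking the first-order coefficient at $\exi^T\hat\beta_A$ so that $\hat w_A^T\tilde r$ splits as $(X^T\hat D_A\hat w_A)^T(\beta_0-\hat\beta)$ plus $\sum_i\hat w_{A,i}\hat D_{A,ii}^{-1}\tilde R_i$. The first piece is where the construction pays off: the KKT conditions for the square-root Lasso \eqref{sqrt.lasso}, valid because $\hat f(X)\neq X\hat\beta_{\mathrm{sq}}$, give the near-orthogonality $\|X^T\hat D_A\hat w_A\|_\infty\leq\sqrt n\,\lambda_{\mathrm{sq}}$, so by H\"older and $\|\beta_0-\hat\beta\|_1\leq s\lambda$ this piece is at most $\lambda_{\mathrm{sq}}\sqrt n\,s\lambda$. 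For the second piece, the Lipschitz bound on $\mu'$ yields $|\tilde R_i|\lesssim|\exi^T(\beta_0-\hat\beta)|\bigl(|\exi^T(\beta_0-\hat\beta)|+K_X s\lambda_A\bigr)$, and summing against the prediction-norm control $\|X(\beta_0-\hat\beta)\|_2^2\leq ns\lambda^2$ afforded by $\Theta$, together with the uniform bound $\max_i|\exi^T(\beta_0-\hat\beta_A)|\leq K_X s\lambda_A$ and the lower bound on $\hat D_{A,ii}$, gives $\sum_i\hat D_{A,ii}^{-1}|\tilde R_i|\lesssim ns(\lambda^2+\lambda_A^2)$; pulling out $\|\hat w_A\|_\infty$ contributes the $\|\hat w_A\|_\infty s(\lambda^2+\lambda_A^2)n$ term.

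Finally I would combine the pieces. Writing $\Delta$ for the deterministic remainder bound just obtained, on $\mathcal E$ we have $|\hat w_A^T\tilde r|\leq\Delta$, so $\{\hat w_A^T\tilde\varepsilon\leq z-\Delta\}\cap\mathcal E\subseteq\{T\leq z\}\cap\mathcal E\subseteq\{\hat w_A^T\tilde\varepsilon\leq z+\Delta\}\cup\mathcal E^c$; combining this sandwich with $\mathbb P(\mathcal E^c\mid Z_A)=\delta$, the Kolmogorov bound on the pivot from the second step, and the anti-concentration estimate $|\Phi(z\pm\Delta)-\Phi(z)|\leq\Delta/\sqrt{2\pi}$ (the $\Delta$ here reproducing the first two brace terms) yields exactly the two-sided bound \eqref{T1.bound}. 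I expect the main obstacle to be the remainder analysis of the third step: one must juggle three coefficient vectors $\beta_0,\hat\beta,\hat\beta_A$ and two designs $X$ versus $X_A$, while controlling $\hat\beta$ only in the $X$-prediction norm, $\hat\beta_A$ only in the $X_A$-prediction norm, and both merely in $\ell_1$ across designs. Ensuring that the square-root-Lasso near-orthogonality absorbs the linear term despite the weights being $\hat D_A$ rather than the oracle $D_{\beta_0}$, and that every leftover term collapses precisely into the four summands of \eqref{T1.bound}, is the delicate bookkeeping at the heart of the argument.
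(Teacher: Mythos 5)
Your proposal follows essentially the same route as the paper: there, Theorem~\ref{glmnonlinH0} is deduced from the more general Theorem~\ref{glmnonlin} with the misspecification term set to zero, and the proof of the latter uses exactly your decomposition into a conditionally mean-zero pivot $\hat w_A^T\hat D_A^{-1}\bigl(Y-\mu(X\beta_0)\bigr)$ plus the estimation-error remainder $\hat w_A^T\hat D_A^{-1}\bigl(\mu(X\beta_0)-\mu(X\hat\beta)\bigr)$, a Berry--Esseen bound yielding the $\|\hat w_A\|_\infty$ term, the variance correction $|\sigma-1|\leq\|D_{\beta_0}^2\hat D_A^{-2}-I\|_\infty\lesssim K_Xs\lambda_A$, the square-root-Lasso KKT conditions $\|X^T\hat D_A\hat w_A\|_\infty\leq\sqrt{n}\lambda_{\mathrm{sq}}$ for the linear part of the remainder, and the same $\epsilon$-sandwich with Gaussian anti-concentration and the event $\{\hat\beta\in\Theta(\lambda,\beta_0,X)\}$ costing $\delta$.

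The one step where your bookkeeping does not deliver the stated rate is the second-order remainder. You bound $|\tilde R_i|\lesssim|x_i^T(\beta_0-\hat\beta)|\bigl(|x_i^T(\beta_0-\hat\beta)|+K_Xs\lambda_A\bigr)$ and sum the cross term as $K_Xs\lambda_A\sum_i|x_i^T(\beta_0-\hat\beta)|\leq K_Xs\lambda_A\cdot n\sqrt{s}\lambda$ via Cauchy--Schwarz and the prediction-norm bound; this is of order $nK_Xs^{3/2}\lambda\lambda_A$, which exceeds the target $ns(\lambda^2+\lambda_A^2)$ by a factor of order $K_X\sqrt{s}$ when $\lambda\asymp\lambda_A$ (and the alternative of squaring your uniform bound gives $nK_X^2s^2\lambda_A^2$, worse still). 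The paper avoids any cross term by the pointwise inequality $|a_ib_i|\leq\frac{1}{2}(a_i^2+b_i^2)$ with $a_i=x_i^T(\beta_0-\hat\beta)$ and $b_i=x_i^T(\beta_0-\hat\beta_A)$, then bounds $\sum_ia_i^2\leq ns\lambda^2$ and $\sum_ib_i^2\leq ns\lambda_A^2$ by squared prediction norms. That said, the difficulty you flag in your closing paragraph is genuine: the bound $\sum_ib_i^2\leq ns\lambda_A^2$ requires prediction-norm control of $\hat\beta_A$ along the rows of $X$, while the hypothesis $\hat\beta_A\in\Theta(\lambda,\beta_0,X_A)$ controls it only along $X_A$; the paper passes over this cross-design transfer silently, so your awareness of it is to your credit even though your own workaround loses a factor relative to \eqref{T1.bound}. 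Everything else in your argument matches the paper's proof.
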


For the asymptotically optimal choice of tuning parameters $\lambda\asymp\lambda_{A}\asymp K_X\sqrt{\log p/n}$ and
 $\lambda_{\mathrm{sq}}\asymp \sqrt{\log p/n},$
the bound in Theorem \ref{glmnonlinH0} reduces to
$$
 \left|\mathbb{P}\left(T  \leq  z|{Z_A}\right) - \Phi_{}(z)\right|
 =\mathcal O_{P}\left(
\delta +  \frac{s(K_X\log p+K_X^2\sqrt{\log p})}{\sqrt{n}} + \|\hat w_A\|_\infty {s}K_X^2\log p  
  \right).$$
  We now discuss the terms on the right-hand side of \eqref{T1.bound}. The terms $\lambda_{\mathrm{sq}} \sqrt{n} s\lambda$ and $K_Xs\lambda_A$ arise from bounding the bias term (near-orthogonalization step) and from bounding the weights, respectively.  The presence of the $\|\hat w_{A}\|_\infty$ term in the bound stems from the contribution of each individual component $\hat w_{A,i}$ to the variance of the pivot term and that of the higher-order terms omitted in \eqref{eq:approx_resid}, which create a bias in the distribution of the test statistic.

  To provide some intuition on the size of $\|\hat w_{A}\|_\infty$, recall that $\hat{w}_A$ is a vector in $\mathbb{R}^n$ with $\|\hat{w}_A\|_2 = 1$, so we may hope for $\|\hat w_{A}\|_\infty$ to be small; in fact, it can be shown in certain settings, and under additional technical conditions, that $\|\hat w_{A}\|_\infty = \mathcal O_P(\log n/\sqrt{n})$. 
  We also remark that we observe $\hat{w}_A$, and if the size of its $\ell_\infty$-norm is a concern, then we can modify the square-root Lasso objective to control it explicitly. Indeed, consider setting
\[
(\tilde{\beta}_{\mathrm{sq}}, \tilde{\eta}_{\mathrm{sq}}):=\argmin_{(\beta, \eta) \in \R^p \times \R^n} \left\{\frac{1}{\sqrt{n}} \|\hat{D}_A(\hat{f}(X) - X\beta) - \eta\|_2 + \lambda_{\mathrm{sq}}\|\beta\|_1 + \lambda_\eta\|\eta\|_1 \right\}
\]
and let
\[
\tilde{w}_A := \frac{\hat{D}_A(\hat{f}(X) - X\tilde{\beta}_{\mathrm{sq}}) - \tilde{\eta}_{\mathrm{sq}}}{\|\hat{D}_A(\hat{f}(X) - X\tilde{\beta}_{\mathrm{sq}}) - \tilde{\eta}_{\mathrm{sq}}\|_2}.
\]
Then the KKT conditions of the optimisation problem imply in particular both that a near-orthogonality condition similar to \eqref{eq:near_ortho} is satisfied for suitable $\lambda_{\mathrm{sq}}$, and also that $\|\tilde{w}_A\|_\infty \leq \sqrt{n}\lambda_\eta$.  Our empirical results in Section \ref{sec:simulations} however suggest that in practice  $\|\hat w_{A}\|_\infty$ typically satisfies the necessary constraint and therefore we propose to use the simpler standard square-root Lasso without the above modifications.

\subsubsection{Group testing}
\label{subsec:group.testing}
In this section, we derive theoretical properties for the group testing procedure proposed in Algorithm \ref{grptest}.
Since we do not use sample splitting, we cannot directly apply the arguments of Theorem~\ref{glmnonlinH0},
as the direction $\hat w_j$ 
depends on $\hat\beta_{-G}$ via the weights $\hat D=D_{\hat\beta_{-G}}$.
In order to understand this dependence, here we consider the setting of random bounded design and assume the response--covariate pairs $(Y_i, x_i)$ are all independent and identically distributed. 


We aim to use the multiplier bootstrap procedure \citep{chernozhukov2013gaussian} to estimate the distribution of the test statistic $\max_{j\in G}|T_j|$ as described in Algorithm \ref{grptest}, but we first summarize  a preliminary result which shows that, under appropriate conditions, $T_j$ can be asymptotically approximated by the zero-mean average $\wj^T \varepsilon$. Here we 
define
$w_j := D_{\beta_0} (X_j - X_{-G}\gamma_{0,j}) /(\sqrt{n} \tau_j)$, where 
$$\tau_j^2:= \frac{1}{n}\mathbb E\|D_{\beta_0} (X_j - X_{-G}\gamma_{0,j})\|_2^2,$$
 and $\gamma_{0,j}$ is the population version of $\hat\gamma_j$ from Algorithm \ref{grptest}; i.e.
$$\gamma_{0,j}:= \argmin_{\gamma\in\mathbb R^{p-|G|}} \frac{1}{n}\mathbb E \|D_{\beta_0} (X_j - X_{-G}\gamma_{})\|_2^2.$$
 Recall that
$$\varepsilon= D_{\beta_0}^{-1}\bigl(Y-\mu(X\beta_0)\bigr).$$
In order to guarantee consistency of $\hat\gamma_j$ in Algorithm \ref{grptest}, we will introduce the additional requirement that $\gamma_{0,j}$ is sparse.
Denote by $\beta_{0,-G}\in \mathbb R^{p-|G|}$ the subset of components of $\beta_0$ corresponding to indices in $G^c.$ 
We also define
$$\Theta_{-G}(\lambda,\beta_0):=
\left\{ \vartheta\in \mathbb R^{p-|G|}: \|\vartheta - \beta_{0,-G}\|_1 \leq s\lambda, 
\|X_{-G}(\vartheta - \beta_{0,-G})\|_2^2 / n \leq s\lambda^2 \right\}.$$

\begin{cond}
\label{grptest.cond}
\begin{itemize}
\item[]
\item[(i)]
Let $\eta_i := Y_i - \mu(\exi^T\beta_{0})$ and assume that there exist constants $c_1,c_2,c_3>0$ such that  
$$\mathbb E( e^{\eta_i^2/c_{1}^2}|X)\leq c_2\;\; \text{ and }\;\;
\mathbb E(\eta_i^2|X) \geq c_3,$$ 
for all $i=1,\dots,n$.
\item[(ii)]
  There exists $K\geq 1$ such that $\|X\|_\infty \leq K$ and $\max_{j\in G}\|X_{-G}\gamma_{0,j} \|_\infty \leq K$.
\item[(iii)]
For some $\delta>0$ and all $\beta\in\mathbb R^p $ satisfying $\|\beta-\beta_0\|_1\leq \delta$ it holds that $c_0\leq \mu'(x^T\beta) \leq C_0$ for some constants $c_0,C_0>0$ and all $x \in \mathbb{R}^p$ with $\|x\|_\infty \leq K$.
\item[(iv)]
Denoting $\Sigma_{0} := \mathbb E X_{}^T D_{\beta_0}^2 X_{} /n $, we have $1/\Lambda_{\min}(\Sigma_{0})\leq C_{e}$
and $\|\Sigma_{0}\|_\infty \leq C_{e}$ for some constant $C_e>0.$
\item[(v)]
We have  $\max_{j\in G}\|\gamma_{0,j}\|_0\leq s$, $\|\beta_0\|_0\leq s$ and there exists a sequence $(a_n)$ with $a_n \rightarrow 0$ and $K^3 {s \log p}/{\sqrt{n}} \leq a_n$.  
\end{itemize}
\end{cond}

\begin{prop}\label{grouptest.prop} 
Assume that Conditions \ref{model} and \ref{grptest.cond} are satisfied and assume that $\hat\beta_{-G}$ satisfies 
\begin{equation}
\label{consistency.lasso.group}
 \mathbb{P}\bigl(\hat\beta_{-G} \in \Theta_{-G}(\lambda,\beta_0)\bigr)\leq 1/p.
\end{equation}
Consider  $\hatwj,j\in G$ (assumed to be non-degenerate) as defined in Algorithm \ref{grptest} 
 with tuning parameters $\lambda\asymp \sqrt{\log p/n}$ and $ \lambda_{\emph{nw}} \asymp K\sqrt{\log p/n}$.
Assume that the null hypothesis  $H_0:\beta_{0,G} = 0$ holds. 
Then there exists a constant $C>0$ such that with probability at least $1 - 3/p$, we have 
$$\max_{j \in G} |T_j - \wj^T \varepsilon| \leq C K^3 \frac{s \log p}{\sqrt{n}}.$$
\end{prop}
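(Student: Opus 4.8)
The plan is to isolate, for each $j\in G$, the stochastic pivot $\wj^T\varepsilon$ from the estimation-induced perturbations and to control the latter uniformly over $j$. Since $\beta_{0,G}=0$ under $H_0$, we have $\mu(X\beta_0)=\mu(X_{-G}\beta_{0,-G})$, so writing $\eta:=Y-\mu(X\beta_0)=D_{\beta_0}\varepsilon$ and decomposing $Y-\mu(X_{-G}\hat\beta_{-G})=\eta+b$ with $b:=\mu(X_{-G}\beta_{0,-G})-\mu(X_{-G}\hat\beta_{-G})$ gives $\hatRG=\hat D^{-1}(\eta+b)$ and hence
\[
T_j-\wj^T\varepsilon=\underbrace{\bigl(\hat D^{-1}D_{\beta_0}\hatwj-\wj\bigr)^T\varepsilon}_{\text{(noise)}}+\underbrace{\hatwj^T\hat D^{-1}b}_{\text{(bias)}}.
\]
Recalling $\hatwj=\hat D\hat\xi_j/\|\hat D\hat\xi_j\|_2$ with $\hat\xi_j:=X_j-X_{-G}\hat\gamma_j$ and $\wj=D_{\beta_0}\xi_j/(\sqrt n\,\tau_j)$ with $\xi_j:=X_j-X_{-G}\gamma_{0,j}$, and using $D_{\beta_0}\varepsilon=\eta$, the noise term simplifies to
\[
\frac{\xi_j^T\eta}{\sqrt n\,\tau_j}\Bigl(\frac{\sqrt n\,\tau_j}{\|\hat D\hat\xi_j\|_2}-1\Bigr)-\frac{(\hat\gamma_j-\gamma_{0,j})^T X_{-G}^T\eta}{\|\hat D\hat\xi_j\|_2}.
\]
I would work throughout on the intersection of the event of \eqref{consistency.lasso.group} with two concentration events (each of probability at least $1-1/p$), matching the $1-3/p$ in the statement.

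The key observation for the noise term is that the only stochastic inner products appearing, namely $\xi_j^T\eta$ and $X_{-G}^T\eta$, are taken along $X$-measurable directions, so conditionally on $X$ they are sums of independent zero-mean terms and no sample splitting is needed to decouple them from $\eta$. For the first piece, Condition~\ref{grptest.cond}(i),(iii) give that $\xi_j^T\eta/(\sqrt n\,\tau_j)$ is conditionally sub-Gaussian with $O(1)$ parameter (using that $\tau_j$ is bounded above and below under Condition~\ref{grptest.cond}(iv)), so a maximal inequality yields $\max_{j\in G}|\xi_j^T\eta|/(\sqrt n\,\tau_j)\lesssim\sqrt{\log p}$; the multiplicative factor is then controlled by showing $\|\hat D\hat\xi_j\|_2^2/n$ is close to $\tau_j^2$ uniformly in $j$, using nodewise consistency $\|X_{-G}(\hat\gamma_j-\gamma_{0,j})\|_2/\sqrt n\lesssim\sqrt s\,\lambda_{\text{nw}}$, weight consistency $\hat D\approx D_{\beta_0}$, and concentration of $\frac1n\|D_{\beta_0}\xi_j\|_2^2$ about $\tau_j^2$. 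For the second piece, Hölder bounds it by $\|\hat\gamma_j-\gamma_{0,j}\|_1\,\|X_{-G}^T\eta\|_\infty/\|\hat D\hat\xi_j\|_2$; since each column of $X_{-G}$ is $X$-measurable and bounded by $K$, a sub-Gaussian maximal inequality gives $\|X_{-G}^T\eta\|_\infty\lesssim K\sqrt{n\log p}$, and with $\|\hat\gamma_j-\gamma_{0,j}\|_1\lesssim s\lambda_{\text{nw}}\asymp Ks\sqrt{\log p/n}$ and $\|\hat D\hat\xi_j\|_2\asymp\sqrt n$ this contributes $\lesssim K^2 s\log p/\sqrt n$.

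For the bias term, a first-order Taylor expansion gives $b=-\tilde D^2 X_{-G}(\hat\beta_{-G}-\beta_{0,-G})$ with $\tilde D_{ii}^2=\mu'(t_i)$ at intermediate points $t_i$, whence $\hatwj^T\hat D^{-1}b=-\hat\xi_j^T\tilde D^2 X_{-G}(\hat\beta_{-G}-\beta_{0,-G})/\|\hat D\hat\xi_j\|_2$. Writing $\tilde D^2=\hat D^2+(\tilde D^2-\hat D^2)$ and invoking the exact KKT near-orthogonality of the nodewise square-root Lasso (the analogue of \eqref{eq:near_ortho}), $\|X_{-G}^T\hat D^2\hat\xi_j\|_\infty=\|\hat D\hat\xi_j\|_2\,\|X_{-G}^T\hat D\hatwj\|_\infty\leq\|\hat D\hat\xi_j\|_2\sqrt n\,\lambda_{\text{nw}}$, the leading contribution is at most $\sqrt n\,\lambda_{\text{nw}}\|\hat\beta_{-G}-\beta_{0,-G}\|_1\lesssim\sqrt n\,\lambda_{\text{nw}}\,s\lambda\asymp Ks\log p/\sqrt n$. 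The residual weight-mismatch contribution from $(\tilde D^2-\hat D^2)$ is of smaller order once one uses Lipschitzness of $\mu'$ (Condition~\ref{model}), boundedness of the design and of $X_{-G}\gamma_{0,j}$ (Condition~\ref{grptest.cond}(ii)), and the $\Theta_{-G}(\lambda,\beta_0)$ bounds on $\hat\beta_{-G}$; this is where the remaining powers of $K$ in the final rate $K^3 s\log p/\sqrt n$ are absorbed.

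\textbf{Main obstacle.} The hard part will be establishing, uniformly over the $|G|\leq p$ regressions, the consistency of the \emph{weighted} nodewise square-root Lasso $\hat\gamma_j$ together with the normalization $\|\hat D\hat\xi_j\|_2\asymp\sqrt n\,\tau_j$, because the weights $\hat D=D_{\hat\beta_{-G}}$ are random and depend on $Y$ through $\hat\beta_{-G}$. I would establish this by (i) using the population optimality of $\gamma_{0,j}$ to show the weighted score $\frac1n X_{-G}^T D_{\beta_0}^2\xi_j$ concentrates near zero, so that $\lambda_{\text{nw}}\asymp K\sqrt{\log p/n}$ dominates it uniformly in $j$; (ii) transferring a restricted-eigenvalue/compatibility condition from the population matrix $\Sigma_0$ (Condition~\ref{grptest.cond}(iv)) to the empirical weighted Gram matrix $\frac1n X_{-G}^T\hat D^2 X_{-G}$, via the weight bound $\max_i|\hat D_{ii}-D_{\beta_0,ii}|\lesssim LK\|\hat\beta_{-G}-\beta_{0,-G}\|_1$ and concentration of $\frac1n X_{-G}^T D_{\beta_0}^2 X_{-G}$ about $\Sigma_0$; and (iii) feeding these into a standard square-root Lasso oracle inequality followed by a union bound over $j\in G$. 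Verifying that the accumulated weight- and Taylor-remainder errors are genuinely lower order and close at the rate $K^3 s\log p/\sqrt n$ is delicate but routine given Condition~\ref{grptest.cond}(v).
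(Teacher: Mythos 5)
Your proposal is correct and takes essentially the same route as the paper: the identical decomposition of $T_j - w_j^T\varepsilon$ into a noise term (the paper's $\mathrm{rem}_{1,j}$) and a bias term ($\mathrm{rem}_{2,j}$), the same KKT/H\"older bound and Taylor--Lipschitz remainder for the bias, the same event structure yielding $1-3/p$, and the same reliance on uniform consistency of the weighted nodewise square-root Lasso, which the paper likewise obtains by adapting Theorem~3.2 of \cite{vdgeer13} rather than from scratch. Your splitting of the noise term into a ratio correction on $\xi_j^T\eta$ plus a H\"older term coincides, after the paper's own triangle-inequality step bounding $\|\hat\Gamma_j/\hat\tau_j - \Gamma_{0,j}/\tau_j\|_1$, with the paper's argument, and the minor factor-of-$K$ slippages in your intermediate rates (e.g.\ the nodewise $\ell_1$ rate is $K^2 s\sqrt{\log p/n}$, and $\xi_j^T\eta/(\sqrt{n}\tau_j)$ is sub-Gaussian with parameter $O(K)$, not $O(1)$) are absorbed by the final $K^3 s\log p/\sqrt{n}$ target.
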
 

Using Proposition \ref{grouptest.prop} and the results of \cite{chernozhukov2013gaussian}, we can show that the quantiles of
$\max_{j\in G} |T_j| $ can be approximated by the quantiles of 
$T^b:= \max_{j\in G}|\sum_{i=1}^n \hat w_{j,i}  \hatRGi e_i^b|$ where $e_1^b,\dots,e_n^b$ are independent $ \mathcal N(0,1)$ random variables.
We only need to guarantee that $T_j$ is well-approximated by $w_j^T\varepsilon$ and we pay a price of $\log |G|$ for testing $|G|$ hypothesis simultaneously, where $|G|$ denotes the cardinality of $G$.
 
Define the $\alpha$-quantile of $T^b$ conditional on $(\exi,Y_i)_{i=1}^n$ by
$$c_{T^b}(\alpha) := \inf\{t\in \mathbb R: \mathbb{P}_e(T^b \leq t) \geq \alpha\},$$
where $\mathbb{P}_e$ is the probability measure induced by the multiplier variables $(e_i^b)_{i=1}^n$ holding $(\exi,Y_i)_{i=1}^n$ fixed.

\begin{thm}
\label{mbcor}
Assume the conditions of Proposition \ref{grouptest.prop} 
and that there exist constants $C_2,c_2>0$ such that
\begin{equation}
\label{raten} 
\max\left\{K^3 \frac{s \log p}{\sqrt{n}} \log (2|G|) + 4/p,\;\; K^{4}\log(2|G|n)^7 /n 
\right\} \leq C_2 n^{-c_2}.
\end{equation}
Then there exist constants $c,C>0$ 
such that
$$\sup_{\alpha\in(0,1)} \biggl|\mathbb{P}\biggl(\max_{j\in G}|T_j| < c_{T_b} (\alpha) \biggr) - \alpha\biggr|\leq C n^{-c}.$$
\end{thm}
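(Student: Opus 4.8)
The plan is to cast the statement as an instance of the Gaussian approximation and multiplier bootstrap validity results of \cite{chernozhukov2013gaussian}, applied conditionally on the design to the $|G|$-dimensional random vector $S_n := (\wj^T\varepsilon)_{j\in G} = \sum_{i=1}^n \xi_i$, where $\xi_i := (w_{j,i}\varepsilon_i)_{j\in G}$. Conditionally on $X$ the summands $\xi_i$ are independent and mean-zero, with $\mathbb{E}(\varepsilon_i^2\mid X)=1$, so the relevant covariance is $\Sigma := (\wj^T w_k)_{j,k\in G} = \sum_{i=1}^n \mathbb{E}(\xi_i\xi_i^T\mid X)$. The three objects to compare are the test statistic $\max_{j\in G}|T_j|$, the oracle maximum $\max_{j\in G}|\wj^T\varepsilon|$, and a Gaussian surrogate $\max_{j\in G}|Z_j|$ with $Z\sim \mathcal{N}(0,\Sigma)$; the bootstrap quantile $c_{T^b}(\alpha)$ will ultimately be matched to the $\alpha$-quantile of $\max_{j\in G}|Z_j|$.

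First I would reduce the test statistic to the oracle statistic. Proposition \ref{grouptest.prop} gives, on an event of probability at least $1-3/p$, the deterministic bound $\max_{j\in G}|T_j - \wj^T\varepsilon| \leq r_n$ with $r_n := C K^3 s\log p/\sqrt{n}$; hence $\bigl|\max_j|T_j| - \max_j|\wj^T\varepsilon|\bigr| \leq r_n$ on this event, so that for every $t$ we have $\mathbb{P}(\max_j|T_j|<t) \le \mathbb{P}(\max_j|\wj^T\varepsilon| < t + r_n) + 3/p$ and the analogous reverse inequality. The shift $r_n$ will be absorbed later using anti-concentration. Next I would apply the high-dimensional central limit theorem of \cite{chernozhukov2013gaussian} to $S_n$ (treating the absolute-value maximum as a maximum over $2|G|$ linear statistics): under the sub-exponential tail bound on $\eta_i$ in Condition \ref{grptest.cond}(i), the boundedness assumptions (ii)--(iii), and the rate $K^4\log(2|G|n)^7/n \le C_2 n^{-c_2}$ from \eqref{raten}, this yields $\sup_{t}\bigl|\mathbb{P}(\max_j|\wj^T\varepsilon|\le t) - \mathbb{P}(\max_j|Z_j|\le t)\bigr| \le C n^{-c}$.

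The core step is verifying consistency of the multiplier bootstrap. Conditionally on the data, $\sum_i \hat w_{j,i}\hatRGi e_i^b$ is centred Gaussian with covariance $\hat\Sigma := \bigl(\sum_i \hat w_{j,i}\hat w_{k,i}\hatRGi^2\bigr)_{j,k\in G}$, so $T^b$ has the law of $\max_{j}|\tilde Z_j|$ for $\tilde Z \sim \mathcal{N}(0,\hat\Sigma)$. By the Gaussian comparison inequality and the anti-concentration (Nazarov) bound in \cite{chernozhukov2013gaussian}, it suffices to show that $\|\hat\Sigma - \Sigma\|_\infty$ is small with high probability, after which the conditional quantiles of $T^b$ lie within $Cn^{-c}$ of the quantiles of $\max_j|Z_j|$. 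I would bound $\|\hat\Sigma-\Sigma\|_\infty$ by a triangle-inequality decomposition into: (a) the error from replacing $\hat w_j$ by $w_j$, controlled through consistency of the nodewise square-root Lasso $\hat\gamma_j$ and of the weights $\hat D = D_{\hat\beta_{-G}}$ under $\lambda_{\mathrm{nw}}\asymp K\sqrt{\log p/n}$ and \eqref{consistency.lasso.group}; (b) the error from replacing $\hatRGi^2$ by $\varepsilon_i^2$, controlled via consistency of $\hat\beta_{-G}$ together with the Lipschitz and boundedness conditions of Condition \ref{grptest.cond}(iii); and (c) the statistical fluctuation $\sum_i w_{j,i}w_{k,i}(\varepsilon_i^2-1)$, handled by a Bernstein-type concentration bound uniform over the $|G|^2$ index pairs using the sub-exponential tails.

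Finally I would chain the three approximations together. Writing $q(\alpha)$ for the $\alpha$-quantile of $\max_j|Z_j|$, bootstrap consistency gives $c_{T^b}(\alpha) = q(\alpha) + \mathcal{O}_P(n^{-c})$ uniformly in $\alpha$; substituting into $\mathbb{P}(\max_j|T_j| < c_{T^b}(\alpha))$ and invoking the previous two steps yields $\mathbb{P}(\max_j|Z_j| < q(\alpha)) + \mathcal{O}(n^{-c}) = \alpha + \mathcal{O}(n^{-c})$, where the Gaussian anti-concentration bound $\sup_z\mathbb{P}\bigl(\bigl|\max_j|Z_j|-z\bigr|\le \epsilon\bigr)\lesssim \epsilon\sqrt{\log|G|}$ converts every additive shift, namely the $r_n$ of the first step and each approximation error, into an $\mathcal{O}(n^{-c})$ term; this is exactly where the first term of \eqref{raten}, $K^3 (s\log p/\sqrt{n})\log(2|G|)$, enters. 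The main obstacle is items (a)--(c): obtaining a high-probability bound on $\|\hat\Sigma-\Sigma\|_\infty$ that remains $o(1/\operatorname{polylog}|G|)$ after simultaneously absorbing the nodewise-regression error, the residual-estimation error under the null, and the variance fluctuation, since the bootstrap guarantee degrades with powers of $\log|G|$ and the covariance entries couple all $|G|$ coordinates.
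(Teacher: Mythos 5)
Your proposal is correct in substance but takes a genuinely different, ``unrolled'' route compared with the paper. The paper does not separately invoke the high-dimensional CLT, the Gaussian comparison inequality and Nazarov's anti-concentration: it instead restates Corollary~3.1 of \cite{chernozhukov2013gaussian} as Theorem~\ref{corMB}, which packages all three, and then only has to verify the moment conditions \eqref{MB1}--\eqref{MB2} (with $B_n \asymp K$, using the deterministic bound $\max_{i,j}|w_{j,i}| = \mathcal{O}(K/\sqrt{n})$, and the same index-doubling trick over $H = G \cup \{j + |G| : j \in G\}$ that you use for the absolute values) together with the approximation condition \eqref{TandW}. The consequence is that the paper never needs your item (c) at all --- the fluctuation $\sum_i w_{j,i} w_{k,i}(\varepsilon_i^2 - 1)$ is absorbed internally by Corollary~3.1 under \eqref{MB2} --- and never needs the full matrix bound $\|\hat\Sigma - \Sigma\|_\infty$ over $|G|^2$ pairs. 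Instead it controls only the diagonal-type quantity $\Delta_2 = \max_{j} \frac{1}{n}\sum_{i=1}^n (x_{ij} - \hat{x}_{ij})^2$ (Lemma~\ref{MBdelta}, whose proof contains exactly your items (a) and (b)), and converts it into the required bound on $|W - W_0|$ via a Gaussian maximal inequality plus Borell's inequality, conditionally on the data; the $T$-side of \eqref{TandW} comes from Proposition~\ref{grouptest.prop} exactly as in your first step. Your route buys transparency --- it makes explicit where each term of \eqref{raten} enters and would extend to bootstrap schemes not covered by Corollary~3.1 --- at the cost of re-deriving machinery the citation supplies for free, and your off-diagonal control does reduce to the diagonal bounds by Cauchy--Schwarz, so (a)--(c) are all tractable under \eqref{raten}.

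One point in your sketch needs repair rather than just elaboration: you condition on the design and take $\Sigma = (\wj^T w_k)_{j,k \in G}$, whose diagonal entries are $\|\wj\|_2^2$. These are \emph{random} and only satisfy $\mathbb{E}\|\wj\|_2^2 = 1$; both the conditional Gaussian approximation and the anti-concentration bound $\sup_z \mathbb{P}(|\max_j |Z_j| - z| \leq \epsilon) \lesssim \epsilon \sqrt{\log |G|}$ require the variances to be bounded away from zero, so you would additionally need a high-probability event on which $\min_{j \in G} \|\wj\|_2^2$ is bounded below (e.g.\ by concentration of $\frac{1}{n}\|D_{\beta_0}(X_j - X_{-G}\gamma_{0,j})\|_2^2$ around $\tau_j^2 \geq 1/C_e$, uniformly over $j \in G$). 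The paper sidesteps this entirely by applying Theorem~\ref{corMB} \emph{unconditionally}: since the pairs $(Y_i, x_i)$ are i.i.d.\ in this section and $w_{j,i}$ depends only on row $i$ of $X$, the summands $x_{ij} = \sqrt{n}\, w_{j,i}\varepsilon_i$ are i.i.d.\ across $i$, and \eqref{MB1} holds exactly because $c \leq \mathbb{E}(\varepsilon_i^2 \mid X) \leq C$ and $\sum_i \mathbb{E} w_{j,i}^2 = 1$. This is a fixable gap, not a wrong idea, but your conditional formulation cannot proceed without it.
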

Theorem \ref{mbcor} shows that if the generalized linear model is correct and the null hypothesis $\beta_{0,G} = 0$ holds, 
then Algorithm \ref{grptest} produces an asymptotically valid p-value for testing this hypothesis.

\subsection{Power analysis for goodness-of-fit testing}
\label{sec:power}

The choice of $w$ as postulated in Theorem \ref{glmnonlinH0} guarantees that the type I error for goodness-of-fit testing
stays under control. We now provide guarantees on the local power of the test.
To this end, let us suppose that the true model has conditional expectation function $m_0(x) = \mu\bigl(x^T\beta_0 + g_0(x)\bigr)$. Here $g_0$ represents a small nonlinear perturbation of the linear predictor $x^T\beta_0$. Our aim here is to understand how this propogates through to the distribution of our test statistic. We will suppose that the perturbation is small enough that GLM Lasso estimates lie with high probability within local neighbourhoods of $\beta_0$.
Let us first provide some intuition on the expected value of the test statistic under model misspecification. Writing $f_0(x) = x^T\beta_0 + g_0(x)$, the expectation of the theoretical residuals $\varepsilon = D^{-1}_{\beta_0}\bigl(Y - \mu(X\beta_0)\bigr)$ is given by 
$$w_0:=\mathbb E \varepsilon = \mathbb E D^{-1}_{\beta_0}\bigl(\mu(f_0(X)) - \mu(X\beta_0)\bigr).$$
As argued in Section \ref{sec:method}, the oracular test statistic $w^TR_{\text{ora}}$ can be approximated by the scalar product $w^T \varepsilon$.  Therefore, in order to obtain good power properties, we should seek to construct a direction $w$ so as to maximize $\mathbb Ew^T\varepsilon$. The oracular choice $w:= w_0 / \|w_0\|_2$ yields by a Taylor expansion the approximation 
$$\mathbb Ew^T\varepsilon = \|w_0\|_2 \approx \|D_{\beta_0}(f_0(X) - X\beta_0)\|_2=\|D_{\beta_0}g_0(X)\|_2.$$
We therefore see that the test statistic behaves in expectation as a weighted $\ell_2$-norm of the nonlinear term $g_0(X)$.
We now provide a theoretical justification which can be used for local asymptotic guarantees on the power of our method. 
We introduce the following conditions which are modifications of Condition~\ref{model} to account for the case when the model is misspecified. 

\begin{cond}
\label{model2}
Assume that
$\mathbb E (Y_i|\exi = x) = \mu(f_0(x))$, the inverse link function $u\mapsto \mu(u)$ is differentiable, $u\mapsto \mu'(u)$ is Lipschitz with constant $L$, and $\mu'(u)>0$ for all $u\in\mathbb R$. Suppose moreover that the weights satisfy
$\min_i D_{\beta_0,ii}^{} \geq \domin$ for some constant $\domin > 0$.
Assume that
$\mathbb E \bigl(|Y_i - \mu(f_0(\exi))|^3/D_{Y,ii}^3\bigm|X_i\bigr) \leq C_{\varepsilon}$ for a constant $C_{\varepsilon} > 0$, where
we denote $D_Y^2 := \emph{cov}(Y|X).$
Let $\max_{i=1,\dots,n}\|\exi\|_\infty \leq K_X$ for some $K_X \geq 1$
and assume that $12{\domin^{-2}}{L K_X s\lambda }\leq 1$ 
and $|D_{Y,ii}^2  D_{\beta_0,ii}^{-2} - 1| \leq 2\domin^{-2} L K_X s\lambda$.
\end{cond}

\begin{thm}
\label{glmnonlin}
Consider Algorithm \ref{rptest} with tuning parameters $\lambda,\lambda_A,\lambda_{\mathrm{sq}}$. 
Assume Condition \ref{model2}, that $\hat\beta_A\in \Theta(\lambda,\beta_0,X_A)$ and let
\begin{equation}
\label{consistency.lasso2}
\delta := \mathbb{P}(\hat\beta \notin \Theta(\lambda,\beta_0,X)|X).
\end{equation}
Then there exists a constant $C>0$ such that, whenever $Z_A$ is such that $\hat{f}(X) \neq X\hat{\beta}_{\mathrm{sq}}$, we have for any $z\in\mathbb R$ that
\begin{equation}
\label{T2.bound}
\left|\mathbb{P}\left(T  - \del < z|{Z_A}\right) - \Phi(z)\right| 
\leq
\delta + C\bigl\{\lambda_{\mathrm{sq}} \sqrt{n} s\lambda + \|\hat w_A\|_\infty {s}(\lambda^2 + \lambda_{A}^2) {n}  + K_Xs\lambda_A+ 
 \|\hat w_A\|_\infty \bigr\}
,
\end{equation}
 where
$$\del := {\hat w_A^T \bigl\{\mu\bigl(f_0(X)\bigr) - \mu(X\beta_0)\bigr\}}.$$
\end{thm}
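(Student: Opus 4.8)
The plan is to treat Theorem~\ref{glmnonlin} as the misspecified counterpart of Theorem~\ref{glmnonlinH0}: since the conclusion~\eqref{T2.bound} carries exactly the same error bound as~\eqref{T1.bound}, the aim is to reproduce the null-case argument after first peeling off the new signal contribution $\del$. Conditioning throughout on $Z_A=(X,X_A,Y_A)$, so that $\hat w_A$ and $\hat D_A$ are fixed, I would start from $T=\hat w_A^T\hat D_A^{-1}\bigl(Y-\mu(X\hat\beta)\bigr)$ and insert $\mu(f_0(X))$ and $\mu(X\beta_0)$ to obtain the three-way decomposition
\begin{equation*}
T = \hat w_A^T\hat D_A^{-1}\bigl\{Y-\mu(f_0(X))\bigr\} + \hat w_A^T\hat D_A^{-1}\bigl\{\mu(f_0(X))-\mu(X\beta_0)\bigr\} + \hat w_A^T\hat D_A^{-1}\bigl\{\mu(X\beta_0)-\mu(X\hat\beta)\bigr\}.
\end{equation*}
The middle term is the signal contribution $\del$, the first is a pivot built from genuinely mean-zero noise (now $\mathbb E(Y_i\mid x_i)=\mu(f_0(x_i))$), and the third is precisely the estimation-error term from the null analysis. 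Subtracting $\del$, it remains to show that the sum of the first and third terms is close in distribution to a standard Gaussian, with the stated error.

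The third term is insensitive to the perturbation $g_0$ and can be handled exactly as in Theorem~\ref{glmnonlinH0}: a first-order Taylor expansion of $\mu$ gives a linear part that is controlled through the near-orthogonality property~\eqref{eq:near_ortho} of $\hat w_A$ (a consequence of the square-root-Lasso KKT conditions, hence unaffected by misspecification) together with H\"older's inequality and the event $\hat\beta\in\Theta(\lambda,\beta_0,X)$, plus a quadratic remainder bounded via the Lipschitz constant $L$ and $\|X(\hat\beta-\beta_0)\|_2^2/n\le s\lambda^2$. I would quote these bounds directly; the probability $\delta$ of leaving $\Theta$ and the weight-replacement error comparing $\hat D_A$ with $D_{\beta_0}$ (costing $K_Xs\lambda_A$) enter exactly as before.

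The genuinely new work is the pivot term $\hat w_A^T\hat D_A^{-1}\{Y-\mu(f_0(X))\}$. Conditional on $Z_A$ this is a weighted sum of independent mean-zero variables whose conditional variances are now $D_{Y,ii}^2$ rather than $D_{\beta_0,ii}^2$. I would apply the Berry--Esseen theorem using the third-moment control of Condition~\ref{model2}; since the effective weights are the entries of $\hat D_A^{-1}\hat w_A$, the bound is driven by $\|\hat w_A\|_\infty$ via the elementary inequality $\sum_i|a_i|^3\le\|a\|_\infty\sum_i a_i^2$, which explains that term in~\eqref{T2.bound}. The delicate point is that the pivot no longer has unit variance: its standard deviation equals $\bigl(\sum_i\hat w_{A,i}^2\,D_{Y,ii}^2/\hat D_{A,ii}^2\bigr)^{1/2}$, and I must show this is $1$ up to the allowed error. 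This is exactly what the two extra hypotheses of Condition~\ref{model2} secure --- the ratio $D_{Y,ii}^2D_{\beta_0,ii}^{-2}$ is pinned to $1$ within $2\domin^{-2}LK_Xs\lambda$, while $\hat D_{A,ii}^2$ approximates $D_{\beta_0,ii}^2$ within the same $K_Xs\lambda_A$ budget already present in the bound.

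I expect the main obstacle to be this variance-matching step: one must propagate two distinct discrepancies --- estimation of the weights, and the model perturbation that shifts the conditional variance off $\mu'(x_i^T\beta_0)$ --- through the normalisation of the pivot and verify that their combined effect on the Gaussian approximation is absorbed into the existing $K_Xs\lambda_A$ and $\|\hat w_A\|_\infty$ terms, so that~\eqref{T2.bound} inherits precisely the bound of~\eqref{T1.bound}. Once the pivot is shown to be approximately $\mathcal N(0,1)$, combining it with the deterministic control of the estimation-error term and adding the probability $\delta$ of the event $\{\hat\beta\notin\Theta(\lambda,\beta_0,X)\}$ completes the argument.
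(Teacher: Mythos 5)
Your proposal is correct and follows essentially the same route as the paper's proof: the identical three-way decomposition $T=\phi+\del+\er$ conditional on $Z_A$, the Berry--Esseen bound for the pivot with effective weights $\hat D_A^{-1}\hat w_A$ yielding the $\|\hat w_A\|_\infty$ term, the KKT/H\"older bound $\lambda_{\mathrm{sq}}\sqrt{n}s\lambda$ plus the Lipschitz-quadratic bound $\lesssim \|\hat w_A\|_\infty s(\lambda^2+\lambda_A^2)n$ for the estimation-error term on the event $\hat\beta\in\Theta(\lambda,\beta_0,X)$, and variance matching of $\sigma=\|D_Y\hat D_A^{-1}\hat w_A\|_2$ to $1$ within the $K_Xs\lambda_A$ budget via exactly the two ratio hypotheses of Condition~\ref{model2}. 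The only step you leave implicit --- converting the high-probability bound on $\er$ into a Kolmogorov-distance bound via Gaussian anti-concentration and absorbing $\delta$ --- is precisely the routine sandwiching argument the paper carries out, so nothing essential is missing.
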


%

Under the null hypothesis, we have $\del =0 $ and $D_Y =D_{\beta_0}$; thus we recover the result of Theorem~\ref{glmnonlinH0}.
The departure of the model from the null hypothesis is captured by $\del$. 
Hence the theorem shows that to detect departures from the null, the direction $w$ must be ``correlated'' with the signal that remains in the residuals under misspecification, namely  $\mu(f_0(X))-\mu(X\beta_0)$. 
 Under a local alternative, e.g. $f_0(X) = X\beta_0 + g_0(X)/\sqrt{n}$, where $\|g_0(X)\|_2=1$, we have
$\del \asymp 1$.

Theorem \ref{glmnonlin} relies on rates of convergence of the ``projected'' estimator $\hat\beta$ in \eqref{consistency.lasso2} when the model is misspecified. Oracle inequalities for Lasso-regularized estimators in high-dimensional settings have been well explored; we refer to \cite{hds} and the references therein. If there is misspecification, we hope that the projected estimator behaves as if it knows which variables are relevant for a linear approximation of the possibly nonlinear target $f_0.$
In Appendix \ref{subsec:orac}, we summarize how misspecification affects estimation of the best linear approximation, based on the approach of \cite{hds}. These results guarantee that under a local alternative, the Lasso for generalized linear models still satisfies the condition
$$\mathbb{P}\bigl(\hat\beta \in \Theta(\lambda,\beta_0,X)|X\bigr)\rightarrow 1.$$

\subsection{Consequences for logistic regression}
\label{sec:cons.logistic}
In this section we show how our general theory applies to the problem of goodness-of-fit testing for logistic regression models. 
We take $\mathcal Y=\{0,1\}$ and assume  $(Y_i |x_i = x) \sim\text{Bernoulli}(\pi_0(x))$.
Define
$$f_0(x) := \log \left(\frac{\pi_0(x)}{1-\pi_0(x)}\right),$$
that is, $\mathbb E(Y_i|\exi=x)=\pi_0(x) = \mu(f_0(x)),$ for the inverse  link  function $\mu(u)=1/(1+e^{-u}).$
The function $f_0$ may be potentially nonlinear in $x.$
The $\ell_1$-regularized logistic regression estimator is
$$\hat\beta:= \argmin_{\beta\in\mathbb R^p}  \s \bigl\{-Y_i x_i^T\beta + d(x_i^T\beta) +\lambda \|\beta\|_1\bigr\},$$
where $d(\xi):= \log (1+e^{\xi})$.  Let $\beta_0 \in \mathbb{R}^p$ 
be the best approximation obtained by a GLM \citep[][Section~6.3, p.~115]{hds}. 
We define $S:=\{j:\beta_{0,j} \neq 0\}$ and $s:= |S|$.

Corollary~\ref{cor.logistic} below follows by combining Theorem~\ref{glmnonlin} with existing results on $\ell_1$-penalized logistic regression.  We assume conditions, stated formally in Lemma~\ref{logit.rates} in Appendix \ref{sec:appendix}, which guarantee that with high probability this penalized estimator is sufficiently close to $\beta_0$.

\begin{cor}\label{cor.logistic}
Assume that the conditions of Lemma \ref{logit.rates} in Appendix \ref{sec:appendix} hold
 and in addition assume that $\hat\beta_A\in \Theta(\lambda,\beta_0,X_A)$ and that $12 c_0^{-2}Ks\lambda_A\leq 1$ where $c_0^2 = (e^\eta / \epsilon_0 + 1)^{-2}$ and $K,\eta$ and $\epsilon_0$ are defined in Lemma \ref{logit.rates}. Suppose that $\lambda\asymp\lambda_{\rm{sq}}\asymp \lambda_A\asymp \sqrt{\log (2p)/n}.$
Then there exists a constant $C>0$ such that for any $z\in\mathbb R$, and whenever $Z_A$ is such that $\hat{f}(X) \neq X\hat{\beta}_{\mathrm{sq}}$, 
\begin{eqnarray}
\label{T3.bound}
\bigl|\mathbb{P}\left(T  - \del < z|{Z_A}\right) - \Phi(z)\bigr| 
\leq C\left(
(2p)^{-1} +  \frac{s\{\log (2p)+K\sqrt{\log (2p)}\}}{\sqrt{n}} + \|\hat w_A\|_\infty {s}\log (2p)  
  \right),
\end{eqnarray}
 where
$$\del := \hat w_A^T \bigl\{\mu\bigl(f_0(X)\bigr) - \mu(X\beta_0)\bigr\}.$$

\end{cor}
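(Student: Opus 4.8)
The plan is to derive Corollary~\ref{cor.logistic} as a direct specialisation of Theorem~\ref{glmnonlin}: I would verify that Condition~\ref{model2} holds for the logistic inverse link, use Lemma~\ref{logit.rates} to control the probability $\delta=\mathbb{P}(\hat\beta\notin\Theta(\lambda,\beta_0,X)\mid X)$, and then substitute the prescribed rates $\lambda\asymp\lambda_{\mathrm{sq}}\asymp\lambda_A\asymp\sqrt{\log(2p)/n}$ into the right-hand side of \eqref{T2.bound}.

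Most of Condition~\ref{model2} is immediate for $\mu(u)=1/(1+e^{-u})$. Differentiability and strict positivity of $\mu'$ are clear, and since $\mu'(u)=\mu(u)\{1-\mu(u)\}$ has bounded second derivative, $\mu'$ is globally Lipschitz with an explicit constant $L$. The lower bound $\min_i D_{\beta_0,ii}\geq\domin=c_0$ follows from the conditions of Lemma~\ref{logit.rates}, which constrain the true linear predictor $x_i^T\beta_0$ to a bounded range so that $\mu'(x_i^T\beta_0)$ stays away from zero; the identification $\domin=c_0$ comes from $c_0^2=(e^\eta/\epsilon_0+1)^{-2}$, and $K_X=K$. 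The third-moment requirement is automatic for Bernoulli responses: writing $\pi_i=\mu(f_0(\exi))$ one computes $\mathbb{E}(|Y_i-\pi_i|^3\mid X_i)/D_{Y,ii}^3=\{\pi_i^2+(1-\pi_i)^2\}/\sqrt{\pi_i(1-\pi_i)}$, which is bounded once $\pi_i$ is bounded away from $0$ and $1$.

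The one part of Condition~\ref{model2} requiring a genuine argument --- and the step I expect to be the main obstacle --- is the variance-matching inequality $|D_{Y,ii}^2 D_{\beta_0,ii}^{-2}-1|\leq 2\domin^{-2}LK_X s\lambda$. For the canonical logistic link the conditional variance coincides with $\mu'$ evaluated at the true predictor, so $D_{Y,ii}^2=\mu'(f_0(\exi))$ while $D_{\beta_0,ii}^2=\mu'(\exi^T\beta_0)$. The Lipschitz property of $\mu'$ together with the weight lower bound then gives $|D_{Y,ii}^2 D_{\beta_0,ii}^{-2}-1|\leq \domin^{-2}L\,|f_0(\exi)-\exi^T\beta_0|$, so the inequality reduces to a pointwise bound on the nonlinear discrepancy $|f_0(\exi)-\exi^T\beta_0|$. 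Controlling this is precisely where the misspecification enters: I would argue that, under the local-alternative regime and using the defining score equations for the best GLM approximation $\beta_0$ together with the boundedness and sparsity assumptions of Lemma~\ref{logit.rates}, this discrepancy is of order $K_X s\lambda$, which is exactly what is needed.

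With Condition~\ref{model2} verified, I would appeal to Lemma~\ref{logit.rates} to obtain $\delta\leq(2p)^{-1}$ under $\lambda\asymp\sqrt{\log(2p)/n}$, after which Theorem~\ref{glmnonlin} applies directly. The remaining work is purely arithmetic: substituting the common rate gives $\lambda_{\mathrm{sq}}\sqrt{n}\,s\lambda\asymp s\log(2p)/\sqrt{n}$, $K_X s\lambda_A\asymp Ks\sqrt{\log(2p)}/\sqrt{n}$, and $\|\hat w_A\|_\infty s(\lambda^2+\lambda_A^2)n\asymp\|\hat w_A\|_\infty s\log(2p)$, while the lone trailing term $\|\hat w_A\|_\infty$ is dominated by the last of these since $s\log(2p)\geq1$. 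Collecting these contributions together with $\delta\leq(2p)^{-1}$ reproduces the stated bound \eqref{T3.bound}.
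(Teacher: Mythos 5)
Your overall route is the same as the paper's: specialise Theorem~\ref{glmnonlin}, bound $\delta$ via Lemma~\ref{logit.rates}, verify Condition~\ref{model2} for the logistic link, and substitute the common rate. The arithmetic in your final paragraph is correct. However, there is a genuine gap in the step where you claim that Lemma~\ref{logit.rates} directly yields $\delta\leq (2p)^{-1}$. Membership in $\Theta(\lambda,\beta_0,X)$ requires \emph{two} bounds: the $\ell_1$ bound $\|\hat\beta-\beta_0\|_1\leq s\lambda$, which does follow from the lemma's conclusion, and the prediction-norm bound $\|X(\hat\beta-\beta_0)\|_2^2/n\leq s\lambda^2$, which does not. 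The lemma only controls the excess risk $R(f_{\hat\beta}|X)-R(f_0|X)$, and converting excess risk into a squared prediction-norm bound is the substantive analytic content of the paper's proof: one Taylor-expands the risk around $\beta_0$ (where the gradient vanishes) and lower-bounds the curvature by $d''(x_i^T\tilde\beta_{(i)})\geq (e^{\eta}/\epsilon_0+1)^{-2}=c_0^2$ along the segment between $\beta_0$ and $\hat\beta$. This curvature bound is itself only available after checking that $\sup_{x:\|x\|_\infty\leq K}|f_{\tilde\beta}(x)-f_0(x)|\leq \eta$, which uses the $\ell_1$ bound together with the hypothesis $17\lambda s(e^{\eta}/\epsilon_0+1)^2/\phi^2\leq \eta/(2K)$ and the approximation condition $\sup_x|f_{\beta_0}(x)-f_0(x)|\leq \eta/2$. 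Without this self-concordance-type argument, the second defining inequality of $\Theta(\lambda,\beta_0,X)$ is simply not established, and the bound on $\delta$ does not follow.

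On the variance-matching inequality $|D_{Y,ii}^2D_{\beta_0,ii}^{-2}-1|\leq 2\domin^{-2}LK_Xs\lambda$: you are right to single this out (the paper's own verification of Condition~\ref{model2} passes over it quickly, noting only that under $H_0$ it holds trivially since $D_Y=D_{\beta_0}$, and that under local alternatives the sup-norm gap $|f_0(x_i)-x_i^T\beta_0|$ is small). But the fix you sketch would fail as stated: the score equations $\mathbb{E}\,X\bigl(Y-\mu(X\beta_0)\bigr)=0$ defining the best GLM approximation are moment (orthogonality) conditions and give no pointwise control of $f_0(x_i)-x_i^T\beta_0$; the discrepancy can be pointwise large while remaining orthogonal to the design, so no bound of order $K_Xs\lambda$ can be extracted from them alone. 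The available control in this setting is the sup-norm hypothesis of Lemma~\ref{logit.rates} combined with the local-alternative scaling (e.g.\ $f_0(X)=X\beta_0+g_0(X)/\sqrt{n}$ as in Section~\ref{sec:power}), not the score equations. In short: your plan is structurally correct, but you need to (i) add the curvature/Taylor step to get the prediction-norm component of $\Theta(\lambda,\beta_0,X)$, and (ii) replace the score-equation argument with an assumption or scaling that actually bounds the pointwise gap.
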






\section{Empirical results: Logistic regression}
\label{sec:simulations}
In this section we explore the empirical performance of the methods for goodness-of-fit testing and group testing in the setting of logistic regression.
We begin by considering goodness-of-fit testing in low-dimensional settings in Section \ref{sec:emp.lowdim} and in high-dimensional settings in Section \ref{sec:emp.highdim}.  Goodness-of-fit testing on semi-real data is investigated in Section \ref{sec:emp.semireal}, while in Section \ref{sec:emp.grouptest}, we explore group testing in high-dimensional settings.

\subsection{Low-dimensional settings}
\label{sec:emp.lowdim}
While for low-dimensional settings, there are numerous methods available for testing goodness-of-fit as discussed in Section \ref{sec:literature}, 
we show that our test from Algorithm \ref{rptest} may  be advantageous even here.  We compare the performance of the our test (with residual prediction method being a random forest with default tuning parameter choices) against the Hosmer--Lemeshow  $\hat C$ test, the Hosmer--Lemeshow $\hat H$ test (see \cite{lemeshow1982review}) and  the
le Cessie--van Houwelingen--Copas--Hosmer unweighted sum of squares test (see \cite{hosmer1997comparison}). These tests are implemented in the function \texttt{HLgof.test()} in the R package \texttt{MKmisc} \citep{MKmisc}.

\par
We simulated data from a logistic regression model with sample size $N=300$ 
and $p=10$ covariates according to
$$Y_i|x_i=u \sim \text{Bern}\bigl(\pi(u)\bigr),$$
 where $$\pi(u) := \mu(u_1 + u_2 + u_3 + \sigma g(u)).$$
We considered different forms for the misspecification $g(\cdot)$: 
\begin{itemize}
\item
quadratic effect: 
(a) $g(u) = 2u_1^2$, (b) $g(u) = 2u_5^2$,
\item
interactions:  \\
(c) $g(u) =  u_1u_2$, (d) $g(u) =  u_1u_3$, (f) $g(u)=u_1u_4$, (g) $g(u)=u_4u_7$.
\end{itemize}
Here $\sigma \geq 0$ measures the size of departure from the null hypothesis $H_0:\sigma=0$. 
 Note that our GRP testing methodology requires an auxiliary sample of size $n_A$. We therefore randomly split the sample taking $n_A=n=N/2$, with $n$ being the number of observations in the main sample.
The observation vectors $x_i$ follow a $\mathcal N_{10}(0,\Sigma_0)$ distribution where
\begin{equation} \label{eq:Toep}
(\Sigma_0)_{ij} := \rho^{|i-j|}
\end{equation}
is the Toeplitz matrix with correlation $\rho=0.6$.
 The results for the six settings above are shown in Figure \ref{fig:lowdim}. 
All methods maintain good control over type I error, but in most scenarios our GRP-test has significantly greater power compared with the other methods.

\begin{figure}[h!]
\centering \bf
Testing goodness-of-fit of logistic regression: Power comparison \\\vspace{0.5cm}
\begin{subfigure}{0.3\textwidth}
\includegraphics[width=\linewidth]{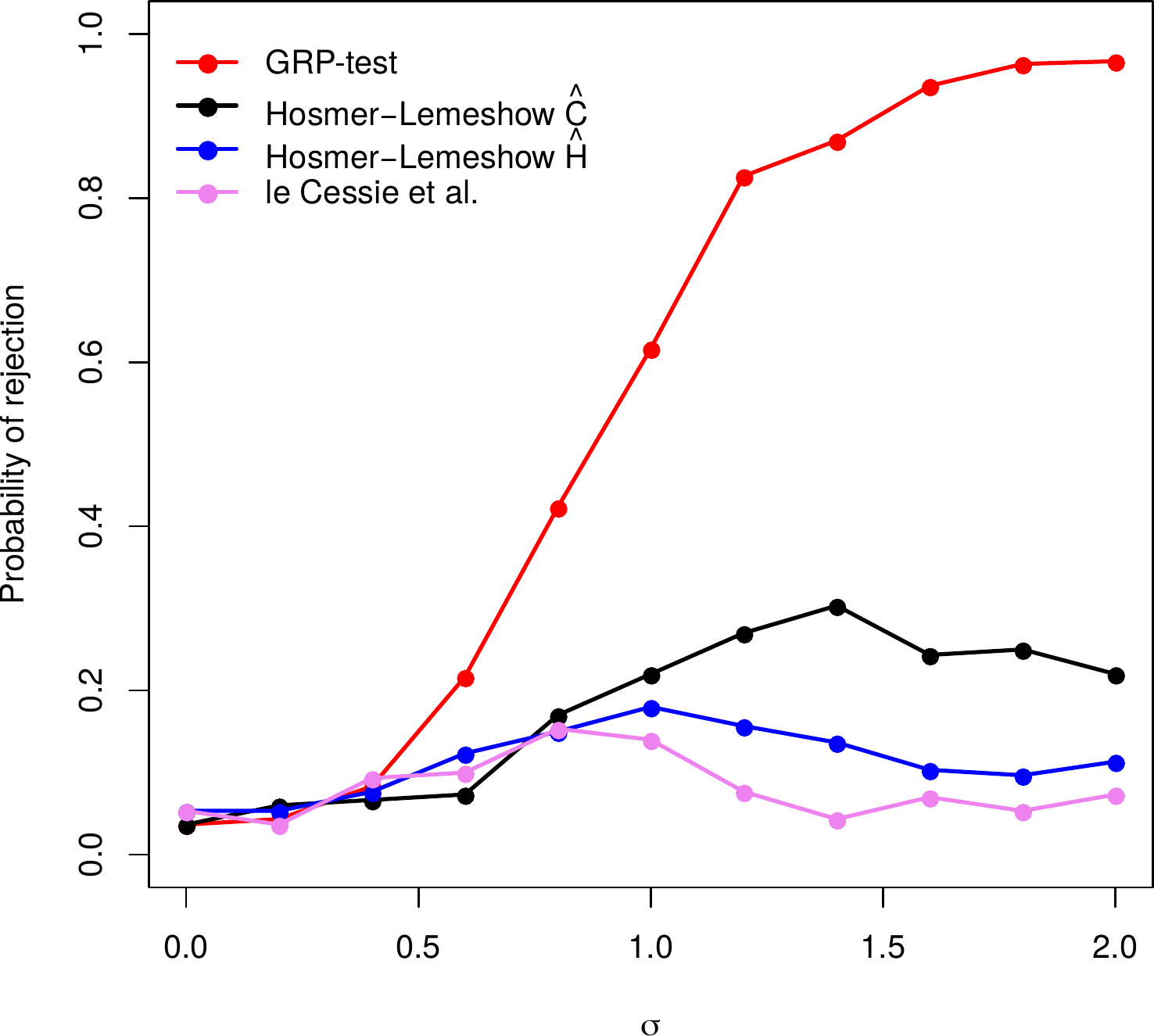}
\caption{Quadratic effect $u_1^2$} \label{fig:1a}
\end{subfigure}
\begin{subfigure}{0.3\textwidth}
\includegraphics[width=\linewidth]{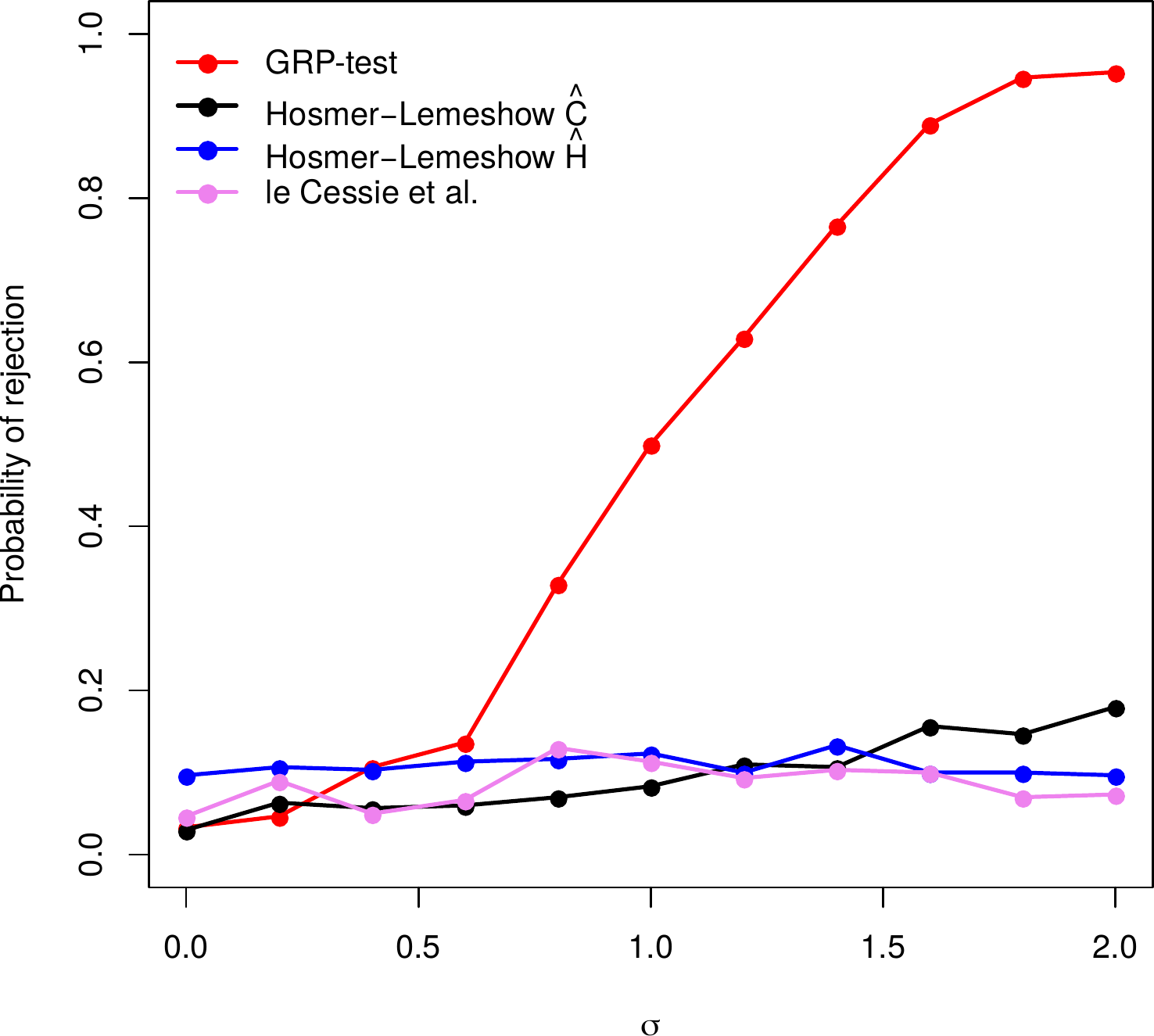}
\caption{Quadratic effect $u_5^2$ 
} \label{fig:1c}
\end{subfigure}
\begin{subfigure}{0.3\textwidth}
\includegraphics[width=\linewidth]{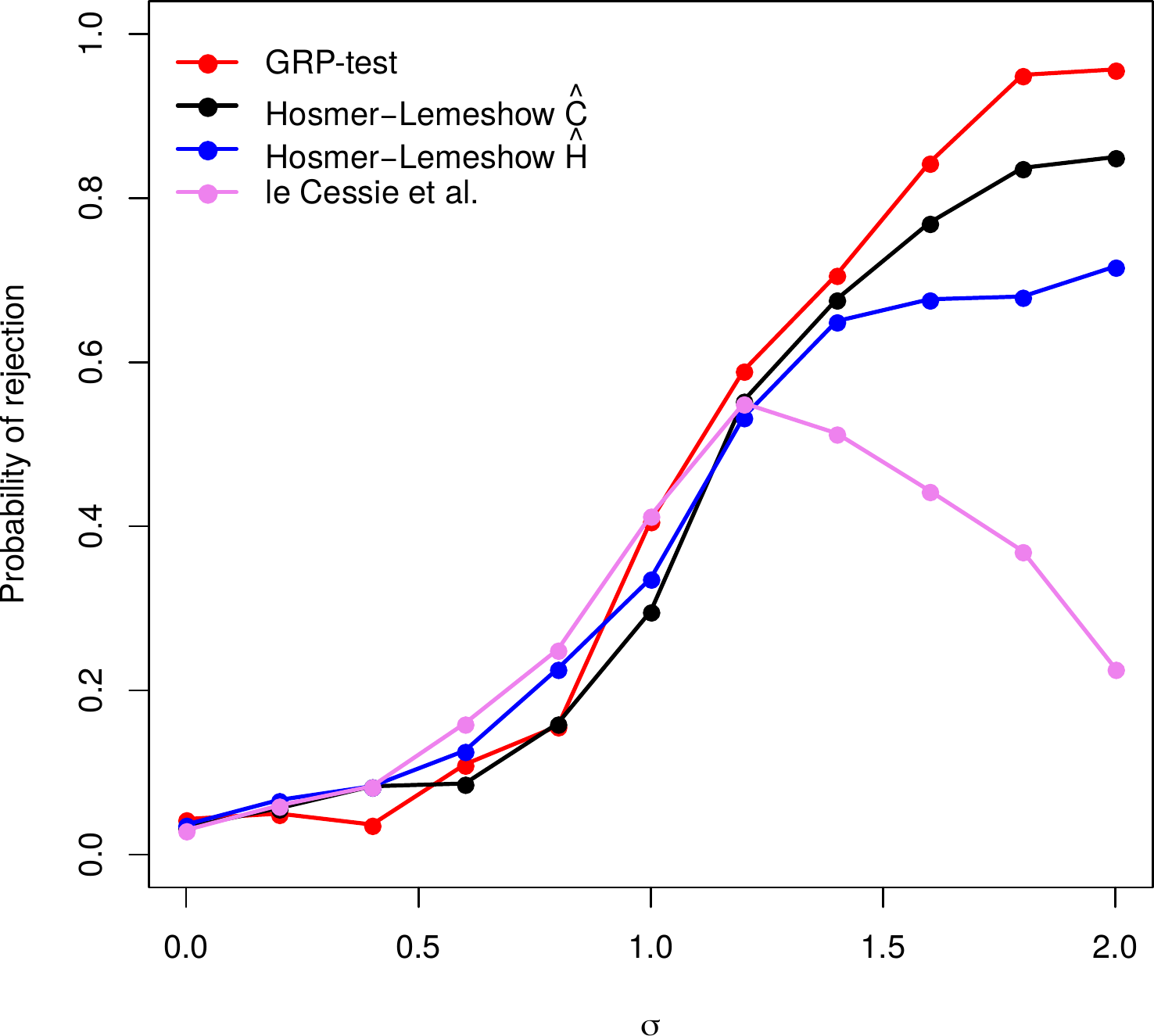}
\caption{Interaction $u_1u_2$} \label{fig:1c}
\end{subfigure}
\\\vspace{0.5cm}
\begin{subfigure}{0.3\textwidth}
\includegraphics[width=\linewidth]{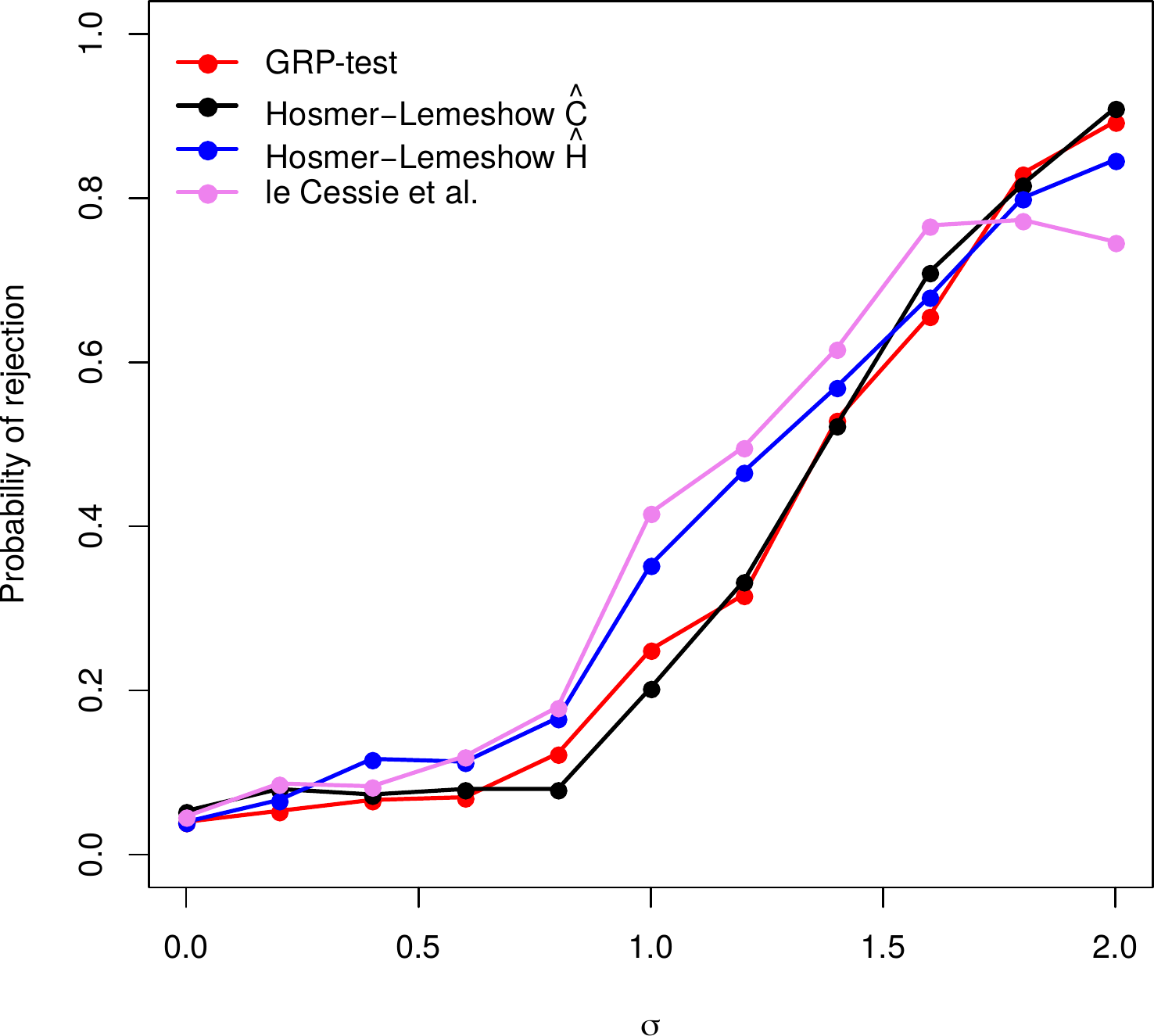}
\caption{Interaction $u_1u_3$} \label{fig:1b}
\end{subfigure}
\begin{subfigure}{0.3\textwidth}
\includegraphics[width=\linewidth]{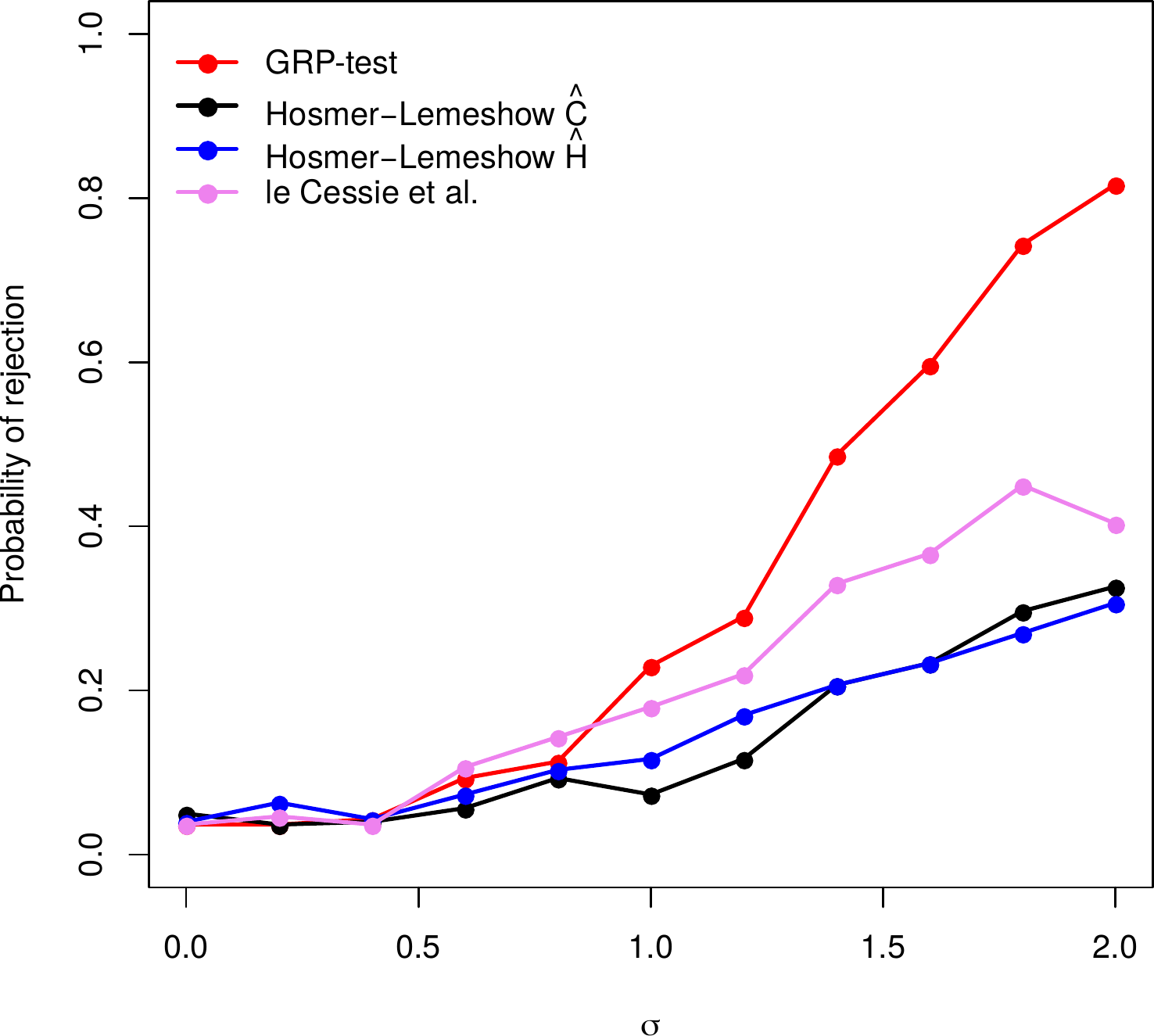}
\caption{Interaction $u_1u_4$} \label{fig:1c}
\end{subfigure}
\begin{subfigure}{0.3\textwidth}
\includegraphics[width=\linewidth]{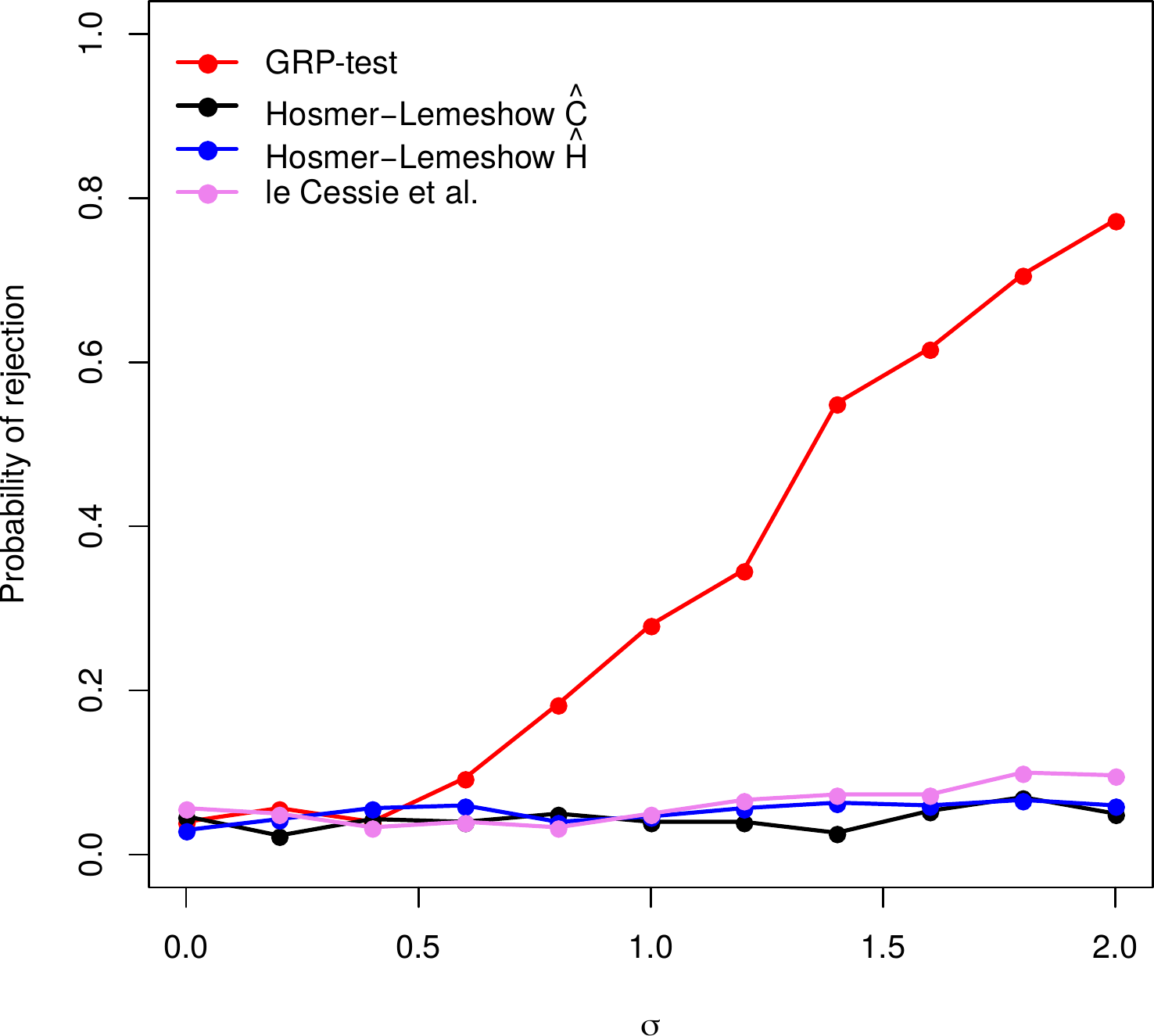}
\caption{Interaction $u_4u_7$} \label{fig:1c}
\end{subfigure}

\caption{Comparison of GRP-test with Hosmer--Lemeshow $\hat C,$ Hosmer--Lemeshow  $\hat H$ and le Cessie--van Houwelingen--Copas--Hosmer unweighted sum of squares test.}
\label{fig:lowdim}
\end{figure}


\subsection{High-dimensional settings}
\label{sec:emp.highdim}
Here we consider logistic regression models with two different types of misspecification from the presence of a pure quadratic effect and an interaction term. 
Specifically, we take the log-odds to be
\[
f_0(u) = u^T\beta_0 + \sigma g(u_1,\dots,u_p)
\]
with
$$\beta_0:=(1,1,1,1,1,0,\dots,0)\in \mathbb R^p$$
and
either (a) $g(u) = u_1u_2$ or (b) $g(u) = u_1^2/2$. The parameter $\sigma$ controls the degree of the misspecification and we look at $\sigma \in [0,3]$. 

The observation vectors $\exi$ are independent with a Gaussian distribution $\mathcal N_p(0,\Sigma_0),$
where $\Sigma_0$ is the Toeplitz matrix {\eqref{eq:Toep}} for a range of correlations $\rho\in \{0.4,0.6,0.8\}$.  
We consider a setting with $p=500$ and $N=800.$
The GRP-test requires sample splitting and we use split size $50\%$; thus the size of auxiliary sample is $n_A = 400$. 
Again, we use a random forest as the residual prediction method. 

In order to achieve better control of the type I error, we use a slight modification of the procedure proposed in Algorithm~1. Define $\hat S :=\{j:\hat\beta_j\neq 0\}$.  Rather than calculating the direction $\hat{w}_A$ through $\hat{\beta}_{\mathrm{sq}}$, we instead use
\[
\tilde{\beta}_{\mathrm{sq}} := \argmin_{\beta \in \R^p} \left\{\frac{1}{\sqrt{n}} \bigl\|\hat{D}_A\bigl(\hat{f}(X)-X\beta\bigr)\bigr\|_2 + \lambda_{\mathrm{sq}}\|\beta_{\hat{S}^c}\|_1 \right\}
\]
in its place within \eqref{eq:w_hat_def}. In this way, we enforce that $\hat{w}_A$ is \emph{exactly} orthogonal to $\hat{D}_A X_{\hat{S}}$. This helps to keep
the remainder term arising from the asymptotic expansion of the test statistic under control, as can  be seen from the following argument. Assume that a ``beta-min condition'' is satisfied, that is 
for all $j\in S := \{k\in \{1,\dots,p\}:\beta_{0,k}\not = 0\}$ it holds that $\beta_{0,j} \gtrsim \sqrt{s\log p/n}$, where $s:=|S|$. 
Then asymptotically, it holds that $\hat{S} \supseteq S$ 
with high probability  \citep[e.g.][Corollary~7.6]{hds}.  
On the event that this occurs, we have that the remainder term in  \eqref{eq:decomp} satisfies
\begin{align*}
\hat{w}_A^TD_{\beta_0}X(\beta_0 - \hat{\beta}) &\approx \hat{w}_A^T\hat{D}_AX(\beta_0 - \hat{\beta}) \\
&= \hat{w}_A^T\hat{D}_AX_{\hat{S}}(\beta_{0,\hat{S}} - \hat{\beta}_{\hat{S}}) =0.
\end{align*}
Even without such a beta-min condition, it is plausible that we will obtain a reduction in this bias term through this strategy of exact orthogonalization.

\par
In the high-dimensional setting, there is no obvious method that we can use for comparison with our proposed GRP-test.
Therefore as a theoretical benchmark, we consider an oracle GRP-test applied to a  reduced design matrix containing only variables in the active set $S=\{1,\dots,5\}$, thereby reducing problem to a low-dimensional one. The results are reported in Tables \ref{tab:m1} and~\ref{tab:m2}, from which we see that the GRP-test does indeed control the type I error, and suffers only a relatively small loss in power compared with the oracle GRP-test.

\begin{table}[h!]
\centering
Detecting the quadratic effect $\sigma u_1^2$\vskip 0.2cm
\begin{tabular}{ccccc}
  \hline
	 {\bf GRP-test } &  $\sigma = 0$ & $\sigma = 0.5 $ &$\sigma =1 $ & $\sigma =1.5$
	\\
	\hline
	 $\rho=0.4$ & 0.02 & 0.16 & 0.86 & 0.96 \\
	 $\rho=0.6$ & 0.04 & 0.18 & 0.82 & 0.94  \\  
	 $\rho=0.8$ & 0.06 & 0.12 & 0.52 & 0.96 \\[0.5cm] 

	\hline
	{\bf Benchmark} &$\sigma = 0$ & $\sigma = 0.5 $ &$\sigma =1 $ & $\sigma =1.5$ \\\hline
	 $\rho=0.4$ & 0.05 & 0.52 & 0.99 & 1.00\\
	 $\rho=0.6$ & 0.02 & 0.35 & 0.92 & 1.00\\%
	 $\rho=0.8$ & 0.05 & 0.18 & 0.76 & 0.99 
   \\\hline
\end{tabular}
\caption[]{
Estimated probabilities of rejection of $H_0:\sigma=0$ at significance level 0.05 for different values of $\rho$ and $\sigma$. 
Dimensions of the data are $p=500$ and $N=800$. 
Averages for the GRP-test were calculated over 50 iterations. 
}
\label{tab:m1}

\end{table}

\begin{table}[h!]
\centering
Detecting the interaction effect $\sigma u_1u_2$\vskip 0.2cm
\begin{tabular}{ccccc}
  \hline
 {\bf GRP-test } &  $\sigma = 0$ & $\sigma = 1 $ &$\sigma =2 $ & $\sigma =3$
	\\ 
  \hline
	 $\rho=0.4$ & 0.02 & 0.16 & 0.86 & 0.94   \\
	 $\rho=0.6$ & 0.04 & 0.14 & 0.96 & 1.00    \\
	 $\rho=0.8$ & 0.06 & 0.34 & 1.00 &  1.00   \\[0.5cm]
	\hline
	{\bf Benchmark} &$\sigma = 0$ & $\sigma = 1 $ &$\sigma =2 $ & $\sigma =3$ \\\hline
	 $\rho=0.4$ & 0.05 & 0.68 & 1.00 & 1.00 \\ 
	 $\rho=0.6$ & 0.04 & 0.70 & 1.00 & 1.00\\ 
	 $\rho=0.8$ & 0.04 & 0.38 & 1.00 & 1.00\\
   \hline
\end{tabular}
\caption[]{
Estimated probabilities of rejection of $H_0:\sigma=0$ at significance level 0.05 for different values of $\rho$ and $\sigma$. 
Dimensions of the data are $p =500$ and $ N=800.$
}
\label{tab:m2}

\end{table}

\subsection{Semi-real data example}
\label{sec:emp.semireal}
We use a gene-expression dataset on lung cancer available from the NCBI database 
(\cite{gene.data}, \texttt{https://www.ncbi.nlm.nih.gov/sites/GDSbrowser?acc=GDS2771}) to illustrate the size and power performance of the goodness-of-fit test. We aim to detect if the model is a logistic regression, or if there are extra nonlinear effects. 
The full dataset contains airway epithelial gene expressions for 22215 genes from each of 192 smokers with (suspected) lung cancer, but this was reduced by taking the 500 genes with the largest variances.  Having scaled the resulting variables, we fit a $\ell_1$-penalized logistic regression using \texttt{cv.glmnet()} from the package \texttt{glmnet} \citep{glmnet} and obtained a parameter estimate $\hat\beta$ with its corresponding support set $\hat S$.  We then fit a Gaussian copula model to the rows of the design matrix and generated a new, augmented design matrix $X$ by simulating a further $N=800$ observation vectors from this fitted model. 
Finally, we generated 800 new responses: $Y_i |x_i=u \sim \text{Bern}\bigl(\hat\pi(u)\bigr)$
where 
$$ \hat\pi(u) := u^T\hat\beta + 3\frac{g(u) - \bar{g}}{\hat{\sigma}_g}\mathds{1}_{\{\hat{\sigma}_g \neq 0\}},$$ 
$\bar{g} := N^{-1}\sum_{i=1}^N g(x_i)$ and $\hat{\sigma}_g^2 := (N-1)^{-1}\sum_{i=1}^N \{g(x_i) - \bar{g}\}^2$, for  the following three scenarios:
\begin{align*}
g(u) &= 0,
\\
g(u) &= u_{j_1}^2 + u_{j_2}^2,
\\
g(u) &= u_{j_1} u_{j_2} + u_{j_3} u_{j_4},
\end{align*}
where $j_{\ell} \in \hat S$, $\ell=1,\dots,4$ are uniformly sampled entries from $\hat S.$
We report rejection probabilities for all three scenarios from 100 repetitions in Table \ref{tab:semireal}.  In each case, the GRP-test is able to detect the misspecification relatively reliably, while keeping the type I error under control.

\begin{table}[h!]
\centering
\textbf{Testing goodness-of-fit of logistic regression on semi-real data on lung cancer}
\vskip 0.2cm
\begin{tabular}{lccc}
  \hline
 {\bf  } &  Prob. of rejection of $H_0$ 
	\\ 
  \hline
	$g(u) = 0$ & 0.05 
	\\
	$g (u ) = u_{j_1}^2 + u_{j_2}^2$ & 0.77 
	\\
	$g(u) = u_{j_1} u_{j_2} + u_{j_3} u_{j_4}$ &  0.81 
	\\
   \hline
\end{tabular}
\caption[]{
Estimated probabilities of rejection of $H_0:g(u)=0$ at significance level 0.05, averaged over 100 generated datasets.
}
\label{tab:semireal}

\end{table}

\begin{figure}[h!]
\centering \bf
Testing goodness-of-fit of logistic regression on semi-real data on lung cancer \\
\begin{subfigure}{0.3\textwidth}
\includegraphics[width=\linewidth]{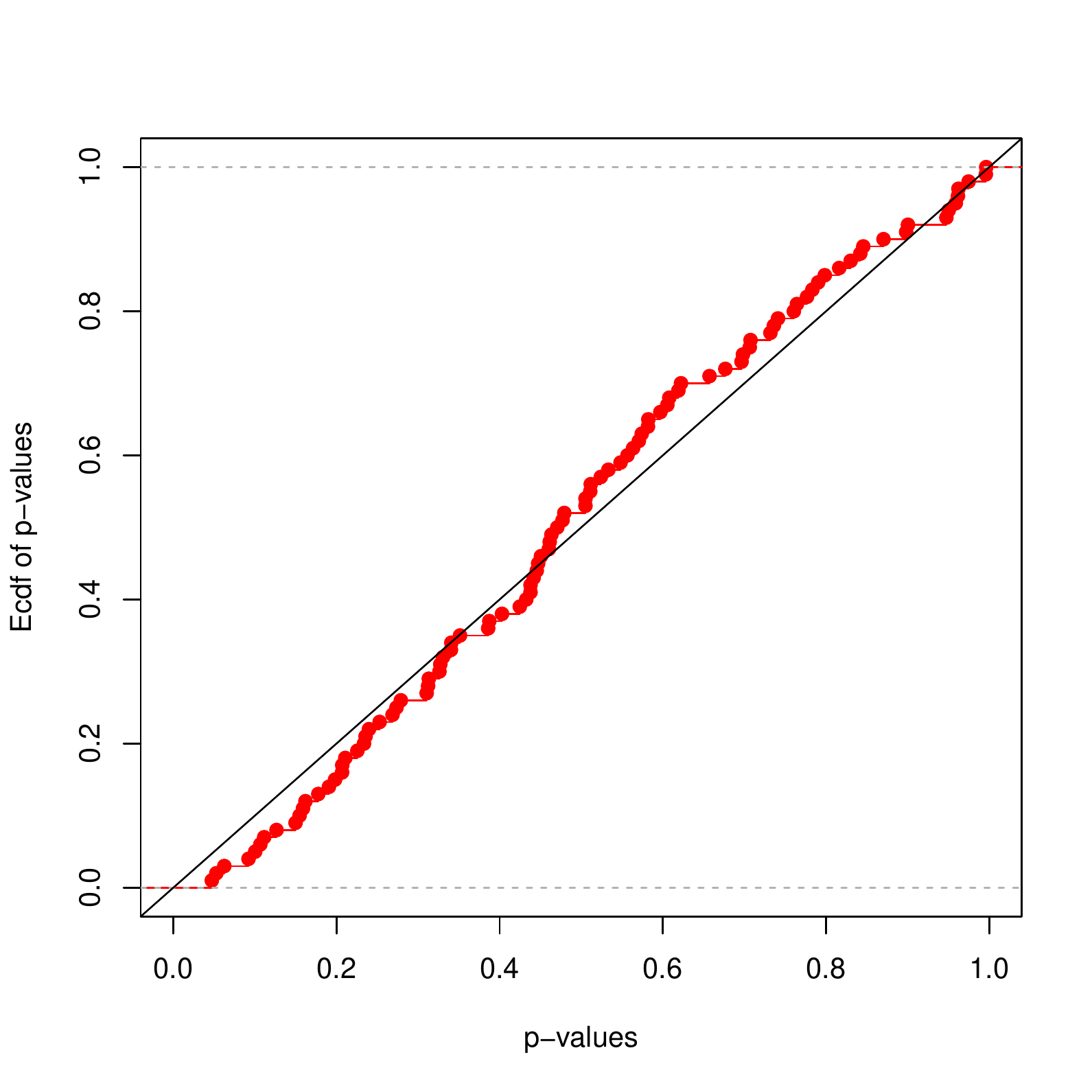}
\caption{No misspecification: $g(u) = 0$.} \label{fig:1a}
\end{subfigure}
\begin{subfigure}{0.3\textwidth}
\includegraphics[width=\linewidth]{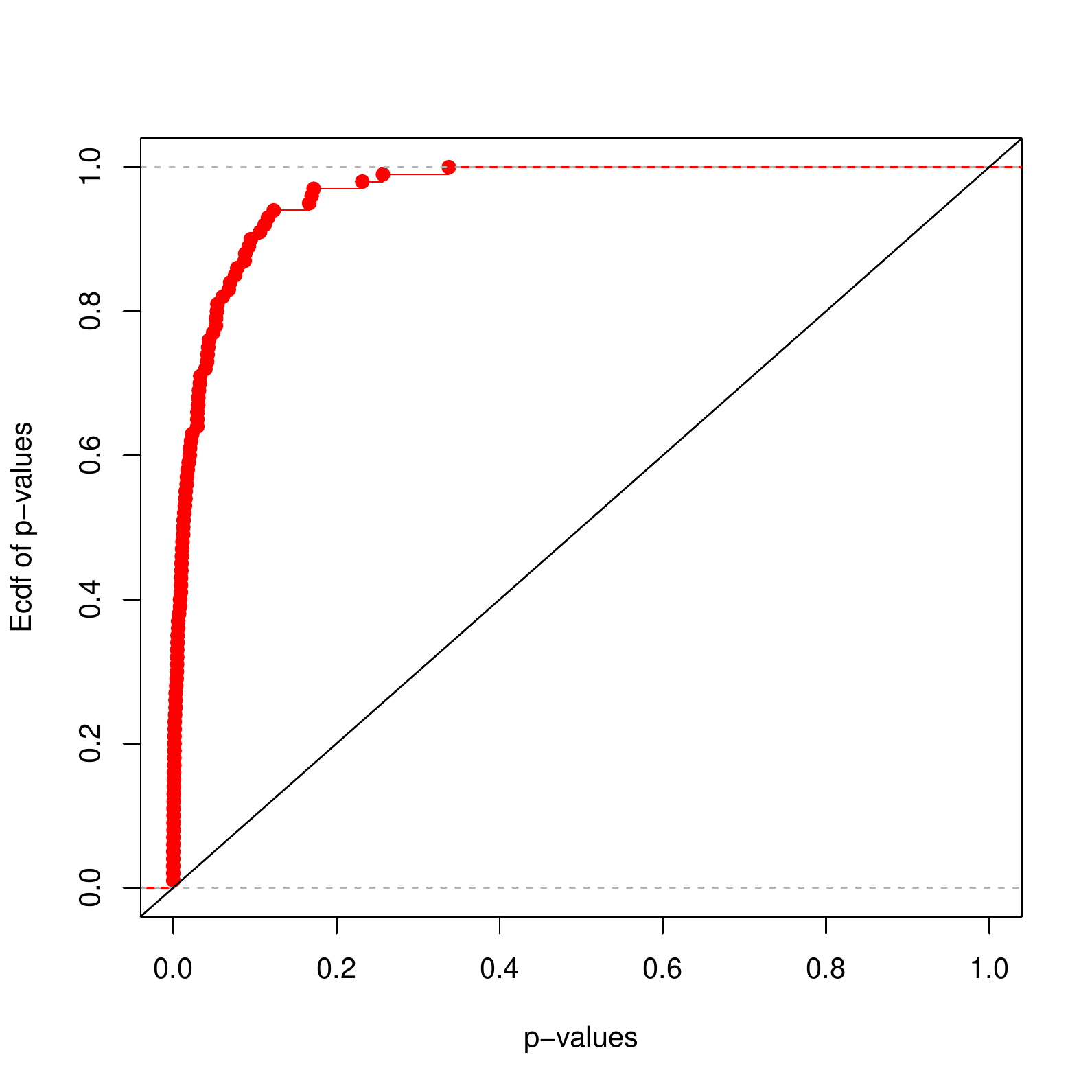}
\caption{Quadratic effects: $g(u) = u_{j_1}^2 + u_{j_2}^2.$ 
} \label{fig:1c}
\end{subfigure}
\begin{subfigure}{0.3\textwidth}
\includegraphics[width=\linewidth]{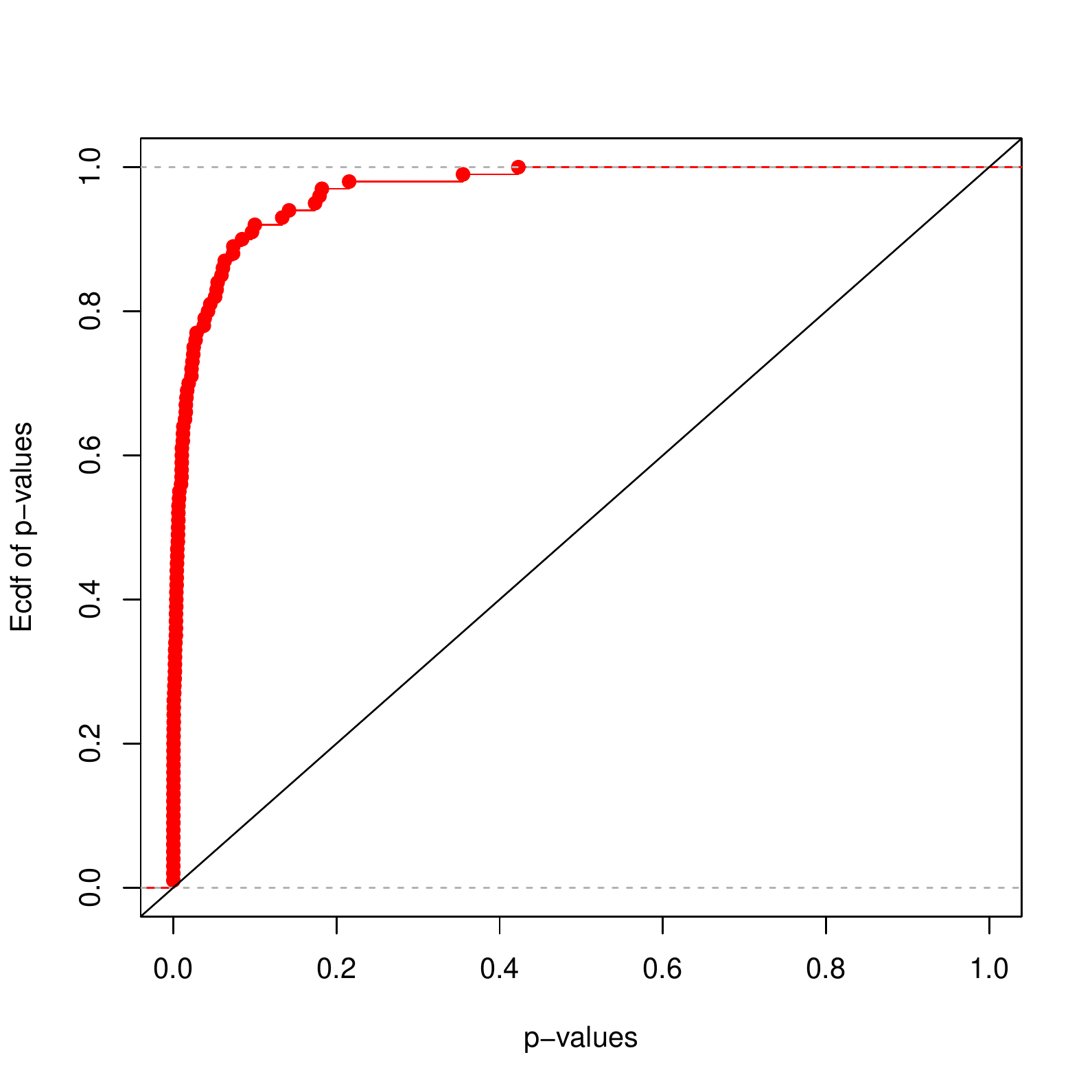}
\caption{Interactions: $g(u) = u_{j_1}u_{j_2} + u_{j_3}u_{j_4}$.} \label{fig:1c}
\end{subfigure}

\caption{Empirical distribution functions of p-values for testing the goodness-of-fit in the three scenarios.}
\label{fig:lowdim}
\end{figure}


\subsection{Group testing}
\label{sec:emp.grouptest}
Finally, we consider the problem of testing for the significance of groups of predictors using the methodology set out in Section~\ref{subsec:group.testing}. We compare the GRP-test (Algorithm~\ref{grptest}) with the globaltest \citep{goeman2004global} and the de-biased Lasso \citep{vdgeer13, dezeure2015high} for logistic regression. 
We consider logistic regression models with coefficient vector of the form
\[
\beta_0 = (1, 1, 1, 1, \theta, 0, \ldots, 0) \in \R^p
\]
for a range of values of $\theta \in [0, 1.5]$, and look at testing the null hypothesis $\beta_{0,G}=0$ where $G=\{5, 6, \ldots, p\}$. Thus larger values of $\theta$ correspond to more extreme violations of the null. Similarly to earlier examples, we use design matrices constructed via realisations of a random Gaussian design with Toeplitz covariance \eqref{eq:Toep} and $\rho=0.6$.
The results are reported in Figures \ref{fig:group1} and \ref{fig:group2}.  Both the GRP-test and the globaltest control the type I error at very close to the nominal level, while from Figure~\ref{fig:group1} the de-biased Lasso test is conservative; on the other hand, the GRP-test does very well in these examples in terms of power.  

\begin{figure}[h!]
\centering
\textbf{Group testing in logistic regression: comparison of GRP-test, de-biased Lasso and globaltest
} \\
\vspace{0.5cm}
\vskip 0.0cm
\begin{subfigure}{0.3\textwidth}
\includegraphics[width=\linewidth]{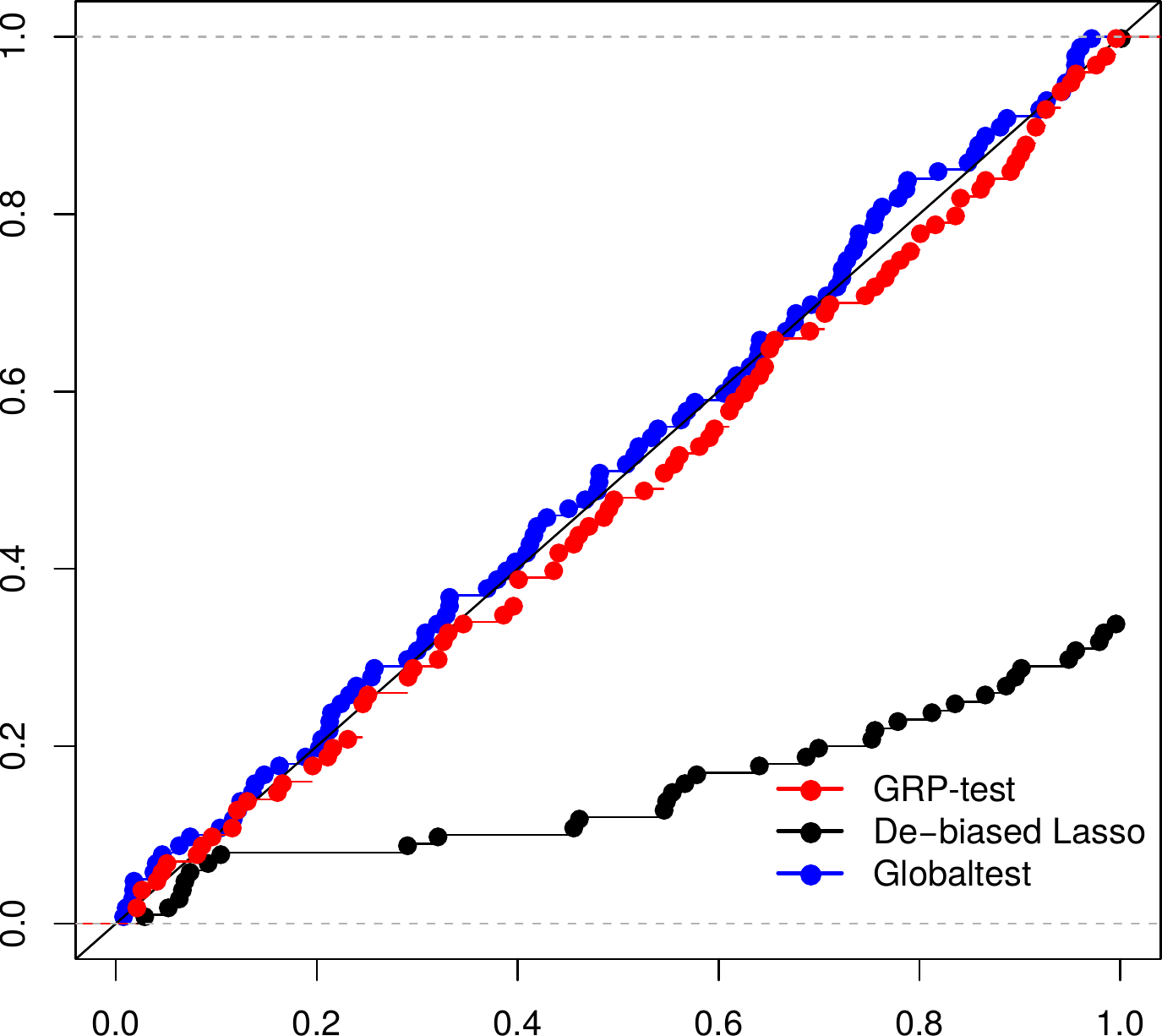}
\caption{ $\theta=0$ ($H_0$ true)} \label{fig:1a}
\end{subfigure}
\hspace*{0.5cm} 
\begin{subfigure}{0.3\textwidth}
\includegraphics[width=\linewidth]{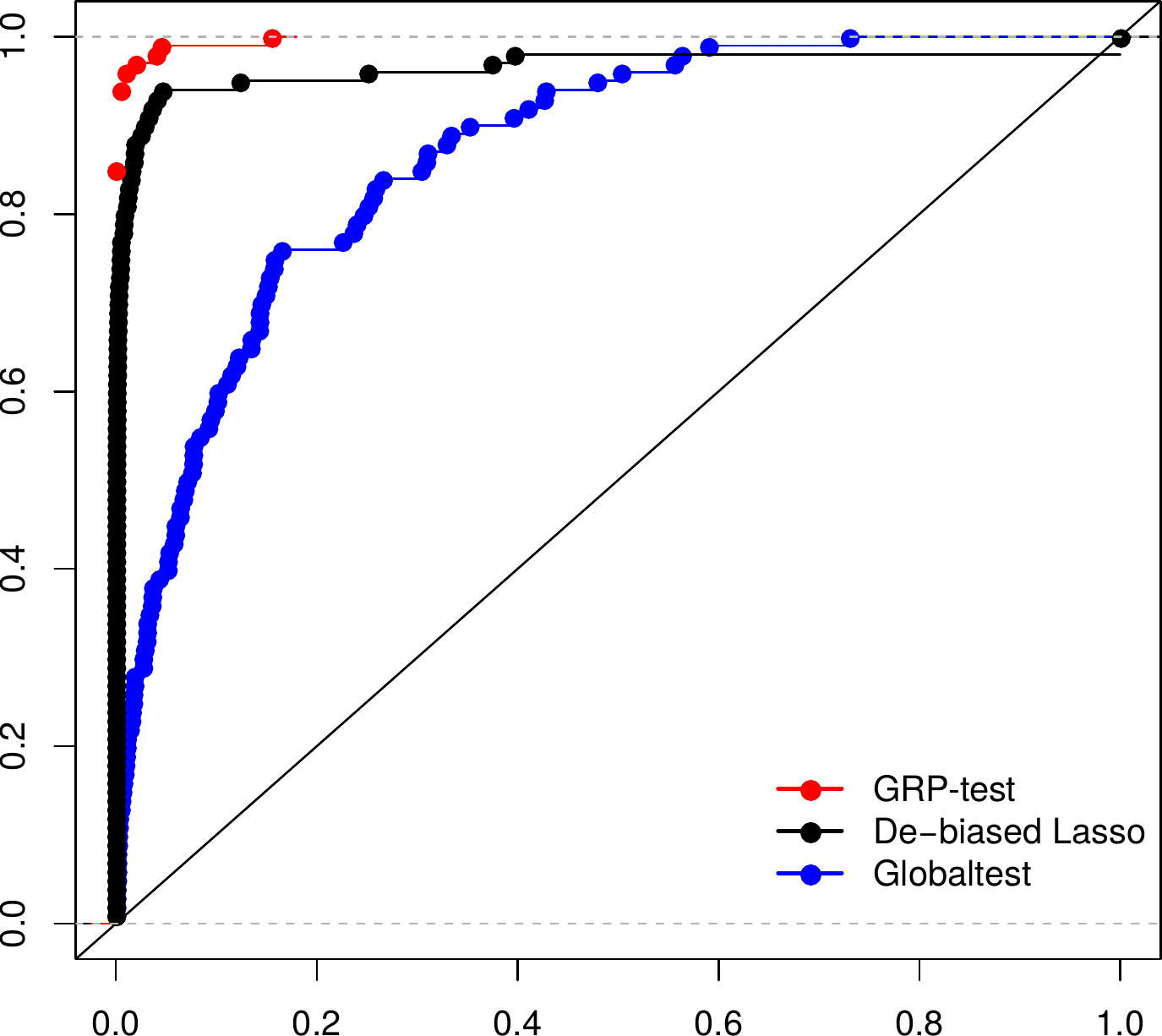}
\caption{$\theta=1$ ($H_0$ false)} \label{fig:1b}
\end{subfigure}
\caption{
Comparison of GRP-test from Algorithm \ref{grptest} with the de-biased Lasso \citep{dezeure2015high} and globaltest \citep{goeman2004global} when testing $H_0: \beta_{0,G} = 0$, with $G = \{5,\dots,p\}$, where $ \beta_0=(1,1,1,1,\theta,0,\dots,0).$ Plots show the empirical distribution functions of p-values under the null hypothesis (left) and under the alternative $\theta=1$ (right). The dimensions of the data are $n=500, p = 100$.
} 
\label{fig:group1}
\end{figure}

\begin{figure}[h!]
\centering
\textbf{Group testing in logistic regression:  comparison of GRP-test and globaltest }\\
\vskip 0.0cm
\vspace{0.5cm}
\begin{subfigure}{0.3\textwidth}
\includegraphics[width=\linewidth]{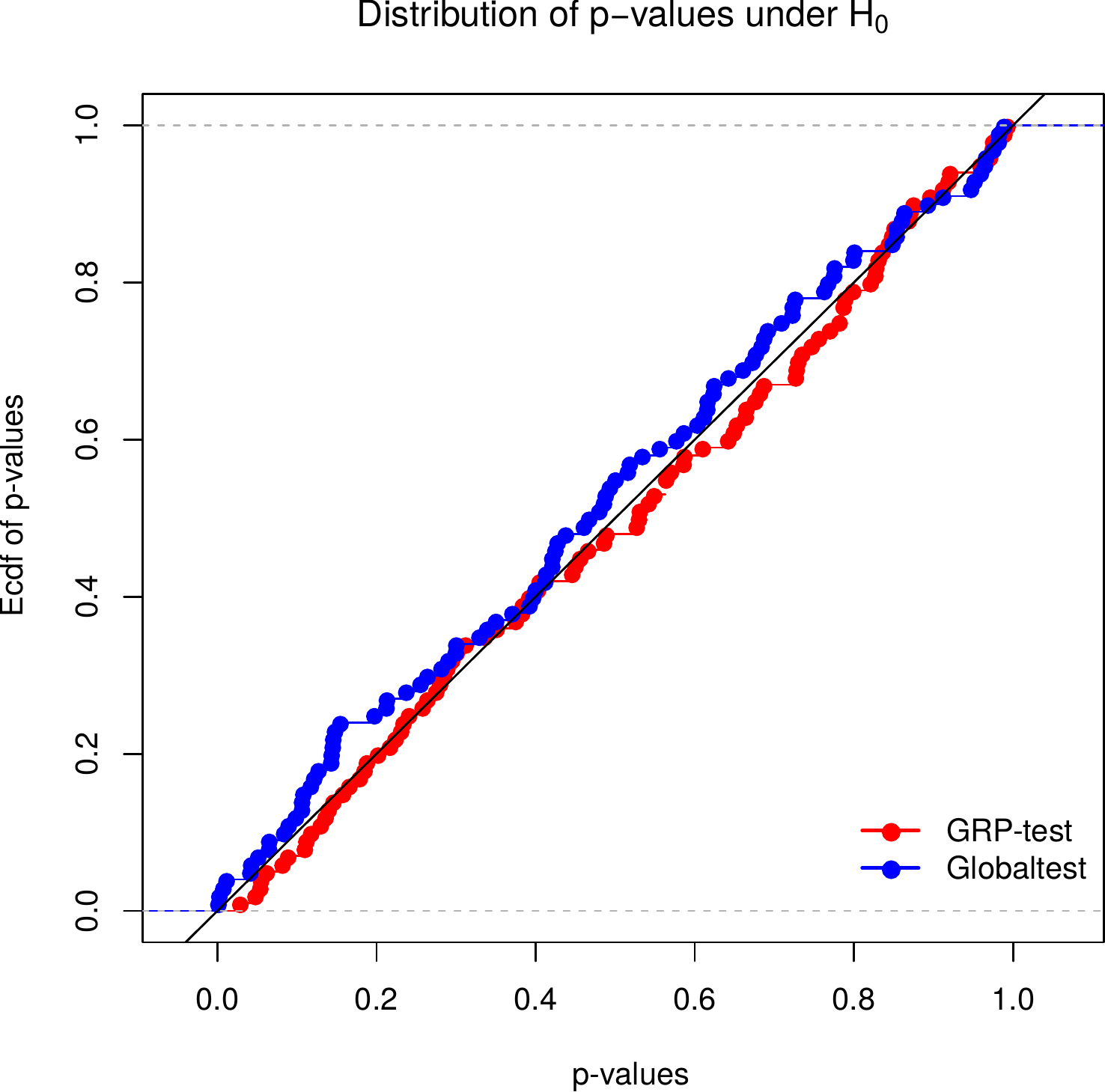}
\caption{ $\theta = 0$ ($H_0$ true)} \label{fig:1a}
\end{subfigure}
\hspace*{0.5cm} 
\begin{subfigure}{0.3\textwidth}
\includegraphics[width=\linewidth]{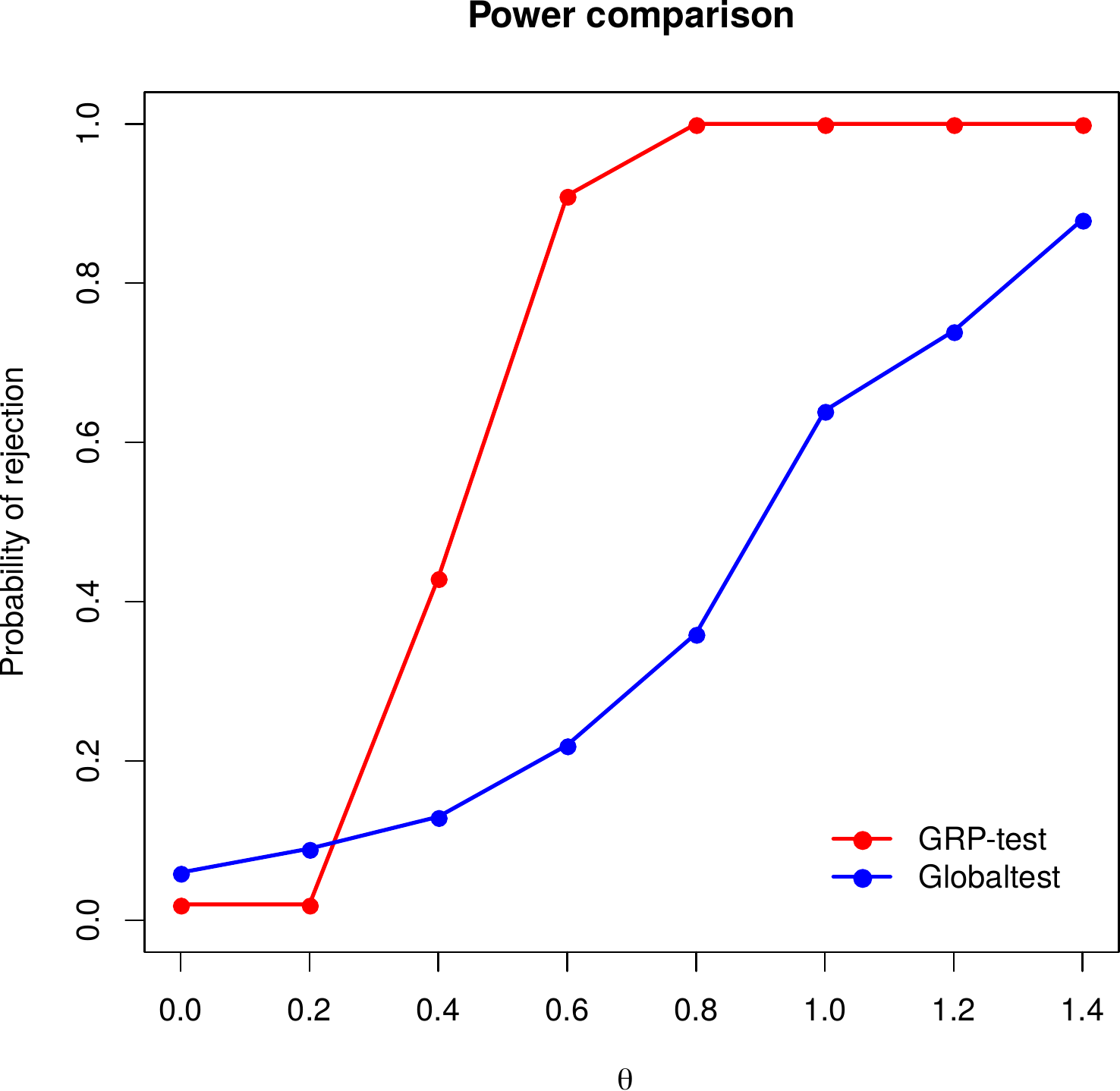}
\caption{Powers for a range of $\theta$ ($H_0$ false)} \label{fig:1b}
\end{subfigure}
\caption{
Comparison of GRP-test from Algorithm \ref{grptest} and globaltest \citep{goeman2004global}
 when testing $H_0: \beta_{0,G} = 0,$ with $G = \{5,\dots,100\}$, where
$ \beta_0=(1,1,1,1,\theta,0,\dots,0).$
 Plots show the empirical distribution functions of p-values under the null hypothesis (left) 
and powers of the test at significance level $\alpha = 0.05$ for a range of values of $\theta$ which are on the $x$-axis (right). 
The dimensions of the data are $p=800, n =600$.  The de-biased Lasso test is omitted since it is very computationally expensive.
} 
\label{fig:group2}
\end{figure}

\section{Discussion}
\label{sec:discuss}
In this work, we have introduced a new method for detecting conditional mean misspecification in generalized linear models based on predicting remaining signal in the residuals. For this task of prediction, we have a number of powerful machine learning methods at our disposal. Whilst these estimation performance of these methods is largely theoretically intractable, by employing sample-splitting and a careful debiasing strategy involving the square-root Lasso, our generalized residual prediction framework provides formal statistical tests with type I error control when used in conjunction with (essentially) arbitrary machine learning methods.

One requirement for these theoretical guarantees is that the sparsity of the true regression coefficient $\beta_0$ satisfies $s=o(\sqrt{n}/\log p) $, a condition that was also needed in related work on the de-biased Lasso \citep{zhang,vdgeer13,stanford1}. It would be very interesting if this could be relaxed to $s = o(n / \log p)$ for instance, which would encompass settings where the GLM Lasso estimate satisfies $\|\hat{\beta} - \beta_0\|_2 \to 0$ though $\|\hat{\beta} - \beta_0\|_1$ may be diverging.

Another interesting question is whether sample splitting can be completely avoided if we were able to obtain guarantees for an estimator $\hat w$ of a population direction $w$. Such alternatives to sample splitting could be particularly helpful for settings where there is dependence across the observations, such as in the case of generalized linear mixed effect models.

\vspace{0.3cm}

\textbf{Acknowledgements}: The authors would like to thank the Isaac Newton Institute for Mathematical Sciences for support and hospitality during the `Statistical Scalability' programme when work on this paper was undertaken, supported by EPSRC grant number EP/R014604/1.  JJ is supported by a Swiss National Science Foundation fellowship.  RDS is supported by an EPSRC First Grant and an EPSRC programme grant.  PB is supported by the European Research Council under the grant agreement No. 786461 (CausalStats -- ERC-2017-ADG).  RJS is supported by an EPSRC fellowship and an EPSRC programme grant.

\bibliography{gminf}

\section{Proofs}
\label{sec:proofs}

%

\subsection{Proofs for Section \ref{sec:theory}}
\subsubsection{Proofs for Section \ref{subsec:rptest}}

\begin{proof}[Proof of Theorem \ref{glmnonlinH0}]
This follows from the more general Theorem \ref{glmnonlin}, noting that under the null hypothesis, $\del = 0$.
The proof of Theorem \ref{glmnonlin} can be found in Section \ref{sec:power.proof}.
\end{proof}

\subsubsection{Proofs for Section \ref{subsec:group.testing}}

\begin{proof}[Proof of Proposition \ref{grouptest.prop}]
For $j \in G$, let $u_j := X_j - X_{-G}\gamma_{0,j}$ and define the sets 
$$\mathcal T_{1,j} := \left\{\|u_j^TD_{\beta_0} X_{-G}\|_\infty/n \leq 6 C_0 K{^2}\sqrt{\log (2p)/n}\right\},$$
$$\mathcal T_{2}   := \left\{ \hat\beta_{-G}\in\Theta_{-G}(\lambda,\beta_0)\right\},$$ 
$$\mathcal T_{3,j} := \left\{\| X_{j\cup G^c}^T (Y - \mu(X_{}\beta_{0}))\|_\infty/{n} \leq AK\sqrt{\log p/n}\right\},$$
where $C$ will be specified below.
We first derive a high-probability bound for the set 
$$\mathcal T:= \cap_{j\in G} (\mathcal T_{1,j} \cap \mathcal T_2 \cap \mathcal T_{3,j}).$$ 
By Lemma \ref{prob.bounds} in Section \ref{sec:appendix}, there exists a constant $A>0$ (in the definition of $\mathcal T_{3,j}$) such that 
\begin{equation}
  \label{eq:Tc}
  \mathbb{P}(\mathcal T^c) \leq 3/p.
\end{equation}
In the rest of the proof, we work on the event $\mathcal T.$ 
Define $D_{\beta_{0,-G}} := \mu'(X_{-G}\beta_{0,-G})$ (note that under $H_0$, it holds that $D_{\beta_{0,-G}} =D_{\beta_0}$). 
Now consider the decomposition
\begin{align}
\label{dTj}
T_j = {\hatwj ^T D_{\hat\beta_{-G}}^{-1} \bigl(Y - \mu(X_{-G}\hat\beta_{-G})\bigr)}
&=
  \wj ^T  D_{\beta_{0,-G}}^{-1} \bigl(Y - \mu(X_{-G}\beta_{0,-G})\bigr)
+\text{rem}_{1,j} 
                                                                            +\text{rem}_{2,j} \nonumber \\
&= \wj^T \varepsilon +\text{rem}_{1,j} +\text{rem}_{2,j},  
\end{align}
where
\begin{align*}
\text{rem}_{1,j} := { (  D_{\hat\beta_{-G}}^{-1} \hatwj  -  D_{\beta_{0,-G}}^{-1}\wj ) ^T  \bigl(Y - \mu(X_{-G}\beta_{0,-G})\bigr)}, \\
\text{rem}_{2,j} := { \hatwj ^T  D_{\hat\beta_{-G}}^{-1} \bigl(\mu(X_{-G}\beta_{0,-G}) - \mu(X_{-G}\hat\beta_{-G})\bigr)},
\end{align*}
and where we write $D_{\hat \beta_{-G}}:= \hat D$.   We first derive the rates of convergence for the estimator $\hat\gamma_j$ from Algorithm \ref{grptest} and then
proceed to bound the remainders. 
By Lemma \ref{rates.nodewise} in Section \ref{sec:appendix}, there exist positive constants $C_3$ and $C_4$ such that on 
$\mathcal T,$
we have
\begin{align}
\label{rate.gamma0}
\max_{j\in G}\|\hat\gamma_j - \gamma_{0,j}\|_1 &\leq C_3 K^2s\sqrt{\log p /n},\\
\label{rate.tau0}
\max_{j\in G}|\hat\tau_j^2 - \tau_j^2| &\leq C_4 K^2 \sqrt{s \log p/n},
\end{align}
where $\hat\tau_j^2 := \bigl\| D_{\hat\beta_{-G}}(X_j - X_{-G}\hat\gamma_j)\bigr\|_2^2/n$.

\vskip 0.3cm
\noindent
\textbf{Remainder $\text{rem}_{1,j}$:}
Let $k_j$ denote the index of column $X_j$ in the matrix $X_{j\cup G^c}$.
For $j \in G$, we define 
$$\hat\Gamma_j := (-\hat\gamma_{j,1},\dots,-\hat\gamma_{j,k_j-1},1,-\hat\gamma_{j,k_j+1},\dots,\hat\gamma_{j,|G^c|})^T$$
and its population-level counterpart $\Gamma_{0,j}$ based on $\gamma_{0,j}$.
First note that
$$D_{\hat\beta_{-G}}^{-1} \hatwj = (X_{j}-X_{-G}\hat\gamma_j )/(\sqrt{n}\hat\tau_j) =  X_{j\cup G^c} \hat\Gamma_{j}/(\sqrt{n}\hat\tau_j),$$
and similarly,
\begin{equation}
\label{Eq:Dbeta0}
D_{\beta_{0,-G}}^{-1} \wj = (X_{j}-X_{-G}\gamma_{0,j} )/(\sqrt{n}\tau_j)
=  X_{j\cup G^c}  \Gamma_{0,j}/(\sqrt{n}\tau_j).
\end{equation}
Therefore, we obtain using H\"older's inequality that
\begin{align*} 
|\text{rem}_{1,j}| 
&=
\bigl|{ (    \hat\Gamma_j /\hat\tau_j
- 
   \Gamma_{0,j} /\tau_j ) ^T 
	X_{j\cup G^c}^T (Y - \mu(X_{}\beta_{0}))}\bigr|/\sqrt{n}
\\
&\leq 
\|{     \hat\Gamma_j /\hat\tau_j
- 
   \Gamma_{0,j} /\tau_j \|_1
	\bigl\| X_{j\cup G^c}^T (Y - \mu(X_{}\beta_{0}))}\bigr\|_\infty/\sqrt{n}.
\end{align*}
On the set $\cap_{j\in G} \mathcal T_{3,j}\subseteq \mathcal T,$ we have
	\begin{eqnarray}
	\label{r1j}
\max_{j\in G}|\text{rem}_{1,j}|  \leq \max_{j\in G} \|  \hat\Gamma_j /\hat\tau_j -   \Gamma_{0,j} /\tau_j \|_1AK\sqrt{\log p}.
	\end{eqnarray}
Next we bound 
$\|    \hat\Gamma_j /\hat\tau_j
- 
   \Gamma_{0,j} /\tau_j \|_1.$
Firstly, we can decompose  and bound
	\begin{align}
\nonumber
\|    \hat\Gamma_j /\hat\tau_j
- 
   \Gamma_{0,j} /\tau_j \|_1 
	&= 
	\|    (\hat\Gamma_j -\Gamma_{0,j})/\hat\tau_j  + \Gamma_{0,j} (1/\hat\tau_j -1/\tau_j)  \|_1
	\\
	\label{gamma.decomp}
	&\leq 
	\|\hat\gamma_j -\gamma_{0,j}\|_1/\hat\tau_j + \|\Gamma_{0,j}\|_1|1/\hat\tau_j -1/\tau_j|.
	\end{align}
	Now note that
		\begin{align}
		\nonumber
\tau_j^2 &=
 \mathbb E(X_j - X_{-G}\gamma_{0,j})^T D_{\beta_0}^2(X_j - X_{-G}\gamma_{0,j})/n
\\\nonumber
&=
	\mathbb EX_j ^TD_{\beta_0}^2 (X_j - X_{-G}\gamma_{0,j})/n
	\\\label{deq}
	&=
	 \mathbb E X_j^T D_{\beta_0}^2  X_{j\cup G^c} \Gamma_{0,j}/n.
		\end{align}
	Combining \eqref{deq} with the fact that 
	$$\mathbb EX_{-G}^TD_{\beta_0}^2  X_{j\cup G^c} \Gamma_{0,j}/n=0,$$
we obtain		that
$$\mathbb EX_{j\cup G^c}^TD_{\beta_0}^2  X_{j\cup G^c} \Gamma_{0,j}/n=\tau_j^2e_{k_j},$$
where $e_\ell$ denotes the $\ell$-th standard basis vector in $\mathbb{R}^{p - |G|+1}$. 
Define $\Sigma_{\beta_0,j\cup G^c} := \mathbb E X_{j\cup G^c}^T D_{\beta_0}^2 X_{j\cup G^c}/n$ and note from Condition \ref{grptest.cond}(iv) that $\Sigma_{\beta_0,j\cup G^c}$ is invertible. Consequently,
$\Gamma_{0,j}/\tau_j^2 = \Sigma_{\beta_0,j\cup G^c}^{-1} e_{k_j}$, so $1/\tau_j^2 = (\Sigma_{\beta_0,j\cup G^c}^{-1})_{k_j k_j}.$ 
Further note that  
$$\max_{j\in G}\tau_j^2 =\max_{j\in G} 1/(\Sigma_{\beta_0,j\cup G^c}^{-1})_{k_jk_j} \leq \max_{j\in G}(\Sigma_{\beta_0,j\cup G^c})_{k_jk_j}\leq \|\Sigma_0\|_\infty\leq C_e.$$
Therefore, by Condition \ref{grptest.cond}(iv), we have 
$$\max_{j\in G}\|\Gamma_{0,j}\|_2 
=
\max_{j\in G}\|\Sigma_{\beta_0,j \cup G^c}^{-1} e_j\|_2 \tau_j^2
\leq 
\max_{j\in G} \Lambda^{-1}_{\min}(\Sigma_{\beta_0,j \cup G^c})\tau_j^2
\leq 
\Lambda^{-1}_{\min}(\Sigma_0)\max_{j\in G} \tau_j^2
\leq 
 C_e^2.$$ 
Consequently, and by sparsity of $\gamma_{0,j}$ assumed in Condition \ref{grptest.cond}(v), it follows that
$$\max_{j\in G}\|\Gamma_{0,j}\|_1\leq \sqrt{s+1} \max_{j \in G}\|\Gamma_{0,j}\|_2\leq C_e^2\sqrt{s+1}.$$
Moreover,
\begin{align}
\label{Eq:taulower}
  \min_{j\in G}\tau_j^2 &= \min_{j\in G} \frac{1}{\bigl(\Sigma_{\beta_0,j\cup G^c}^{-1}\bigr)_{k_jk_j}} \geq \min_{j\in G} \frac{1}{\bigl\|\Sigma_{\beta_0,j\cup G^c}^{-1}\bigr\|_{\mathrm{op}}} = \min_{j\in G} \bigl\|\Sigma_{\beta_0,j\cup G^c}\bigr\|_{\mathrm{op}} \nonumber \\
  &\geq 
\min_{j\in G} \Lambda_{\min}\bigl(\Sigma_{\beta_0,j\cup G^c}\bigr) \geq \Lambda_{\min}\bigl(\Sigma_0\bigr) \geq \frac{1}{C_e}.
\end{align}
By Condition~2(v), we can find $n_0 \in \mathbb{N}$ such that $a_n \leq 1/(2C_4C_e)$ for $n \geq n_0$.  Then from~\eqref{rate.tau0} and~\eqref{Eq:taulower}, we obtain on $\mathcal{T}$ that for $n \geq n_0$,
\[
  \min_{j\in G} \frac{\hat\tau_j^2}{\tau_j^2} \geq 1 - \max_{j\in G} \frac{|\hat\tau_j^2 - \tau_j^2|}{\tau_j^2} \geq 1 - C_4C_eK^2\sqrt{s \log p/n} \geq \frac{1}{2}.
\]
Using \eqref{gamma.decomp}, we conclude that on $\mathcal T$,
 \begin{eqnarray}
	\label{r1jp}
	\max_{j\in G}\|    \hat\Gamma_j /\hat\tau_j -   \Gamma_{0,j} /\tau_j \|_1 \leq C_e C_3 K^2s\sqrt{\log p/n} + C_e^2C_4K^2\sqrt{s\log p/n}.
	\end{eqnarray}
Consequently, combining  \eqref{r1j} and \eqref{r1jp}, there exists a constant such that it holds  on $\mathcal T$  
\begin{eqnarray}
\label{dTj1}
\max_{j\in G}|\text{rem}_{1,j}|  = \mathcal{O}\bigl(K^3 s\, {\log p}/\sqrt{n}\bigr) = \mathcal{O}(a_n).
\end{eqnarray}

\vskip 0.3cm
\noindent
\textbf{Remainder $\text{rem}_{2,j}$:}
By the mean value theorem, for each $i=1,\dots,n$, there exists $\alpha_i\in[0,1]$ such that 
$$\mu(x_{i,-G}^T\beta_{0,-G}) - \mu(x_{i,-G}^T\hat\beta_{-G}) = \mu'(x_{i,-G}^T\tilde\beta_{(i)})x_{i,-G}^T(\beta_{0,-G} - \hat\beta_{-G}),$$ 
where $\tilde\beta_{(i)} := \alpha_i \hat\beta_{-G} + (1-\alpha_i)\beta_{0,-G}.$ 
 Consequently, using H\"older's inequality and the KKT conditions from the optimization problem in Algorithm  \ref{grptest}, we obtain 
\begin{align} \nonumber
|\text{rem}_{2,j}| 
&= \sum_{i=1}^n \bigl|\hat w_{j,i} D^{-1}_{\hat\beta_{-G},ii}\mu'(X_{i,-G}^T\tilde \beta_{(i)})X_{i,-G}^T(\beta_{0,-G}-\hat\beta_{-G})\bigr| \\\nonumber
&\leq
\bigl|\hat w_j^T D^{-1}_{\hat\beta_{-G}} D^2_{\hat\beta_{-G}} X_{-G}(\beta_{0,-G}-\hat\beta_{-G})\bigr|+
\text{rem}_{3,j}
\\\nonumber
&\leq
\|\hat w_j^T D^{}_{\hat\beta_{-G}} X_{-G}\|_\infty \|\beta_{0,-G}-\hat\beta_{-G}\|_1 +
\text{rem}_{3,j}
\\\label{dTj2}
&= \mathcal{O}\bigl(K\sqrt{\log p} \, s\sqrt{\log p/n}\bigr) +
\text{rem}_{3,j},
\end{align}
where  
\begin{align*} 
\text{rem}_{3,j} &:=
\biggl|\sum_{i=1}^n \hat w_{j,i} D^{-1}_{\hat\beta_{-G},ii} \bigl(\mu'(x_{i,-G}^T\tilde \beta_{(i)}) -\mu'(x_{i,-G}^T\hat\beta_{G})\bigr)x_{i,-G}^T(\beta_{0,-G}-\hat\beta_{-G}) \biggr|
\\
&\phantom{:}\leq
L \sum_{i=1}^n| \hat w_{j,i}|\bigl| D^{-1}_{\hat\beta_{-G},ii}\bigr| \bigl\{x_{i,-G}^T(\beta_{0,-G}-\hat\beta_{-G}) \bigr\}^2.
\end{align*}
By Condition \ref{grptest.cond}(ii), we have $\|X_{-G}\gamma_{0,j} \|_\infty\leq K$, so we obtain 
\begin{align}
\nonumber
\max_{j\in G}\|w_{j}\|_\infty &=
\max_{j\in G}\| D_{\beta_0}(X_j -X_{-j}\gamma_{0,j})\|_\infty/(\sqrt{n}\tau_j) 
\\\nonumber
&\leq 
\max_{j\in G}
\frac{1}{\sqrt{n}\tau_j}\bigl\{\|D_{\beta_0}X_j \|_\infty + \|D_{\beta_0}X_{-G}\gamma_{0,j}\|_\infty\bigr\}
\\\label{bound_on_w}
&\leq \frac{2C_eC_0^{1/2} K }{\sqrt{n}}.
\end{align}

\noindent
Since  $\mu'$ is Lipschitz and $\|\exi\|_\infty \leq K$, we obtain that on $\mathcal T$, 
\begin{eqnarray}
\label{Diirates}
\bigl|D_{\hat{\beta}_{-G},ii} ^2  - D_{\beta_0,ii}^2 \bigr| 
\leq
 L|x_{i,-G}^T(\hat\beta_{-G}-\beta_{0,-G})|
\leq  
 L\|x_i\|_\infty \|\hat\beta_{-G}-\beta_{0,-G}\|_1 \leq  L K s\lambda.
\end{eqnarray}

\noindent
Thus 
\begin{eqnarray}
\label{Diihatrates}
\max_{i=1,\dots,n}\bigl| D_{\hat\beta_{-G},ii}\bigr| \asymp 1
\end{eqnarray}
by Condition~2(iii) and Condition~2(v). 
Therefore, on $\mathcal{T}$,
it follows that
\begin{align*}
\bigl\|D_{\hat\beta_{-G}}(X_j - X_{-G}\hat\gamma_{j}) \bigr\|_\infty
 &\leq 
\bigl\|D_{\hat\beta_{-G}}(X_j - X_{-G}\gamma_{0,j})\bigr\|_\infty + \bigl\|D_{\hat\beta_{-G}}X_{-G}(\hat\gamma_{j} - \gamma_{0,j})\bigr\|_\infty
\\
&\leq
\mathcal O(K) + \|D_{\hat\beta_{-G}}X_{-G}\|_\infty \|\hat\gamma_{j} - \gamma_{0,j}\|_1
\\
&= \mathcal O(K) + \mathcal O(K ^3 s\sqrt{\log p/n}) = \mathcal O(K).
\end{align*}
Consequently, on $\mathcal{T}$,
\begin{eqnarray}
\label{hatwMB}
\max_{j\in G}\|\hat w_{j}\|_\infty = 
\max_{j\in G}
\|D_{\hat\beta_{-G}}(X_{j} - X_{-G}\hat\gamma_{j})\|_\infty/(\sqrt{n}\hat\tau_j) = \mathcal O\left(\frac{K}{\sqrt{n}}\right).
\end{eqnarray}
Then using the result above, on $\mathcal T$, we obtain
\begin{align}
  \label{dTj3}
\nonumber 
\text{rem}_{3,j}
&\leq 
L\sum_{i=1}^n| \hat w_{j,i}|\bigl| D^{-1}_{\hat\beta_{-G},ii}\bigr| \bigl\{x_{i,-G}^T(\beta_{0,-G}-\hat\beta_{-G}) \bigr\}^2 \\
&= \mathcal{O}\bigl(K\sqrt{n}s\lambda^2\bigr) = \mathcal{O}(a_n).
\end{align}
The result follows from~\eqref{dTj}, together with the bounds~\eqref{eq:Tc}, \eqref{dTj1},~\eqref{dTj2} and \eqref{dTj3}.
\end{proof}

\begin{proof}[Proof of Theorem \ref{mbcor}]
We want to show that the quantiles of our test statistic for group testing, 
$$T:=\max_{j\in G} \left|\sum_{i=1}^n \hat w_{j,i} \hatRGi\right| $$ 
can be approximated by quantiles of its bootstrapped version 
$$W := T^{b} = \max_{j\in G} \left|\sum_{i=1}^n \hat w_{j,i} \hatRGi e_i^ b\right|,$$
where  $(e_i^b)_{i=1}^n$ is a sequence of independent and identically distributed $\mathcal N(0,1)$ random variables.
We can apply Theorem \ref{corMB} from Section \ref{sec:MB} together with Proposition \ref{grouptest.prop}. 
Adopting the notation of Theorem \ref{corMB} we let 
$$
T_0:=\max_{j\in G} \left|\sum_{i=1}^n  w_{j,i} \varepsilon_i\right|
,$$
and
$$
W_0 :=\max_{j\in G} \left|\sum_{i=1}^n w_{j,i} \varepsilon_i e_i^b\right|
.$$
Note that the maxima above can be rewritten without the absolute values using the fact that for any $a\in \mathbb R$ it holds that 
$|a| =\max\{a,-a\}$. 
Thus for $i=1,\ldots,n$ and $j\in G$ we let $x_{ij}: = \sqrt{n}w_{j,i}\varepsilon_i $
and $\hat x_{ij}:= \sqrt{n}\hat w_{j,i}\hatRGi$.
Moreover, for $i=1,\dots,n$ and $j\in G$ we also define $x_{i(j+|G|)}:= -x_{ij}$
and $\hat x_{i(j+|G|)}:= -\hat x_{ij}$.
We will apply Theorem \ref{corMB} with
 $x_{ij}$ and $\hat x_{ij}$ 
where $i=1,\dots,n$ and $j\in H:= G\cup \{j+|G|: j\in G\}.$


We now check that conditions \eqref{MB1}, \eqref{MB2}, \eqref{TandW}, \eqref{MB3} and \eqref{MB4} needed for Theorem \ref{corMB} are satisfied.
\\
{\bf Checking condition \eqref{MB1}}:\\
First, by the tower property, we have 
$$\s \mathbb E x_{ij}^2 = \sum_{i=1}^n\mathbb E \bigl\{w_{j,i}^2 \mathbb E(\varepsilon_i^2|X)\bigr\} .$$
{By Condition \ref{grptest.cond}(iii), it follows that $c_0\leq D_{\beta_0,ii}^2 \leq C_0$.} Consequently,  and using Condition \ref{grptest.cond}(i),
 there exist constants $c,C$ such that $c
\leq \mathbb E(\varepsilon_i^2|X)=\mathbb E(\eta_i^2/D_{\beta_0,ii}^2|X) \leq C.$
It follows that 
\begin{equation}
\label{eb1}
c \sum_{i=1}^n\mathbb E w_{j,i}^2
 \leq 
\s \mathbb E x_{ij}^2 
=
 \s n\mathbb E \bigl\{w_{j,i}^2 \mathbb E(\varepsilon_i^2|X)\bigr\} \leq 
C \sum_{i=1}^n\mathbb E w_{j,i}^2.
\end{equation}
Now recalling that 
$$w_j = D_{\beta_0} (X_j - X_{-G}\gamma_{0,j}) /(\sqrt{n} \tau_j),$$ 
we see that 
\begin{equation}
\label{eb2}
\sum_{i=1}^n \mathbb E w_{j,i}^2 = \mathbb E \|w_j\|_2^2 = 1.
\end{equation}
Therefore, combining \eqref{eb1} and \eqref{eb2}, we obtain
$c\leq \s \mathbb E x_{ij}^2 \leq C$ for $j \in H$, as required.
\\
{\bf Checking condition \eqref{MB2}:}\\
Recall from~\eqref{bound_on_w} that we have the deterministic bound
$$\max_{i,j}|w_{j,i}| =\mathcal O(K/{\sqrt{n}}).$$ 
Using this bound, we will now check that for suitable $B_n \asymp K$,
\begin{equation}
  \label{Eq:Cherno}
  \max_{k=1,2}\frac{1}{n}\sum_{i=1}^n \mathbb E |x_{ij}|^{2+k}/B_n^k
+\mathbb E\max_{j\in H}|x_{ij}/B_n|^{4}\leq
4.
\end{equation}
First observe that
\begin{align*}
\max_{k=1,2}\frac{1}{n}\sum_{i=1}^n \mathbb E |x_{ij}|^{2+k}/B_n^k 
&\lesssim 
 \max_{k=1,2} n^{k/2} \sum_{i=1}^n \mathbb E w_{j,i}^{2+k} /B_n^k 
\\
&\lesssim  \max_{k=1,2} n^{k/2} \left(\frac{K}{\sqrt{n}}\right)^k\sum_{i=1}^n \mathbb E w_{j,i}^{2} /B_n^k 
\\
&\lesssim  \max_{k=1,2} (K /B_n)^k ,
\end{align*}
and 
$$\mathbb E\max_{j\in H}|x_{ij}/B_n|^{4} \lesssim (K/B_n)^4.$$
Taking sufficiently large  $B_n\asymp K$, we can therefore guarantee that~\eqref{Eq:Cherno} holds, as required.

\noindent
{\bf Checking condition \eqref{TandW}:}\\
By Proposition \ref{grouptest.prop}, there exists a constant $C' > 0$ such that
$$\mathbb{P}(|T-T_0| > \zeta_1) \leq \mathbb{P}\left(\max_{j\in H} |T_j - w_j^T \varepsilon| > \zeta_1'\right) \leq \zeta_2',$$
for $\zeta_1' := C'K^3s{\log p}/\sqrt{n} $ and $\zeta_2' := 3/p$.  Next note that 
$$|W-W_0|\leq \max_{j\in H} \left| \frac{1}{\sqrt{n}}\sum_{i=1}^n (x_{ij} - \hat x_{ij})e_i^b\right|.$$
Now conditional on $(x_{ij})_{i=1}^n$ and $(\hat x_{ij})_{i=1}^n$ we have that $Z_j:= \frac{1}{\sqrt{n}}\sum_{i=1}^n (x_{ij} - \hat x_{ij})e_i^b\sim \mathcal N(0,\sigma_j^2)$ where $\sigma_j^2:= \frac{1}{n}\sum_{i=1}^n (x_{ij} - \hat x_{ij})^2.$ 
Therefore, 
$$\mathbb E_e |W-W_0| \leq \mathbb E_e \max_{j\in H} |Z_j|\leq \sqrt{2\log(|H|+1)}\max_{j\in H}\sigma_j 
.$$
Then it follows by Borell's inequality for any $t>0$,
$$\mathbb{P}_e\biggl(|W-W_0| > t +\mathbb E_e \max_{j\in H} |Z_j|\biggr) \leq \mathbb{P}_e\biggl(\max_{j\in H} |Z_j| >t +\mathbb E_e \max_{j\in H} |Z_j|\biggr) \leq e^{-t^2/(2\max_j\sigma_j^2)}.$$
Taking $t := \sqrt{2\log (2p+1)}\max_{j\in H}\sigma_j $ and noting that $|H|=2|G| \leq 2p,$
we obtain 
\begin{eqnarray}
\label{tec1}
\mathbb{P}_e\left(|W-W_0| > 2\sqrt{2\log (2p+1)}\max_{{j\in H}}\sigma_j  \right) \leq e^{-\log ({2}p+1)} \leq 1/p.
\end{eqnarray}
Denote $\Delta_2:= \max_{j\in H} \sigma_j^2.$
Then by Lemma \ref{MBdelta} in Appendix \ref{sec:appendix} there exists a constant $C'' > 0$
 such that 
\begin{eqnarray}
\label{tec2}
\mathbb{P}\bigl(\Delta_2 \geq  (C'')^2 K^6 s^2\lambda^2\bigr) \leq  4/p.
\end{eqnarray}
Therefore, combining \eqref{tec1} and \eqref{tec2}
$$\mathbb{P}\left(\mathbb{P}_e(|W-W_0| > \sqrt{2\log (2p)}C''K^3s\lambda
 ) > 1/p\right)\leq  4/p.$$
So we can take 
$$\zeta_1:= \max\{\sqrt{2\log (2p)}C''K^3s\lambda,\;\zeta_1'\} \asymp K^3s\log p/\sqrt{n},$$
 and $\zeta_2:= 4/p$ (in applying Theorem \ref{corMB}).
\\
{\bf Checking conditions \eqref{MB3} and \eqref{MB4}}:\\
Finally, by assumption \eqref{raten}, there exist constants $C_2',c_2 > 0$ such that
$$\zeta_1 \log (2|G|) + \zeta_2 \leq C_2' n^{-c_2},$$ and $$B_n^4 \log (2|G|n)^7 /n \leq C_2' n^{-c_2},$$
where $B_n\asymp K$.
\end{proof}

\subsubsection{Proofs for Section \ref{sec:power}}
\label{sec:power.proof}

\begin{proof}[Proof of Theorem \ref{glmnonlin}]
In this proof, it is convenient to write $\mathbb{P}_A(\cdot)$ and $\mathbb{E}_A(\cdot)$ as shorthand for $\mathbb{P}(\cdot|Z_A)$ and $\mathbb{E}(\cdot|Z_A)$ respectively.  Consider the decomposition
\[
\label{dec1}
T=
\hat w_A^T \hatRA =\hat w_A^T \hat D_{A}^{-1} \bigl(Y-\bmu(X\hat\beta)\bigr) 
=
\phi + \del + \er,
\]
where
$$\mathbf \phi := { \hat w_A^T \hat D_{A}^{-1}\bigl\{Y - \bmu\bigl(f_0(X)\bigr)\bigr\} }
,$$
$$\del := { \hat w_A^T \hat D_{A}^{-1} \bigl\{\bmu\bigl(f_0(X)\bigr) - \mu(X\beta_0)\bigr\}}
,$$
$$\er := {\hat w_A^T \hat D_{A}^{-1} \bigl\{\mu(X\beta_0) - \mu(X\hat\beta)\bigr\}}
.$$
There are three terms:
\begin{enumerate}[I.]
\item
The term $\phi$ is the pivot.  By the Berry--Esseen theorem, we will show below that (after scaling) it is well approximated by a normal random variable.
\item
The term $\del$ captures the deviation from the null hypothesis. If the null hypothesis is true, then $\del=0$. 
\item
The term $\er$ is a stochastic remainder term, for which we will develop a probabilistic bound below.
\end{enumerate} 
Let $\sigma := \|D_Y\hat{D}_A^{-1}\hat{w}_A\|_2$.  Then
\begin{align}
  \label{Eq:Decomp}
  \sup_{z\in \mathbb{R}} |\mathbb{P}_A(\phi + \er \leq z) - \Phi(z)| \leq \sup_{z\in \mathbb{R}} |\mathbb{P}_A(\phi + \er \leq z ) - \Phi(z/\sigma)| + \sup_{z\in \mathbb{R}}|\Phi(z/\sigma) - \Phi(z)|. 
\end{align}
Now, for any $\epsilon > 0$ and $z \in \mathbb{R}$,
\begin{align}
  \label{Eq:Decomp2}
  \mathbb{P}_A(\phi + \er \leq z) &\leq \mathbb{P}_A(\phi \leq z + \epsilon) + \mathbb{P}_A(|\er| >\epsilon) \nonumber \\
                                                        &\leq \Phi(z/\sigma) + \sup_{x \in \mathbb{R}} |\mathbb{P}_A(\phi \leq x) - \Phi(x/\sigma)| + \frac{\epsilon}{\sqrt{2\pi}\sigma} + \mathbb{P}_A(|\er| >\epsilon).
\end{align}
Similarly,
\begin{align}
  \label{Eq:Decomp3}
  \mathbb{P}_A(\phi + \er \leq z) &\geq \mathbb{P}_A(\phi \leq z - \epsilon) - \mathbb{P}_A(|\er| >\epsilon) \nonumber \\
                                                        &\geq \Phi(z/\sigma) - \sup_{x \in \mathbb{R}} |\mathbb{P}_A(\phi \leq x) - \Phi(x/\sigma)| - \frac{\epsilon}{\sqrt{2\pi}\sigma} - \mathbb{P}_A(|\er| >\epsilon).
\end{align}
Therefore, combining \eqref{Eq:Decomp},~\eqref{Eq:Decomp2} and~\eqref{Eq:Decomp3}, we find that
\begin{equation}
\label{eee4}
\sup_{z \in \mathbb{R}} \left|\mathbb{P}_A\left(T  - \del \leq z\right) - \Phi(z)\right| 
\leq \sup_{x \in \mathbb{R}} |\mathbb{P}_A(\phi \leq x) - \Phi(x/\sigma)| + \frac{\epsilon}{\sqrt{2\pi}\sigma} + \mathbb{P}_A(|\er| >\epsilon) + |\mathrm{rem}_0|,
\end{equation}
where $\mathrm{rem}_0 := \frac{1}{\sqrt{2\pi}}\bigl(\frac{1}{\sigma}-1\bigr)$.

\noindent
\textbf{Bound for the pivot.}\\
\noindent
We apply the Berry--Esseen theorem for non-identically distributed summands to $Z_0 := \phi/\sigma$. 
Note that
$$Z_0=
\frac{\sum_{i=1}^n \hat w_{A,i} \hat D_{A,ii}^{-1} \bigl\{Y_i - \mu\bigl(f_0(\exi)\bigr)\bigr\}}{ \sqrt{\sum_{i=1}^n \hat w_{A,i}^2 \hat D_{A,ii}^{-2}D_{Y,ii}^2 } } .
$$
For $i=1,\dots,n$, denote
$U_i := \hat w_{A,i} \hat D_{A,ii}^{-1} \bigl\{ Y_i - \mu\bigl(f_0(\exi)\bigr)\bigr\}$ 
and 
$\sigma_i^2:=\text{Var}(U_i|{Z_A}) = \hat w_{A,i}^2 \hat D_{A,ii}^{-2} D_{Y,ii}^2.$ 
Since $\mathbb E_A(U_i)=0$, the Berry--Esseen theorem \citep{esseen1942liapunov} yields that
$$\sup_{x \in \mathbb{R}} |\mathbb{P}_A(\phi \leq x) - \Phi(x/\sigma)| = \sup_{x\in\mathbb R}|\mathbb{P}_A( Z_0 \leq x) - \Phi(x)| \leq 
C_1 \frac{\sum_{i=1}^n \mathbb{E}_A\bigl(|U_i|^3\bigr)}{\bigl\{\sum_{i=1}^n \sigma_i^2\bigr\}^{3/2}}
,$$
where $C_1 > 0$ is a universal constant. 
Hence using Condition~1, 
\begin{equation}
\label{bebound}
\sup_{x\in\mathbb R}|\mathbb{P}_A( \phi \leq x) - \Phi(x/\sigma)| 
\leq 
\frac{C_1C_{\varepsilon}}{\sigma^3}  
\sum_{i=1}^n |\hat w_{A,i} D_{Y,ii}^{} \hat D_{A,ii}^{-1}|^3 \leq
\frac{C_1  C_{\varepsilon}}{\sigma}\|D_{Y}^{}  \hat D_{A}^{-1}  \hat w_{A}\|_\infty
.
\end{equation}
\textbf{Bound for} $\erzero$.
To bound $|\sigma-1|$,  
we first bound $\|D_Y^2 \hat D_A^{-2} - I \|_\infty$. First, since $\hat\beta_A \in \Theta(\lambda,\beta_0,X_A)$, $\mu'$ is Lipschitz and $\|\exi\|_\infty \leq K_X$, we obtain
\begin{align*}
|\hat D_{A,ii} ^2  -D_{\beta_0,ii}^2 | 
&=
|\mu'(x_i^T\hat\beta_A) - \mu'(x_i^T\beta_0)| 
\leq 
 L|x_i^T(\hat\beta_A-\beta_0)|
\\
&\leq  
 L\|x_i\|_\infty \|\hat\beta_A-\beta_0\|_1 \leq 
 L K_X s\lambda_A.
\end{align*}
Therefore, $\hat D_{A,ii}^2 \geq D_{\beta_0,ii}^2/2$ under the condition $6\domin^{-2}L K_X s\lambda_A \leq 1/2.$
This then implies that 
$$|\eta_{i,2}|:= |D_{\beta_0,ii}^2/\hat D_{A,ii} ^2  -1 | \leq 2\domin^{-2} L K_X s\lambda_A.$$
Next, by assumption, we have 
$$|\eta_{i,1}| := |D_{Y,ii}^2  D_{\beta_0,ii}^{-2} - 1| \leq 2\domin^{-2} L K_X s\lambda_A,$$ 
(note that under $H_0,$ it holds that $D_Y = D_{\beta_0}$, so $|\eta_{i,1}| = |D_{Y,ii}^2  D_{\beta_0,ii}^{-2} - 1|=0$ and the required bound trivially holds).  Then 
\begin{align*}
\|D_Y^2 \hat D_A^{-2} - I \|_\infty &=
\max_{i=1,\dots,n} \bigl|D_{Y,ii}^2 \hat D_{A,ii}^{-2} - 1\bigr| 
\\
&=
\max_{i=1,\dots,n}
\left|
\left(D_{Y,ii}^2/ D_{\beta_0,ii}^2 -1 + 1\right)
\left(D_{\beta_0,ii}^2/\hat D_{A,ii}^2 -1 + 1\right) - 1
\right|,
\\
&= 
\max_{i=1,\dots,n}
\left|
\left(\eta_{i,1} + 1\right)
\left(\eta_{i,2} + 1\right) - 1
\right|,
\\
&= 
\max_{i=1,\dots,n}
\left|
\eta_{i,1}\eta_{i,2} + \eta_{i,1} + \eta_{i,2} 
\right|,
\\
&\leq
 4\domin^{-2} L K_X s\lambda_A + (2\domin^{-2} L K_X s\lambda_A )^2.
\end{align*}
Finally, by assumption $6\domin^{-2} L K_X s\lambda_A\leq 1/2$ and from the last display 
and Lemma \ref{Dconvergence}, it follows that 
$$|\sigma-1|\leq 
\|D_Y^2 \hat D_A^{-2} - I \|_\infty \leq 6\domin^{-2} L K_X s\lambda_A =: r_{\erzero}.$$
We also see from this calculation that under our conditions, $\sigma \geq 1/2$ and $\|D_Y \hat D_A^{-1}\|_\infty \leq 2$.


\vskip 0.5cm
\noindent
\textbf{Bound for $\er$.}  
A Taylor expansion of $\mu$ yields
$$\mu(x_i^T\beta_0) - \mu(x_i^T\hat\beta) = \mu'(x_i^T\tilde \beta_{(i)})x_i^T(\beta_0-\hat\beta),$$
where $\tilde \beta_{(i)} = \alpha_{i} \beta_0 + (1-\alpha_{i})\hat\beta$ for some $\alpha_{i} \in [0,1]$.  
Let $ D_{\tilde\beta}$ denote a diagonal matrix with diagonal entries 
$ D_{\tilde\beta,{ii}}^2:=\mu'(x_i^T\tilde \beta_{(i)}).$ 
Then
$$\er = \hat w_A^T \hat D_A^{-1}\bigl\{\mu(X\beta_0) - \mu(X\hat\beta)\bigr\}=\hat w_A^T \hat D_A^{-1} D_{\tilde\beta}^2X(\beta_0-\hat\beta)=:
\erone + 
\ertwo
,$$
where 
$$\erone = \hat w_A^T  \hat D_{A} X_{}(\beta_0-\hat\beta),$$
and  $$\ertwo = {\hat w_A^T \hat D_A^{-1} ( D_{\tilde\beta}^2 -  \hat D_{A})X_{}(\beta_0-\hat\beta)}.$$
Using  H\"older's inequality together with 
$\hat\beta\in \Theta(\lambda,\beta_0,X)$ and the KKT conditions of the square-root Lasso \eqref{sqrt.lasso}, 
we have 
$$|\erone|= \|\hat w_A^T \hat D_{A} X_{}(\beta_0-\hat\beta) \|_\infty 
\leq 
\|\hat w_A^T \hat D_A X_{}\|_\infty \|\beta_0-\hat\beta\|_1 
\leq 
\lambda_{\text{sq}}\sqrt{n} s\lambda .$$
To bound the second term, $\ertwo$, first 
by the Lipschitz property of $\mu'$ we have
\begin{equation}
\label{lipbound}
|\mu'(x_i^T\tilde\beta_{(i)})  - \mu'(x_i^T\hat\beta_A)| 
\leq L |x_i^T(\tilde \beta_{(i)} - \hat\beta_A)|
\leq L \bigl( |x_i^T(\hat\beta -\beta_0)| + |x_i^T(\beta_0 - \hat\beta_A)|\bigr)
,
\end{equation}
Then, on the event that $\hat\beta\in \Theta(\lambda,\beta_0,X)$,
\begin{align*}
\nonumber
|\ertwo|
&:=
\bigl|\hat w_A^T \hat D_A^{-1} ( D_{\tilde\beta}^2- \hat D_{A}^2) X (\beta_0 - \hat{\beta})\bigr|
\\\nonumber
& = 
\biggl| \sum_{i=1}^n  \hat w_{A,i} \hat D_{A,ii}^{-1}\Bigl\{\mu'(x_i^T\tilde \beta_{(i)}) - \mu'(x_i^T\hat\beta_A)\Bigr\} x_i^T (\beta_0 - \hat{\beta})\biggr|
\\\nonumber
& \leq  
 \sum_{i=1}^n {{\bigl|\hat w_{A,i} \hat D_{A,ii}^{-1}\bigr|} }
\bigl|\mu'(x_i^T\tilde \beta_{(i)}) - \mu' (x_i^T\hat\beta_A)\bigr| \bigl| x_i^T (\beta_0 - \hat{\beta})\bigr| 
\\\nonumber
& \stackrel{\eqref{lipbound}}{\leq} 
L\sum_{i=1}^n {|\hat w_{A,i} \hat D_{A,ii}^{-1}|}  \left( \frac{3}{2}|x_i^T(\beta_0 - \hat{\beta})|^2 + \frac{1}{2}|x_i^T(\beta_0 - \hat\beta_A)|^2\right)
\\\nonumber
& \leq  
L \max_{i=1,\dots,n}{\bigl|\hat w_{A,i} \hat D_{A,ii}^{-1}\bigr|}\sum_{i=1}^n \left( \frac{3}{2}|x_i^T(\beta_0 - \hat{\beta})|^2 + \frac{1}{2}|x_i^T(\beta_0 - \hat\beta_A)|^2\right)
\\
&\leq 
4L \domin^{-1} \|\hat w_A\|_\infty {s}(\lambda^2 + \lambda_{A}^2) {n}.
\end{align*}
Therefore,
\label{bo1}
$\mathbb{P}_A(|\er | \geq r_{\er})\leq \delta$,
 where 
$$ r_{\er} :=  \lambda_{\text{sq}} \sqrt{n} s\lambda + 4L \domin^{-1} \|\hat w_A\|_\infty {s}(\lambda^2 + \lambda_{A}^2) {n}.
$$

\noindent
We conclude, using \eqref{eee4} and \eqref{bebound}, and taking $\epsilon:= r_{\er}$ that
\begin{align*}
\label{eee5}
\left|\mathbb{P}_A\left(T  - \del < z\right) - \Phi(z)\right| 
&\leq 
 \frac{C_1  C_{\varepsilon}}{\sigma}\|D_{Y}\hat D_{A}^{-1}  \hat w_{A}\|_\infty + \frac{\sqrt{2}r_{\mathrm{rem}_1}}{\sqrt{\pi}} + \delta + \frac{\sqrt{2}r_{\mathrm{rem}_0}}{\sqrt{\pi}}
\\
&\leq 4C_1  C_{\varepsilon}\|\hat{w}_{A}\|_\infty + \frac{\sqrt{2}r_{\mathrm{rem}_1}}{\sqrt{\pi}} + \delta + \frac{\sqrt{2}r_{\mathrm{rem}_0}}{\sqrt{\pi}}.
\end{align*}
The result follows.
\end{proof}

\subsubsection{Proofs for Section \ref{sec:cons.logistic}}
\label{Sec:LogisticProofs}

The logistic loss function is
$$\rho(u,y) := -yu + d(u),$$
where $d(\xi):= \log \bigl(1+ e^\xi\bigr)$, and we let $f_\beta(x) := x^T\beta$.  We define the risk function 
$$R(f|X) := \s \mathbb E\bigl\{\rho\bigl(f(x_i),Y_i\bigr) \bigm|X\bigr\}$$ 
and set 
$\beta_0 := \argmin_{\beta \in \R^p}R(f_\beta|X)$.  

\begin{proof}[Proof of Corollary~\ref{cor.logistic}]
We apply Theorem \ref{glmnonlin} to the case of logistic regression to obtain local guarantees on the power of the test.  To this end, we need to bound $\delta$ in~\eqref{consistency.lasso2} and Condition~\ref{model2} of Theorem~\ref{glmnonlin}. 

To bound $\delta$, we note that by Lemma~\ref{logit.rates} in Section~\ref{sec:appendix} 
 with $t:=\log (2p)$, we have  
with probability at least $1-1/(2p)$ that
$$R\bigl( f_{\hat\beta}|X\bigr) - R\bigl(f_0|X\bigr) + \lambda \|\hat\beta-\beta_{0}\|_1 \leq  \frac{17\lambda^2{s}(e^{\eta}/\epsilon_0+1)^2}{ \phi^2}.$$
In what follows we work on the event where this occurs.
We next want to obtain a bound on $ \|X(\hat\beta-\beta_{0})\|_2^2/n$. 
Note that the second derivative of the loss function is
$$\frac{\partial^2 \rho(u,y)}{\partial u^2}= d''(u)=\frac{e^u}{1+e^u}\left( 1 - \frac{e^u}{1+e^u} \right).$$ 
For $\|x\|_\infty \leq K$ and any $f$ with $\sup_{x : \|x\|_\infty \leq K} |f(x) - f_0(x)| \leq \eta $, we therefore have
\begin{equation}
  \label{ddotrho}
d''(f(x)) \geq (e^{|f_0(x)|+\eta}+1)^{-2}\geq (e^{\eta} / \epsilon_0 + 1)^{-2}=:c_0^2 >0.
\end{equation}
Note that for any $\tilde{\beta}$ on the line segment between $\beta_0$ and $\hat{\beta}$, we have
\begin{align*}
\sup_{x : \|x\|_\infty \leq K} |f_{\tilde{\beta}}(x) - f_0(x)| &\leq \sup_{x : \|x\|_\infty \leq K} |f_{\tilde{\beta}}(x) - f_{\beta_0}(x)| +\eta/2 \\
&\leq K\|\hat{\beta} - \beta_0\|_1 + \eta/2 \leq \eta.
\end{align*}
Thus we can conclude using a Taylor expansion of the loss function that there exist $\{\tilde{\beta}_{(i)}: i=1,\ldots,n\}$, each on the line segment from $\beta_0$ to $\hat{\beta}$, such that
\begin{align*}
R( f_{\hat{\beta}}|X) - R(f_{\beta_0}|X) 
&=
 \frac{1}{2n} \sum_{i=1}^n d''(x_i^T\tilde{\beta}_{(i)}) \bigl(x_i^T(\hat\beta-\beta_0)\bigr)^2 
\\
&\geq
c_0^2 \|X(\hat\beta-\beta_0)\|_2^2/(2n)
.
\end{align*}
We deduce that there exists a constant $\tilde C > 0$ such that with $\lambda = \tilde{C} \sqrt{\log(2p)/n}$, we have $\delta = \mathbb{P}\bigl(\hat\beta \notin \Theta(\lambda,\beta_0,X)|X\bigr)\leq 1/(2p)$.
\par
It remains to check that Condition \ref{model2} of Theorem \ref{glmnonlin} is satisfied. Firstly, the inverse link function $\mu(u) = 1/(1+e^{-u})$ is differentiable and Lipschitz with constant $1$. 
Moreover, by
\eqref{ddotrho}, 
$D_{Y,ii}^2 \geq d_{\min}^2 := c_0^2$
and also $D_{\beta_0,ii}^2 \geq c_0^2$. 
Finally, observe that $\mathbb E\bigl\{|Y_i - \mu(f_0(x_i))|^3|X\bigr\} \leq 1$. Moreover, 
$12\domin^{-2}LK_Xs\lambda  = 12c_0^{-2}LK_Xs\lambda\leq 1$ by hypothesis. Therefore, Condition \ref{model2}  is satisfied.
\end{proof}

	\appendix
	\section{Appendix}
	\label{sec:appendix}

\subsection{Auxiliary lemmas}
\begin{lem}[Hoeffding's inequality for a maximum of $p$ averages]
\label{hoef}
 Suppose that for each $j =1,\ldots,p$, the random variables $Z_{1j},\ldots,Z_{nj}$ are independent with 
$$\mathbb E Z_{ij} = 0, \quad |Z_{ij}| \leq c_i.$$
Then for all $t>0$
$$\mathbb{P}\biggl(\max_{j=1,\dots,p} \biggl|\s Z_{ij}\biggr|^2 \geq \frac{\|c\|_2^2}{n} {\frac{{2(t+\log(2p))}}{n}}\biggr) \leq e^{-t}.$$
\end{lem}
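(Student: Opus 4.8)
The plan is to combine the classical scalar Hoeffding inequality with a union bound over the $p$ coordinates, so the argument is short and the only real work is in tracking constants. First I would observe that since all quantities involved are nonnegative, the event in question can be rewritten without the square: setting $u := \sqrt{2\|c\|_2^2\,(t + \log(2p))}$, we have
\[
\left\{\max_{j}\Bigl|\frac{1}{n}\sum_{i=1}^n Z_{ij}\Bigr|^2 \geq \frac{\|c\|_2^2}{n}\cdot\frac{2(t+\log(2p))}{n}\right\} = \left\{\max_{j}\Bigl|\sum_{i=1}^n Z_{ij}\Bigr| \geq u\right\},
\]
so it suffices to bound the probability of the latter event.

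Next I would fix a coordinate $j$ and apply the two-sided Hoeffding inequality to the independent, mean-zero summands $Z_{1j},\dots,Z_{nj}$, which satisfy $-c_i \leq Z_{ij} \leq c_i$. This gives, for any $u>0$,
\[
\mathbb{P}\left(\Bigl|\sum_{i=1}^n Z_{ij}\Bigr| \geq u\right) \leq 2\exp\left(-\frac{u^2}{2\|c\|_2^2}\right),
\]
where the denominator in the exponent arises from $\sum_{i=1}^n (2c_i)^2 = 4\|c\|_2^2$, each $Z_{ij}$ ranging over an interval of width $2c_i$. Substituting the specific $u$ chosen above makes the exponent equal to $-(t+\log(2p))$, so that each coordinate's tail probability is at most $2e^{-t}/(2p) = e^{-t}/p$; here the factor $2$ from the two-sided bound cancels exactly against the $2p$ inside the logarithm.

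Finally, a union bound over $j = 1,\dots,p$ yields
\[
\mathbb{P}\left(\max_{j}\Bigl|\sum_{i=1}^n Z_{ij}\Bigr| \geq u\right) \leq \sum_{j=1}^p \frac{e^{-t}}{p} = e^{-t},
\]
which is precisely the claimed bound. There is no substantive obstacle here beyond bookkeeping: the only points requiring care are the factor $4\|c\|_2^2$ in Hoeffding's exponent and the interplay between the factor $2$ in the two-sided tail bound and the $2p$ appearing inside the logarithm, which together collapse to the clean constant $e^{-t}$.
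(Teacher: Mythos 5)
Your proof is correct, and it follows essentially the same route as the paper: the paper's proof simply cites a Hoeffding-type concentration inequality (Corollary 17.1 of the referenced text) combined with a union bound, which is exactly what you carry out explicitly. Your constant-tracking is accurate throughout — the range $2c_i$ gives $\sum_i(2c_i)^2 = 4\|c\|_2^2$ in the exponent, and the two-sided factor $2$ indeed cancels against the $2p$ inside the logarithm to yield the clean $e^{-t}$ bound.
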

\begin{proof}[Proof of Lemma \ref{hoef}]
Apply Corollary 17.1 in \cite{sf} together with a union bound.
\end{proof}

\begin{lem}
\label{Dconvergence}
Let $\tilde A,\tilde B\in \mathbb R^{n\times n}$ be diagonal matrices and suppose that $\tilde{B}$ is invertible.  Let $w\in\mathbb R^n$ satisfy $\|w \|_2 =1$.  Then 
$$\left| \frac{\|\tilde A w\|_2}{\|\tilde B w\|_2} - 1\right| \leq \|\tilde A^2\tilde B^{-2}-I\|_\infty.$$
\end{lem}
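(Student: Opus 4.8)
The plan is to reduce everything to a statement about the diagonal entries together with a weighted-average estimate, and then deal with the square root only at the very end. First I would write $a_i := \tilde A_{ii}$ and $b_i := \tilde B_{ii}$ for the diagonal entries, noting that each $b_i \neq 0$ since $\tilde B$ is invertible. Because $\tilde A^2 \tilde B^{-2} - I$ is itself diagonal, its matrix $\infty$-norm (the maximum absolute entry) collapses to
$$\Delta := \|\tilde A^2\tilde B^{-2} - I\|_\infty = \max_{i=1,\dots,n} \left|\frac{a_i^2}{b_i^2} - 1\right|.$$

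Next I would control the ratio of the \emph{squared} norms. Writing $\|\tilde A w\|_2^2 = \sum_i a_i^2 w_i^2$ and $\|\tilde B w\|_2^2 = \sum_i b_i^2 w_i^2 > 0$ (the denominator being strictly positive since $\|w\|_2 = 1$ and $\tilde B$ is invertible), I would express the deviation as
$$\frac{\|\tilde A w\|_2^2}{\|\tilde B w\|_2^2} - 1 = \frac{\sum_i \bigl(a_i^2/b_i^2 - 1\bigr)\, b_i^2 w_i^2}{\sum_i b_i^2 w_i^2}.$$
The right-hand side is a convex combination of the numbers $a_i^2/b_i^2 - 1$ with the nonnegative weights $b_i^2 w_i^2$, so its absolute value is at most the largest of these, namely $\Delta$.

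Finally, I would set $r := \|\tilde A w\|_2 / \|\tilde B w\|_2 \geq 0$, so that the previous step gives $|r^2 - 1| \leq \Delta$. Using the factorisation $r^2 - 1 = (r-1)(r+1)$ and the fact that $r \geq 0$ implies $r + 1 \geq 1$, I obtain
$$|r - 1| = \frac{|r^2 - 1|}{r + 1} \leq |r^2 - 1| \leq \Delta,$$
which is precisely the claimed inequality.

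There is no substantial obstacle in this argument; the only step meriting a moment's care is the passage from the squared-norm ratio to the ratio itself. The hard part, such as it is, lies in observing that nonnegativity of $r$ makes $1/(r+1) \leq 1$, so that this final transition incurs no loss and the clean bound $\Delta$ is preserved. Everything else is the one-line weighted-average estimate above.
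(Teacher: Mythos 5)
Your proof is correct and follows essentially the same route as the paper: your weighted-average bound on the ratio of squared norms is precisely the paper's estimate $|w^T(\tilde A^2 - \tilde B^2)w| \leq \|\tilde A^2\tilde B^{-2} - I\|_\infty \|\tilde B w\|_2^2$, and your final step via $|r-1| = |r^2-1|/(r+1) \leq |r^2-1|$ is exactly the paper's concluding factorisation. Your explicit remarks that $\|\tilde B w\|_2 > 0$ and that $r \geq 0$ are welcome touches the paper leaves implicit, but the argument is the same.
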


\begin{proof}[Proof of Lemma \ref{Dconvergence}]
$$|\| \tilde A w \|_2^2  - \| \tilde B w\|_2^2| =
|w^T (\tilde A^2- \tilde B^2)w| 
\leq \max_{i=1,\dots,n}\frac{|\tilde A_{ii}^2 - \tilde B_{ii}^2|}{\tilde B_{ii}^2} \|\tilde B w\|_2^2 = \|\tilde A^2\tilde B^{-2} - I\|_\infty \|\tilde B w\|_2^2.
 $$
Hence 
$$\left|\frac{\| \tilde A w\|_2 }{\|\tilde B w\|_2}-1\right| = \frac{\left|\| \tilde A w\|_2^2/\|\tilde B w\|_2^2-1\right|}{\left|\| \tilde A w\|_2/\|\tilde B w\|_2+1\right|}
\leq 
{\|\tilde A^2\tilde B^{-2} - I\|_\infty },$$
as required.
\end{proof}

\subsection{Auxiliary lemmas for Group Testing}
	
\begin{lem}
\label{prob.bounds}
Under the conditions of Proposition \ref{grouptest.prop}, we have 
$$\mathbb{P}(\mathcal T^c) \leq 3/p.$$

\end{lem}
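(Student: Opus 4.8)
The plan is to control each of the three families of events comprising $\mathcal T$ separately and then combine them by a union bound, using the inclusion $\mathcal T^c \subseteq \bigl(\bigcup_{j\in G}\mathcal T_{1,j}^c\bigr) \cup \mathcal T_2^c \cup \bigl(\bigcup_{j\in G}\mathcal T_{3,j}^c\bigr)$. I would show that each of the three events on the right-hand side has probability at most $1/p$, which gives the claim immediately.

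The event $\mathcal T_2^c$ is immediate: by hypothesis \eqref{consistency.lasso.group}, the GLM Lasso estimator $\hat\beta_{-G}$ fails to lie in the neighbourhood $\Theta_{-G}(\lambda,\beta_0)$ with probability at most $1/p$, so $\mathbb{P}(\mathcal T_2^c)\leq 1/p$. For the events $\mathcal T_{1,j}^c$, the key structural fact is that the first-order optimality condition defining $\gamma_{0,j}$ forces the relevant population average to vanish: differentiating the population objective $\frac1n\mathbb E\|D_{\beta_0}(X_j - X_{-G}\gamma)\|_2^2$ at $\gamma_{0,j}$ gives $\mathbb E\bigl[D_{\beta_0,ii}^2\, x_{i,k}\, u_{j,i}\bigr]=0$ for every $k\in G^c$, where $u_{j,i}$ is the $i$-th coordinate of $u_j$. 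Hence each coordinate of $u_j^TD_{\beta_0}^2 X_{-G}/n$ is an average of i.i.d.\ mean-zero terms. Each summand is bounded: by Condition~\ref{grptest.cond}(ii) we have $|u_{j,i}|\leq \|X_j\|_\infty + \|X_{-G}\gamma_{0,j}\|_\infty \leq 2K$ and $|x_{i,k}|\leq K$, while Condition~\ref{grptest.cond}(iii) applied at $\beta_0$ gives $D_{\beta_0,ii}^2 = \mu'(x_i^T\beta_0)\leq C_0$. Applying Lemma~\ref{hoef} to the (at most $p^2$) averages indexed by pairs $(j,k)\in G\times G^c$, with $c_i\asymp C_0K^2$ and $t\asymp\log p$, yields $\max_{j,k}|\cdot| \lesssim C_0K^2\sqrt{\log(2p)/n}$ on an event of probability at least $1-1/p$; matching the numerical constant to the threshold $6C_0K^2\sqrt{\log(2p)/n}$ then gives $\mathbb{P}\bigl(\bigcup_{j}\mathcal T_{1,j}^c\bigr)\leq 1/p$.

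The events $\mathcal T_{3,j}$ require the most care, and I expect this to be the main obstacle. Here the quantity of interest is $X_{j\cup G^c}^T(Y-\mu(X\beta_0))/n$, whose coordinates are averages of the terms $x_{i,k}\eta_i$ with $\eta_i = Y_i - \mu(x_i^T\beta_0)$. Conditionally on $X$ these summands have mean zero, since $\mathbb E(Y_i|x_i)=\mu(x_i^T\beta_0)$ under the null and Condition~\ref{model}; but unlike in the $\mathcal T_{1,j}$ case the factor $\eta_i$ is not bounded, as Condition~\ref{grptest.cond}(i) only guarantees that it is sub-Gaussian with $\mathbb E(e^{\eta_i^2/c_1^2}|X)\leq c_2$. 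Consequently Lemma~\ref{hoef} does not apply directly, and I would instead work conditionally on $X$, noting that $x_{i,k}\eta_i$ is mean-zero sub-Gaussian with variance proxy $\lesssim K^2c_1^2$ via $|x_{i,k}|\leq K$. The conditional average $\frac1n\sum_i x_{i,k}\eta_i$ is then sub-Gaussian with proxy $\lesssim K^2c_1^2/n$, so a standard sub-Gaussian tail bound together with a union bound over the at most $p$ relevant columns gives $\|X^T(Y-\mu(X\beta_0))\|_\infty/n \leq AK\sqrt{\log p/n}$ with conditional probability at least $1-1/p$ for a suitable constant $A$. Since this bound is uniform in $X$, integrating over the design yields $\mathbb{P}\bigl(\bigcup_j\mathcal T_{3,j}^c\bigr)\leq 1/p$. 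Combining the three estimates through the union bound gives $\mathbb{P}(\mathcal T^c)\leq 3/p$; the only genuinely delicate point is the passage from bounded-summand concentration to sub-Gaussian concentration for $\mathcal T_{3,j}$, the remaining steps being routine once the population orthogonality defining $\gamma_{0,j}$ and the uniform bounds of Condition~\ref{grptest.cond} are in hand.
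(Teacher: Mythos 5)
Your proposal matches the paper's proof essentially step for step: Hoeffding's inequality (Lemma~\ref{hoef}) with summands bounded by a constant multiple of $C_0K^2$ for the events $\mathcal T_{1,j}$, the sub-Gaussianity of $\eta_i$ from Condition~\ref{grptest.cond}(i) (applied conditionally on $X$) for the events $\mathcal T_{3,j}$, the hypothesis \eqref{consistency.lasso.group} for $\mathcal T_{2}$, and a union bound using $|G|\leq p$ with per-event probabilities of order $1/p^2$. Your explicit verification of the mean-zero property of the summands via the first-order optimality condition defining $\gamma_{0,j}$ (which naturally produces the $D_{\beta_0}^2$-weighted quantity) is a detail the paper leaves implicit when invoking Lemma~\ref{hoef}, but the argument is otherwise the same.
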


\begin{proof}

\noindent
To obtain a probability bound for $\mathcal T_{1,j}$,
we can apply Lemma \ref{hoef}, noting that 
$$\frac{1}{n}\| u_j^T D_{\beta_0}X_{-G}\|_\infty = \max_{k\in G^c} \biggl|\s Z_{ijk}\biggr|,$$
 with $Z_{ijk}:=u_{j,i}D_{\beta_0,ii}X_{-G,i,k}$, where $X_{-G,i,k}$ is the $(i,k)$-th entry of the matrix $X_{-G}$
and $u_{j,i}$ is the $i$-th entry of $u_j.$ Note that by Condition \ref{grptest.cond} (ii), it follows that 
$|u_{j,i}|\leq 2K$  and by Condition \ref{grptest.cond} (iii), we have $|D_{\beta_0,ii}|\leq C_0$. Therefore, 
 $|Z_{ijk}|\leq c_i$ for $c_i:=2C_0K^2$ and for all $i,j,k$. 
Thus Lemma \ref{hoef} implies that for all $t>0,$
$$ \mathbb{P}\biggl(\frac{1}{n^2}\| u_j^T D_{\beta_0}X_{-G}\|_\infty^2 \geq 2(2C_0K^2)^2\frac{t+\log (2p)}{n}\biggr) \leq e^{-t}.$$
Therefore, 
\begin{equation}
\label{T1j}
\mathbb{P}(\mathcal T_{1,j}^c ) \leq 1/(2p)^{2}.
\end{equation}
\noindent
For the set $\mathcal T_{3,j}$, by the sub-Gaussianity of $\eta_i = Y_i - \mu(x_i^T\beta_0)$ from Condition \ref{grptest.cond} (i), there exists a constant $C>0$ such that
\begin{equation}
\label{T3j}
\mathbb{P}\left(\mathcal T_{3,j}^c\right) = \mathbb{P}\left(\max_{k\in G^c}\biggl|\s X_{j\cup G^c,i,k}(Y_i-\mu(x_i^T\beta_0))\biggr|   \geq CK\sqrt{\frac{\log (2p)}{n}}\right) \leq 1/p^{2}.
\end{equation}
Therefore, 
$$\mathbb{P}\left(\mathcal T_j^c\right)\leq \mathbb{P}(\mathcal T_{1,j}^c ) + \mathbb{P}(\mathcal T_2^c) + \mathbb{P}(\mathcal T_{3,j}) 
\leq 1/p^2 + \mathbb{P}(\mathcal T_2^c) + 1/p^2.$$
Using bounds \eqref{T1j} and \eqref{T3j}, the fact that $|G|\leq p$ and the assumption $\mathbb{P}(\mathcal T_2^c)\leq 1/p$, we obtain by a union bound that
$$\mathbb{P}\left(\cup_{j \in G} \mathcal T_j^c\right)\leq |G| \max_{j\in G} \mathbb{P}(\mathcal T_{1,j}^c) + \mathbb{P}(\mathcal T_{2}^c)+ |G|\max_{j\in G} \mathbb{P}(\mathcal T_{3,j}^c
)\leq 2/p + \mathbb{P}(\mathcal T_2^c) \leq 3/p.$$

\end{proof}

	\begin{lem}
	\label{rates.nodewise}
	Assume Conditions \ref{model} and \ref{grptest.cond}. 
For $j\in G$, we let $u_j := X_j - X_{-j}\gamma_{0,j}$ and define the sets 
$$\mathcal T_{1,j} := \left\{\|u_j^TD_{\beta_0} X_{-G}\|_\infty/n \leq 6 C_0 K^2\sqrt{\log (2p)/n}\right\}, $$
$$\mathcal T_{2}   := \left\{ \hat\beta_{-G}\in\Theta_{-G}(\lambda,\beta_0)\right\}.$$ 
Then there exist $\lambda_{\mathrm{nw}} \asymp \sqrt{\log p/n}$, and positive constants $C_3$ and $C_4$ such that on the set 
$\cap_{j\in G} ( \mathcal T_{1,j} \cap \mathcal T_{2}),$ it holds that
\begin{align*}
\max_{j\in G}\|\hat\gamma_j - \gamma_{0,j}\|_1 &\leq C_3 K^2s\sqrt{\log p /n},\\
\max_{j\in G}|\hat\tau_j^2 - \tau_j^2| &\leq C_4 K^2 \sqrt{s \log p/n},
\end{align*}
whenever $\hat\tau_j^2 := \| \hat D(X_j - X_{-G}\hat\gamma_j)\|_2^2/n  > 0$.

\end{lem}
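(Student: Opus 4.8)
The plan is to treat $\hat\gamma_j$ as the output of a weighted square-root Lasso regressing $\hat D X_j$ on to $\hat D X_{-G}$, and to prove an oracle inequality for it by adapting the standard square-root Lasso analysis (as in \citet{sqrtlasso} or \citet[][Chapter~3]{hds}), while tracking the two features that distinguish our setting: the weights $\hat D = D_{\hat\beta_{-G}}$ are estimated rather than known, and we must take a maximum over $j \in G$. Throughout I would work on the event $\cap_{j\in G}(\mathcal T_{1,j}\cap \mathcal T_2)$, on which $\hat\beta_{-G}\in\Theta_{-G}(\lambda,\beta_0)$ and the empirical score of the residual $u_j := X_j - X_{-G}\gamma_{0,j}$ against $X_{-G}$ is controlled.

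First I would record the deterministic control of the weights. Since $\mu'$ is Lipschitz (Condition~\ref{model}), $\|x_i\|_\infty\le K$ and $\hat\beta_{-G}\in\Theta_{-G}(\lambda,\beta_0)$ on $\mathcal T_2$, we have $\max_i|\hat D_{ii}^2 - D_{\beta_0,ii}^2|\le LKs\lambda$, exactly as in \eqref{Diirates}, and hence (using Condition~\ref{grptest.cond}(iii)) $\max_i \hat D_{ii}\asymp 1$. This lets me replace $\hat D$ by $D_{\beta_0}$ in every Gram-type and score-type quantity at the cost of a term of order $K s\lambda$ times the relevant $\ell_1$ or $\ell_2$ norm. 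Next, the KKT conditions for the optimisation defining $\hat\gamma_j$ in Algorithm~\ref{grptest} give the usual square-root Lasso stationarity bound $\|X_{-G}^T\hat D^2 (X_j - X_{-G}\hat\gamma_j)/n\|_\infty \le \lambda_{\mathrm{nw}}\hat\tau_j$; combining this with the empirical score bound on $\mathcal T_{1,j}$ (after the weight replacement above) controls the correlation of the residual with the design at $\gamma_{0,j}$ at a level of order $K^2\sqrt{\log p/n}$. To convert the resulting prediction bound into an $\ell_1$ bound I would invoke a compatibility condition for the weighted empirical Gram matrix $X_{-G}^T\hat D^2 X_{-G}/n$, derived from the population lower eigenvalue bound $\Lambda_{\min}(\Sigma_0)\ge 1/C_e$ of Condition~\ref{grptest.cond}(iv), transferred first to the empirical true-weight Gram matrix and then to the estimated-weight one via the weight bound $LKs\lambda$. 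The standard square-root Lasso argument then yields $\|\hat\gamma_j-\gamma_{0,j}\|_1\lesssim K^2 s\sqrt{\log p/n}$ and the companion prediction bound $\|X_{-G}(\hat\gamma_j-\gamma_{0,j})\|_2^2/n\lesssim K^4 s\log p/n$, uniformly over $j\in G$ by a union bound since $|G|\le p$.

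For $\hat\tau_j^2-\tau_j^2$ I would decompose $\hat\tau_j^2-\tau_j^2 = \{\hat\tau_j^2 - \|D_{\beta_0}u_j\|_2^2/n\} + \{\|D_{\beta_0}u_j\|_2^2/n - \tau_j^2\}$. The second bracket is an empirical-versus-population deviation of an average of terms bounded by $4C_0K^2$ (Conditions~\ref{grptest.cond}(ii),(iii)), controlled at rate $K^2\sqrt{\log p/n}$ by Hoeffding (Lemma~\ref{hoef}) and a union bound over $j$. The first bracket replaces $\hat D$ by $D_{\beta_0}$ (weight bound) and $\hat\gamma_j$ by $\gamma_{0,j}$; expanding the square and applying the Cauchy--Schwarz inequality with the prediction bound above produces the dominant $\sqrt{s}$ factor, giving $|\hat\tau_j^2-\tau_j^2|\lesssim K^2\sqrt{s\log p/n}$, as claimed.

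The main obstacle is the estimated-weight coupling: the design of the square-root Lasso is $\hat D X_{-G}$ with $\hat D$ a function of the data through $\hat\beta_{-G}$, so both the score control and the compatibility/restricted-eigenvalue condition must be established for this random weighted Gram matrix rather than a fixed one. Handling this, together with the self-normalisation inherent to the square-root Lasso (which makes the effective noise level adaptive and couples the score bound to $\hat\tau_j$, itself requiring a lower bound $\hat\tau_j\gtrsim \tau_j\gtrsim 1$ from \eqref{Eq:taulower}), is where the care is needed; the weight bound $\max_i|\hat D_{ii}^2 - D_{\beta_0,ii}^2|\le LKs\lambda$ from $\mathcal T_2$ is the tool that keeps all of these perturbations lower-order.
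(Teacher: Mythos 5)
Your proposal is correct in outline but takes a genuinely different route from the paper. The paper does not re-derive a weighted square-root Lasso oracle inequality from scratch: it reduces the problem to the nodewise-regression analysis of Theorem~3.2 in \cite{vdgeer13} by observing that, since $\hat\tau_j>0$, the square-root Lasso with penalty $\lambda_{\mathrm{nw}}$ coincides with the Lasso with penalty $\lambda_{\mathrm{Lasso}} = \hat\tau_j\lambda_{\mathrm{nw}}$, and then checks finite-sample analogues of that theorem's conditions (D1)--(D5) for the reduced design $\tilde X = X_{j\cup G^c}$ --- e.g.\ $\Lambda_{\min}(\Sigma_{\beta_0,j\cup G^c})\geq\Lambda_{\min}(\Sigma_0)\geq 1/C_e$ and $\|\Sigma_{\beta_0,j\cup G^c}\|_\infty\leq C_e$ from Condition~\ref{grptest.cond}(iv), with $\mathcal T_2$ playing the role of (D5). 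Your self-contained derivation buys transparency --- it makes explicit exactly where the estimated weights, the self-normalisation, and the union bound over $j\in G$ enter --- at the cost of having to redo the score control and the compatibility transfer that the citation supplies wholesale; conversely, the paper's proof is short but opaque about precisely the couplings you highlight.

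Two points in your plan need tightening. First, for the $\hat\tau_j^2$ bound, the $\ell_\infty$ weight bound $\max_i|\hat D_{ii}^2-D_{\beta_0,ii}^2|\leq LKs\lambda$, which you name as the tool keeping all perturbations lower-order, is \emph{not} sufficient there: applied to the quadratic form it yields a weight-replacement term of order $K^3 s\sqrt{\log p/n}$, which exceeds the claimed $C_4K^2\sqrt{s\log p/n}$ unless $K\sqrt{s}\lesssim 1$. Instead one should bound $\frac{1}{n}\sum_{i=1}^n|\hat D_{ii}^2-D_{\beta_0,ii}^2|\,v_i^2\leq L\|v\|_\infty^2\bigl(\frac{1}{n}\sum_{i=1}^n|x_{i,-G}^T(\hat\beta_{-G}-\beta_{0,-G})|^2\bigr)^{1/2}\lesssim LK^2\sqrt{s}\,\lambda$, i.e.\ exploit the prediction-norm component of $\Theta_{-G}(\lambda,\beta_0)$ via Cauchy--Schwarz --- the same device you correctly use for the $\hat\gamma_j$-replacement cross term. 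Second, your appeal to $\hat\tau_j\gtrsim\tau_j$ via \eqref{Eq:taulower} is circular as stated: in the paper, the comparison $\hat\tau_j^2\geq\tau_j^2/2$ is deduced in the proof of Proposition~\ref{grouptest.prop} \emph{from} this lemma's second conclusion, so a self-contained square-root Lasso analysis must first obtain a crude two-sided control of $\hat\tau_j$ (e.g.\ from the basic inequality comparing the objective at $\hat\gamma_j$ and at $\gamma_{0,j}$) before feeding a lower bound on $\hat\tau_j$ into the KKT/score comparison. Finally, note a looseness your proof shares with the paper's: the empirical-Gram concentration required for the compatibility transfer is a further high-probability event not literally contained in $\cap_{j\in G}(\mathcal T_{1,j}\cap\mathcal T_2)$; the paper inherits this from the event structure inside the proof of Theorem~3.2 of \cite{vdgeer13}, and a fully rigorous statement would intersect with that event as well.
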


\begin{proof}[Proof of Lemma \ref{rates.nodewise}]
To obtain rates of convergence for $\hat\gamma_j$ from Algorithm \ref{grptest}, we follow the arguments in the proof of Theorem~3.2 in \cite{vdgeer13}, which considers nodewise regression
with random bounded design. The difference is that they define a nodewise regression program with design matrix $X_{-j}$ and use the Lasso, whereas we want to apply the nodewise regression with a smaller design matrix, $X_{-G}$, and we use the square-root Lasso.  We also seek finite-sample, as opposed to asymptotic, bounds, but this requires only minor modifications.  But since we assume that $\hat \tau_j>0$, the square-root Lasso program with penalty $\lambda_{\text{nw}}$ corresponds to the Lasso program with penalty $\lambda_{\text{Lasso}}:= \hat\tau_j \lambda_{\text{nw}}.$
We now check that the appropriate finite-sample analogues of conditions (D1)--(D5) of Theorem 3.2 from \cite{vdgeer13} are satisfied for $\tilde X:=X_{j\cup {G^c}}$. 
Firstly, the analogues of (D1), (D2), (D4) are satisfied directly by the assumptions in Conditions \ref{model} and \ref{grptest.cond}.
For (D3), first note that the smallest eigenvalue of $\Sigma_{\beta_0,j\cup G^c} := \mathbb E \tilde X^T D_{\beta_0}^2 \tilde X /n $
 is lower bounded by the smallest eigenvalue of $\Sigma_{0} = \mathbb E X_{}^T D_{\beta_0}^2 X_{} /n $, which is in turn lower bounded by $1/C_{e}$. Similarly, 
$\|\Sigma_{\beta_0,j\cup G^c}\|_\infty \leq \|\Sigma_{\beta_0}\|_\infty \leq C_e,$.  Finally, Condition (D5) is satisfied on $\mathcal T_2$.

As in the proof of Proposition~\ref{grouptest.prop}, let $k_j$ denote the index of column $X_j$ in the matrix $X_{j\cup G^c}$.
We write 
$$\hat\Gamma_j := (-\hat\gamma_{j,1},\dots,-\hat\gamma_{j,k_j-1},1,-\hat\gamma_{j,k_j+1},\dots,-\hat\gamma_{j,|G^c|})^T,$$
and $\Gamma_{0,j}$ for its analogy defined in terms of $\gamma_{0,j}.$ Then note that we can write $X_{j}- X_{-G}\hat\gamma_j = \tilde X \hat\Gamma_j$
and recall that
$\hat\tau_j^2 = \|\hat D \tilde X \hat\Gamma_{j}\|_2^2/{n}$ and 
$\tau_j^2 = \mathbb E\|D_{\beta_{0}} \tilde X \Gamma_{0,j}\|_2^2/{n}$.
By inspecting the proof of Theorem~3.2 of \cite{vdgeer13}, we conclude that 
there exist positive constants $C_3$ and $C_4$ such that on 
$\cap_{j\in G} (\mathcal T_{1,j} \cap \mathcal T_{2}),$
it holds that
\begin{align*}
\max_{j\in G}\|\hat\gamma_j - \gamma_{0,j}\|_1 &\leq  C_3 K^2s\sqrt{\log p /n},\\
\max_{j\in G}|\hat\tau_j^2 - \tau_j^2| &\leq  C_4 K^2 \sqrt{s \log p/n},
\end{align*}
as required.
\end{proof}
The following lemma bounds a term $\Delta_2$ defined in the proof of Theorem~\ref{mbcor}.
\begin{lem}
\label{MBdelta}
Under the conditions of Theorem \ref{mbcor},
there exists a constant $C'' > 0$
 such that 
\[
\mathbb{P}\bigl(\Delta_2 \geq  (C'')^2 K^6 s^2\lambda^2\bigr) \leq  4/p.
\]

\end{lem}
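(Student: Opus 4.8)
The plan is to reduce $\Delta_2 = \max_{j \in H}\sigma_j^2$ to something controllable by the convergence rates already in hand. Since $x_{i(j+|G|)} = -x_{ij}$ and $\hat x_{i(j+|G|)} = -\hat x_{ij}$, we have $\sigma_{j+|G|}^2 = \sigma_j^2$, so $\Delta_2 = \max_{j\in G}\sigma_j^2$. The key first step is a cancellation: writing $u_j := X_j - X_{-G}\gamma_{0,j}$, $\hat u_j := X_j - X_{-G}\hat\gamma_j$, $\eta_i := Y_i - \mu(x_i^T\beta_0)$ and $\hat\eta_i := Y_i - \mu(x_{i,-G}^T\hat\beta_{-G})$, the diagonal weights $D_{\beta_0}$ and $\hat D = D_{\hat\beta_{-G}}$ cancel in each product, giving $w_{j,i}\varepsilon_i = u_{j,i}\eta_i/(\sqrt{n}\tau_j)$ and $\hat w_{j,i}\hat R_{G,i} = \hat u_{j,i}\hat\eta_i/(\sqrt{n}\hat\tau_j)$, whence
\[
\sigma_j^2 = \frac{1}{n}\sum_{i=1}^n\Bigl(\frac{u_{j,i}\eta_i}{\tau_j} - \frac{\hat u_{j,i}\hat\eta_i}{\hat\tau_j}\Bigr)^2.
\]
I would run the whole argument on the event $\mathcal T$ of Lemma~\ref{prob.bounds} (with $\mathbb P(\mathcal T^c)\le 3/p$, on which \eqref{rate.gamma0} and \eqref{rate.tau0} hold) intersected with an event $\mathcal E := \{n^{-1}\sum_{i=1}^n\eta_i^2 \le C_\eta\}$ for a suitable constant $C_\eta$.

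Next I would apply the triangle inequality in $\ell_2$, successively replacing $\hat u_j$ by $u_j$, $\hat\eta$ by $\eta$, and $\hat\tau_j$ by $\tau_j$, to obtain
\[
\sqrt{n}\,\sigma_j \le \frac{\|(u_j - \hat u_j)\odot\eta\|_2}{\tau_j} + \frac{\|\hat u_j\odot(\eta - \hat\eta)\|_2}{\tau_j} + \Bigl|\frac{1}{\tau_j} - \frac{1}{\hat\tau_j}\Bigr|\,\|\hat u_j\odot\hat\eta\|_2.
\]
For the first (dominant) term I use $|u_{j,i} - \hat u_{j,i}| = |x_{i,-G}^T(\hat\gamma_j - \gamma_{0,j})| \le K\|\hat\gamma_j - \gamma_{0,j}\|_1 \le C_3 K^3 s\sqrt{\log p/n}$ from \eqref{rate.gamma0} and Condition~\ref{grptest.cond}(ii), together with $\sum_i\eta_i^2 \le C_\eta n$ on $\mathcal E$, giving $O(K^3 s\sqrt{\log p})$. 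For the second, I bound $|\eta_i - \hat\eta_i| \le C_0 K s\lambda$ by the mean value theorem and Condition~\ref{grptest.cond}(iii), and $\|\hat u_j\|_2^2 \le c_0^{-1}\|\hat D\hat u_j\|_2^2 = c_0^{-1} n\hat\tau_j^2 = O(n)$. For the third, I use $|1/\tau_j - 1/\hat\tau_j| \le C|\hat\tau_j^2 - \tau_j^2| = O(K^2\sqrt{s\log p/n})$ from \eqref{rate.tau0} with the bounds $\tau_j^2 \ge 1/C_e$ and $\hat\tau_j^2 \ge \tau_j^2/2$ from the proof of Proposition~\ref{grouptest.prop}, and $\|\hat u_j\odot\hat\eta\|_2 = O(K\sqrt{n})$ (via $|\hat u_{j,i}| = O(K)$ and $\mathcal E$). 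Dividing by $\sqrt{n}$ and using $\lambda \asymp \sqrt{\log p/n}$, these yield $\sigma_j = O(K^3 s\lambda) + O(Ks\lambda) + O(K^3\sqrt{s}\lambda)$, uniformly over $j\in G$ on $\mathcal T\cap\mathcal E$.

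Since $K,s \ge 1$, the first term dominates, so $\sigma_j = O(K^3 s\lambda)$ and hence $\Delta_2 = O(K^6 s^2\lambda^2)$ on $\mathcal T\cap\mathcal E$, which is the claimed bound for a suitable $C''$. For the probability accounting, I would control $\mathbb P(\mathcal E^c)$ via a Bernstein inequality for the sub-exponential variables $\eta_i^2$ (whose sub-exponential norm is governed by the sub-Gaussian bound in Condition~\ref{grptest.cond}(i)), giving $\mathbb P(\mathcal E^c) \le e^{-cn} \le 1/p$ for large $n$ because $\log p = o(n)$ under Condition~\ref{grptest.cond}(v); combined with $\mathbb P(\mathcal T^c)\le 3/p$ this gives the $4/p$ in the statement. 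The main obstacle is bookkeeping rather than ideas: once the cancellation is exploited the estimates are routine, but one must track the exact powers of $K$ and $s$ through all three terms to confirm they collapse into the single rate $K^3 s\lambda$ (equivalently $\sigma_j^2 \lesssim K^6 s^2\lambda^2$), and the concentration of $n^{-1}\sum_i\eta_i^2$ is the one genuinely new probabilistic input beyond what the excerpt already provides.
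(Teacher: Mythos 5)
Your proposal is correct, and while it rests on the same ingredients as the paper's proof --- the nodewise rates \eqref{rate.gamma0} and \eqref{rate.tau0} on the event $\mathcal T$, the bounds $\tau_j^2 \geq 1/C_e$ and $\hat\tau_j^2 \geq \tau_j^2/2$ from Proposition~\ref{grouptest.prop}, the $O(K)$ bound on $\|\hat u_j\|_\infty$, and a sub-exponential concentration bound for the average of the squared errors holding with probability at least $1-1/p$ (the paper's \eqref{etasq}, stated for $\varepsilon_i^2$ rather than $\eta_i^2$, which is equivalent up to constants since $c_0 \leq D_{\beta_0,ii}^2 \leq C_0$) --- your decomposition is genuinely different and arguably cleaner. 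The paper splits $\sigma_j^2 \leq 2\sum_i (w_{j,i}-\hat w_{j,i})^2\varepsilon_i^2 + 2\sum_i \hat w_{j,i}^2(\hatRGi-\varepsilon_i)^2$ through the intermediate point $\hat w_{j,i}\varepsilon_i$, which forces it to compare the weight matrices $D_{\beta_0}$ and $D_{\hat\beta_{-G}}$ twice: once inside $\max_{i,j}|w_{j,i}-\hat w_{j,i}|$ (via \eqref{Diirates} and \eqref{Diihatrates}) and once inside $|\hatRGi - \varepsilon_i|$ (via the bound on $|D_{\beta_0,ii}/D_{\hat\beta_{-G},ii}-1|$). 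Your observation that the weights cancel exactly in the products, $w_{j,i}\varepsilon_i = u_{j,i}\eta_i/(\sqrt{n}\tau_j)$ and $\hat w_{j,i}\hatRGi = \hat u_{j,i}\hat\eta_i/(\sqrt{n}\hat\tau_j)$, eliminates the $D$-comparison from this lemma entirely, and your three-term telescope isolates the three error sources ($\hat\gamma_j$, $\hat\beta_{-G}$, $\hat\tau_j$) directly; the dominant $K^3 s\lambda$ rate arises from the same place in both arguments, namely the $\ell_1$ error of $\hat\gamma_j$ multiplied by the design bound $K$, and your second and third terms come in at $O(Ks\lambda)$ and $O(K^3\sqrt{s}\lambda)$, consistent with (indeed slightly sharper than) the paper's $r_2 = \mathcal O(K^4 s^2\lambda^2)$. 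Two small bookkeeping points, neither a gap: your mean-value bound $|\eta_i - \hat\eta_i| \leq C_0 K s\lambda$ uses the null hypothesis (so that $x_i^T\beta_0 = x_{i,-G}^T\beta_{0,-G}$) together with Condition~\ref{grptest.cond}(iii), both available under the hypotheses of Proposition~\ref{grouptest.prop}; and your $e^{-cn} \leq 1/p$ step for $\mathbb{P}(\mathcal E^c)$ needs $\log p \lesssim n$, which does follow from Condition~\ref{grptest.cond}(v) --- the paper instead takes deviation level $\asymp \sqrt{\log p/n}$ in \eqref{etasq} to obtain the $1/p$ probability directly, avoiding any extra largeness requirement on $n$ beyond those already implicit in the $\mathcal O(\cdot)$ statements of Proposition~\ref{grouptest.prop}. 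Combined with $\mathbb{P}(\mathcal T^c) \leq 3/p$, your accounting yields the stated $4/p$.
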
	
	
\begin{proof}[Proof of Lemma \ref{MBdelta}]

On the set $\mathcal T$ defined in the proof of Proposition \ref{grouptest.prop}, we have
\begin{align}
\nonumber
\Delta_2 &=
 \max_{{j\in H} }\sigma_j^2 = \max_{j\in H} \frac{1}{n}\sum_{i=1}^n (x_{ij} - \hat x_{ij})^2\\
& =  
\max_{j\in H} \sum_{i=1}^n (w_{j,i}\varepsilon_i  - \hat w_{j,i}\hatRGi)^2
\\\nonumber
&\leq  
2\max_{j\in H} \sum_{i=1}^n (w_{j,i}-\hat w_{j,i})^2 \varepsilon_i^2
 + 
2\max_{j\in H} \sum_{i=1}^n \hat w_{j,i}^2( \hatRGi-\varepsilon_i)^2
\\
&=:
r_1+r_2.
\end{align}
Now we bound $r_1.$
By similar arguments as in the proof of Proposition \ref{grouptest.prop}, we will now show that  on $\mathcal T,$
$$\max_{i,j}(w_{j,i}-\hat w_{j,i})^2 = \mathcal O(K^3s^2\lambda^2 /n).$$
First,
\begin{align*}
\max_{i,j}|w_{j,i}-\hat w_{j,i}|
&= 
\max_{i,j}\frac{1}{\sqrt{n}} 
| D_{\beta_0,ii}e_i^T X_{j\cup G^c} \Gamma_{0,j}/\tau_j 
-
D_{\hat\beta_{-G},ii}e_i^T X_{j\cup G^c}  \hat\Gamma_{j}/\hat\tau_j  |
\\
&\leq 
\max_{i,j}
\frac{1}{\sqrt{n}}
 | (D_{\beta_0,ii} - D_{\hat\beta_{-G},ii})e_i^T X_{j\cup G^c}  \Gamma_{0,j}/\tau_j | 
\\&\hspace{1cm}+ 
\frac{1}{\sqrt{n}}
\max_{i,j} | D_{\hat\beta_{-G},ii}e_i^T X_{j\cup G^c}  (\hat\Gamma_{j}/\hat\tau_j -\Gamma_{0,j}/\tau_j) |.
\end{align*}
Now we can use Condition \ref{grptest.cond}(ii),~\eqref{Eq:Dbeta0}, \eqref{Diirates}, \eqref{Diihatrates} and \eqref{r1jp} to bound the terms in the last display and obtain
\begin{align*}
\max_{i,j}|w_{j,i}-\hat w_{j,i}|
&\leq 
\frac{1}{\sqrt{n}}\max_{i}\bigl| D_{\beta_0,ii} - D_{\hat\beta_{-G},ii}\bigr|
\max_{i,j}
| e_i^TX_{j\cup G^c}  \Gamma_{0,j}/\tau_j | 
\\
&\hspace{1cm}+ 
\frac{1}{\sqrt{n}}\max_{i,j}
\bigl| D_{\hat\beta_{-G},ii}\bigr| \|X_{j\cup G^c} \|_\infty \|\hat\Gamma_{j}/\hat\tau_j -\Gamma_{0,j}/\tau_j \|_1
\\
&\lesssim 
\frac{1}{\sqrt{n}}LK^2s\lambda + \frac{1}{\sqrt{n}} K^3s\sqrt{\log p/n}
\\
&\lesssim 
 \frac{1}{\sqrt{n}} K^3s\lambda
.
\end{align*}
Moreover, by the sub-Gaussianity of $\eta_i$ and since $D^2_{\beta_0,ii}\geq c_0$ (by Condition \eqref{grptest.cond}(iii)), 
there exist constants $C,C''' > 0$ such that with probability at least $1-1/p$, we have
\begin{eqnarray}
\label{etasq}
\frac{1}{n}\sum_{i=1}^n \varepsilon_i^2 \leq \s\mathbb E\varepsilon_i^2 + C \sqrt{\frac{\log p}{n}}\leq  C'''.
\end{eqnarray}
Therefore, with probability at least $1-1/p-\mathbb{P}(\mathcal T^c)$,
\begin{eqnarray}
\label{MBe2}
r_1 = \max_j  \sum_{i=1}^n (w_{j,i}-\hat w_{j,i})^2 \varepsilon_i^2
\leq
\max_{i,j}  (w_{j,i}-\hat w_{j,i})^2 \sum_{i=1}^n \varepsilon_i^2 
\lesssim
K^6s^2\lambda^2.
\end{eqnarray}

We now bound $r_2$.  Using Condition \ref{grptest.cond}(iii), together with the fact that $\hat\beta_{-G} \in \Theta_{-G}(\lambda,\beta_0)$ on $\mathcal{T}$ we have that on this event, 
\begin{align*}
|\hatRGi-\varepsilon_i |
	&=
  \bigl|D_{\hat\beta_{-G},ii}^{-1}\bigl(Y_i - \mu(X_{i,-G}\hat\beta_{-G})\bigr) - D_{\beta_0,ii}^{-1}\bigl(Y_i - \mu(X_{i,-G}\beta_{0,-G})\bigr) \bigr|
	\\
	&\leq 
	C_0\bigl|D_{\hat\beta_{-G},ii}^{-1}X_{i,-G}(\beta_{0,-G} - \hat\beta_{-G})\bigr| +
		\bigl|D_{\hat\beta_{-G},ii}^{-1} - D_{\beta_0,ii}^{-1}\bigr|\bigl|Y_i - \mu(X_{i,-G}\beta_{0,-G})\bigr|
	\\
	&\leq 
		C_0\bigl|D_{\hat\beta_{-G},ii}^{-1}X_{i,-G}(\beta_{0,-G} - \hat\beta_{-G})\bigr| 
		+
			D_{\beta_{0},ii}^{-1}
		\left|\frac{D_{\beta_0,ii}}{D_{\hat\beta_{-G},ii}}- 1\right||\eta_i|.
\end{align*}
Then using the fact that $\hat\beta_{-G} \in \Theta_{-G}(\lambda,\beta_0)$ on $\mathcal{T}$ and using that 
$$\biggl|\frac{D_{\beta_0,ii}}{D_{\hat\beta_{-G},ii}}- 1\biggr| \leq \biggl|\frac{D_{\beta_0,ii}^2}{D_{\hat\beta_{-G},ii}^2}- 1\biggr|
\leq 2c_0^{-1}LKs\lambda$$
 (which follows similarly as in the proof of Theorem \ref{glmnonlin}) and using \eqref{etasq}, we obtain that on $\mathcal{T}$,
\begin{align*}
\sum_{i=1}^n (\hatRGi-\varepsilon_i )^2
	&\leq 
2C_0 \sum_{i=1}^n
		D_{\hat\beta_{-G},ii}^{-2}|X_{i,-G}(\beta_{0,-G} - \hat\beta_{-G})|^2 +
			2\sum_{i=1}^n \left|\frac{D_{\beta_0,ii}}{D_{\hat\beta_{-G},ii}}- 1\right|^2 \varepsilon_i^2
			\\
			&\lesssim 
		s\lambda^2n +
			K^2s^2\lambda^2n \lesssim K^2 s^2\lambda^2 n
			.
\end{align*}
Then
\begin{eqnarray}
\label{rate2}
r_2 = 2\max_{j\in H} \sum_{i=1}^n \hat w_{j,i}^2( \hatRGi-\varepsilon_i)^2
	\lesssim 
\frac{K^2}{n} K^2s^2\lambda^2n = K^4 s^2 \lambda^2
			,
\end{eqnarray}
where we used  $\|\hat w_{j}\|_\infty^2 = \mathcal O(K^2/n)$ (which follows from \eqref{hatwMB}).
\\

Overall, collecting \eqref{rate2} and \eqref{MBe2} there exists a constant $C''$
 such that 
\begin{eqnarray}
\label{tec22}
\mathbb{P}(\Delta_2 \geq  (C'')^2 K^6 s^2\lambda^2) \leq 1/p + \mathbb{P}(\mathcal T^c) \leq 4/p.
\end{eqnarray}

\end{proof}
	
\subsection{Multiplier bootstrap}
\label{sec:MB}
We summarize Corollary 3.1 from \cite{chernozhukov2013gaussian}. To this end, we need the following condition.
\begin{cond}
\label{TWapprox}
Let $(x_{i})_{i=1}^n$ be $n$ independent random vectors with values in $\mathbb R^g$ 
satisfying 
\begin{equation}
\label{MB1}
c_1\leq \frac{1}{n}\sum_{i=1}^n\mathbb E x_{ij}^2 \leq C_1
\end{equation} 
and
\begin{equation}
\label{MB2}
\max_{k=1,2}\frac{1}{n}\sum_{i=1}^n \mathbb E |x_{ij}|^{2+k}/B_n^k
+\mathbb E\max_{1\leq j\leq g}|x_{ij}/B_n|^{4}\leq 4.
\end{equation}
 Define
$$T_0 := \max_{1\leq j \leq g}\frac{1}{\sqrt{n}}\sum_{i=1}^n x_{ij}.$$
Let $(e_i)_{i=1}^n$ be a sequence of i.i.d. $\mathcal N(0,1)$ random variables independent of $(x_{i})_{i=1}^n$ and define
$$W_0:= \max_{1\leq j \leq g}\frac{1}{\sqrt{n}}\sum_{i=1}^n x_{ij}e_i.$$
Assume that there exist $\zeta_1,\zeta_2\geq 0$ such that
\begin{equation}
\label{TandW}
\mathbb{P}(|T - T_0| > \zeta_1 ) \leq \zeta_2,
\quad\quad \mathbb{P}(\mathbb{P}_e(|W - W_0| > \zeta_1) >\zeta_2 ) < \zeta_2,
\end{equation}
where $\mathbb{P}_e$ is the probability measure induced by the multiplier variables $(e_i)_{i=1}^n$ holding $(x_{i})_{i=1}^n$ fixed.

\end{cond}

\begin{thm}[Corollary 3.1 in \cite{chernozhukov2013gaussian}]
\label{corMB}
Suppose that Condition \ref{TWapprox} is satisfied with 
\begin{equation}
\label{MB3}
\zeta_1\sqrt{\log g} + \zeta_2 \leq C_2 n^{-c_2}
\end{equation}
and 
\begin{equation}
\label{MB4}
B_n^4 \log(gn)^7 /n\leq C_2 n^{-c_2}.
\end{equation}
Then there exist constants $c,C>0$ depending only on $c_1,C_1,c_2$ and $C_2$ such that
$$\sup_{\alpha\in(0,1)} |\mathbb{P}(T < c_{W}(\alpha) ) - \alpha|\leq C n^{-c},$$
where $c_{W}(\alpha)$ is the $\alpha$-quantile of $W$ conditional on $(x_i)_{i=1}^n$ given by
$$c_{W}(\alpha) := \inf\{t\in \mathbb R: \mathbb{P}_e(W \leq t) \geq \alpha\}.$$

\end{thm}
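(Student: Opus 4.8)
The plan is to interpose a common Gaussian comparison object between the statistic $T$ and its bootstrap surrogate $W$, following the Slepian--Stein approach of \cite{chernozhukov2013gaussian}. Set $\Sigma := \frac{1}{n}\sum_{i=1}^n \mathbb{E}(x_i x_i^\top) \in \mathbb{R}^{g \times g}$, the averaged covariance of the (centred) summands, let $Y \sim N(0,\Sigma)$, and write $Z := \max_{1\le j\le g} Y_j$. The argument proceeds in three stages: first approximate the law of $T_0$ by that of $Z$; second show that, conditionally on the data, the law of $W_0$ is also close to that of $Z$; and third convert these distributional approximations into the quantile statement while absorbing the coupling errors $\zeta_1,\zeta_2$ supplied by \eqref{TandW}. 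The quantile conclusion then follows because if the distribution function of $T_0$ and the conditional distribution function of $W_0$ are both uniformly close to that of $Z$, the $\alpha$-quantiles of $W$ must approximate those of $T$.

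For the first stage I would invoke the high-dimensional central limit theorem for maxima of sums. The idea is to replace the hard maximum by the smooth surrogate $F_\beta(z) = \beta^{-1}\log\sum_{j=1}^g e^{\beta z_j}$ with $\beta \asymp \sqrt{\log g}$, interpolate between $\frac{1}{\sqrt n}\sum_i x_i$ and $Y$ along a Slepian path, and control the derivative of the interpolated smooth functional by a Stein/Lindeberg replacement. The resulting error is governed by third absolute moments and by the envelope $B_n$, so the moment bounds \eqref{MB1}, \eqref{MB2} together with the growth condition \eqref{MB4} (namely $B_n^4\log(gn)^7/n$ small) deliver a polynomial rate $n^{-c}$ in Kolmogorov distance. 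Converting the smooth-max bound back to the event $\{\max_j \cdot \le t\}$ requires the anti-concentration inequality for the maximum of a Gaussian vector, which gives $\sup_t \mathbb{P}(|Z-t|\le \epsilon) \lesssim \epsilon\sqrt{\log g}$; the lower variance bound in \eqref{MB1} is what keeps the anti-concentration constant under control, and this is precisely what renders the smoothing error negligible at scale $\beta^{-1}$.

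For the second stage, observe that conditionally on $(x_i)_{i=1}^n$ the bootstrap statistic $W_0 = \max_j \frac{1}{\sqrt n}\sum_i x_{ij} e_i$ is exactly the maximum of a centred Gaussian vector with covariance $\hat\Sigma := \frac{1}{n}\sum_i x_i x_i^\top$. I would then apply the Gaussian comparison lemma of \cite{chernozhukov2013gaussian}, bounding the Kolmogorov distance between the laws of the maximum of $N(0,\hat\Sigma)$ and of $N(0,\Sigma)$ by a power of $\|\hat\Sigma - \Sigma\|_\infty$ times a polylogarithmic factor in $g$, again a consequence of anti-concentration. It remains to show $\|\hat\Sigma - \Sigma\|_\infty$ is small with high probability, which is a maximal inequality over the $g^2$ entries of a sum of independent mean-zero terms whose tails are controlled by \eqref{MB2} and the envelope via \eqref{MB4}. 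Combining the two stages shows that the conditional distribution function of $W_0$ and the distribution function of $T_0$ both lie within $Cn^{-c}$ of that of $Z$, uniformly.

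Finally, I would transfer these statements from $(T_0,W_0)$ to $(T,W)$ using \eqref{TandW}: since $T$ and $T_0$ (respectively $W$ and $W_0$, conditionally) differ by at most $\zeta_1$ on an event of probability at least $1-\zeta_2$, shifting a threshold by $\zeta_1$ perturbs the relevant probabilities by only $O(\zeta_1\sqrt{\log g})$ --- once more by anti-concentration --- which is exactly the budget allotted by \eqref{MB3}. Chaining the comparisons then yields $\mathbb{P}(T < c_W(\alpha)) = \mathbb{P}(Z \le c_Z(\alpha)) + O(n^{-c}) = \alpha + O(n^{-c})$ uniformly in $\alpha$, where $c_Z(\alpha)$ denotes the $\alpha$-quantile of $Z$. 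I expect the main obstacle to be the first stage: pushing the Slepian--Stein smoothing argument through to a sharp polynomial Kolmogorov rate requires delicately balancing $\beta \asymp \sqrt{\log g}$ against the third-moment and anti-concentration terms, and it is this balance that forces the $\log(gn)^7$ exponent appearing in \eqref{MB4}.
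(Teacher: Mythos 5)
The paper offers no proof of this result: it is quoted verbatim (with notation adapted) from Corollary~3.1 of \cite{chernozhukov2013gaussian}, and your sketch is a faithful reconstruction of exactly the argument given there --- Gaussian approximation of $T_0$ via smooth-max/Slepian--Stein interpolation, conditional Gaussian comparison for $W_0$ through $\|\hat\Sigma-\Sigma\|_\infty$, and transfer from $(T_0,W_0)$ to $(T,W)$ by paying $\zeta_1\sqrt{\log g}$ through anti-concentration, which is precisely the budget in \eqref{MB3}. Your proposal is correct and takes essentially the same route as the cited source.
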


\subsection{Oracle inequalities for logistic regression under misspecification}
\label{subsec:orac}

We require a condition on the design matrix known as the compatibility condition \citep{hds}.

\begin{defn}[Compatibility constant]
\label{cc}
We say that the compatibility condition is met with constant $\phi > 0$ if for all $\beta\in\mathbb R^p$ that satisfy
$\|\beta_{S^c}\|_1 \leq 3\|\beta_S\|_1,$
it holds that
$$\|\beta_S\|_1^2 \leq \frac{s\|X\beta\|_2^2}{\phi^2}.$$ 
\end{defn}
%


For our final lemma, we use the notation of Sections~\ref{sec:cons.logistic} and~\ref{Sec:LogisticProofs}.

\begin{lem}
\label{logit.rates}
Suppose that there exists a constant $K>0,$ such that  
$$\max_{1\leq j\leq p}\|X_j\|_\infty \leq K, \;\;\max_{1\leq j\leq p}\|X_j\|_2\leq 1,$$
and 
that $X$ satisfies the compatibility condition with constant $\phi>0.$
Take $t>0$ and let 
$$\bar\lambda:= \sqrt{2\log(2p)/n} + K\log (2p)/(3n),$$
$$\lambda_0:=4\bar\lambda+tK/(3n) + \sqrt{2t(1+8\bar\lambda)/n}.$$
Assume that there exist constants $\epsilon_0,\eta$ such that 
$$0< \eta \leq \epsilon_0 < \pi_0(x) < 1-\epsilon_0,\;\;\text{ for all } \|x\|_\infty \leq K,$$ 
$$\sup_{x:\|x\|_\infty \leq K} |f_{\beta_0}(x) - f_0(x)|\leq \eta/2.$$
For some constant $M \geq 8$ take $\lambda$ satisfying  $8\lambda_0\leq \lambda \leq M\lambda_0$ and $17\lambda s(e^\eta / \epsilon_0 + 1)^2 / \phi^2 \leq \eta/(2K)$, and further assume that 
\begin{align*}
R( f_{\beta_0}|X) - R(f_0|X)   &\leq \min\left\{\eta\lambda_0/4, \frac{\lambda^2{s}(e^{\eta}/\epsilon_0+1)^2}{ 6\phi^2}\right\}, \\
  \frac{8KM^2 (e^{\eta}/\epsilon_0+1)^2}{\eta}\frac{\lambda_0 s}{\phi^2} &\leq 1.
\end{align*}                                                                   
Then with probability at least $1-e^{-t},$
it holds that
$$R(f_{\hat{\beta}}|X) - R(f_0|X) + \lambda \|\hat\beta-\beta_{0}\|_1 \leq \frac{17\lambda^2{s}(e^{\eta}/\epsilon_0+1)^2}{ \phi^2}.$$

\end{lem}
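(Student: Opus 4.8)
The plan is to follow the standard oracle-inequality machinery for $\ell_1$-penalised convex losses developed in \cite{hds}, adapted to the misspecified setting in which $\beta_0$ is only the best \emph{linear} approximation to the possibly nonlinear $f_0$. Everything is carried out conditionally on $X$. The starting point is the \emph{basic inequality}: since $\hat\beta$ minimises the penalised empirical risk, writing $\hat R_n$ for the empirical risk we have $\hat R_n(f_{\hat\beta}) + \lambda\|\hat\beta\|_1 \leq \hat R_n(f_{\beta_0}) + \lambda\|\beta_0\|_1$, which rearranges to
\begin{equation*}
R(f_{\hat\beta}|X) - R(f_{\beta_0}|X) + \lambda\|\hat\beta\|_1 \leq -\{\nu_n(\hat\beta) - \nu_n(\beta_0)\} + \lambda\|\beta_0\|_1,
\end{equation*}
where $\nu_n(\beta) := (\hat R_n - R)(f_\beta) = -\frac{1}{n}\sum_{i=1}^n \{Y_i - \mathbb E(Y_i|X)\}x_i^T\beta$ is \emph{linear} in $\beta$ (the term $d(x_i^T\beta)$ is deterministic given $X$). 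Hence the empirical-process increment satisfies $|\nu_n(\hat\beta) - \nu_n(\beta_0)| \leq \|\frac{1}{n}X^T(Y - \mathbb E(Y|X))\|_\infty\,\|\hat\beta - \beta_0\|_1$, reducing all stochastic control to the single quantity $\|\frac{1}{n}X^T(Y-\mathbb E(Y|X))\|_\infty$.

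First I would bound this noise term. The centred responses $Y_i - \mathbb E(Y_i|X)$ are independent, bounded, with variance at most $1/4$, and the columns of $X$ satisfy $\|X_j\|_\infty \leq K$, $\|X_j\|_2 \leq 1$; a Bernstein-type concentration inequality applied to the maximum over the $2p$ signed column-averages yields, on an event of probability at least $1 - e^{-t}$, a bound $\|\frac{1}{n}X^T(Y-\mathbb E(Y|X))\|_\infty \leq \lambda_0$. This is exactly the role of $\bar\lambda$ (which controls the mean of the supremum) and of the $t$-dependent deviation terms in the definition of $\lambda_0$. Working on this event and using the hypothesis $8\lambda_0 \leq \lambda$, the noise is absorbed into the penalty and the basic inequality becomes a deterministic one.

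The crux is the \emph{margin (local strong convexity) condition}. A second-order Taylor expansion of the logistic loss, with the first-order term vanishing because $\beta_0$ minimises $R(f_\cdot|X)$, gives $R(f_\beta|X) - R(f_{\beta_0}|X) = \frac{1}{2n}\sum_{i=1}^n d''(x_i^T\tilde\beta_{(i)})\{x_i^T(\beta-\beta_0)\}^2$ for intermediate points $\tilde\beta_{(i)}$. Whenever $\beta$ is close enough to $\beta_0$ that $\sup_{\|x\|_\infty\leq K}|f_\beta(x) - f_0(x)| \leq \eta$ — which holds on the ball $\{\beta : K\|\beta-\beta_0\|_1 \leq \eta/2\}$ since $\sup_{\|x\|_\infty\leq K}|f_{\beta_0}(x)-f_0(x)|\leq \eta/2$ — the estimate \eqref{ddotrho} gives $d''(x_i^T\tilde\beta_{(i)}) \geq c_0^2 = (e^\eta/\epsilon_0+1)^{-2}$, so $R(f_\beta|X) - R(f_{\beta_0}|X) \geq c_0^2\|X(\beta-\beta_0)\|_2^2/(2n)$. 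The main obstacle is that this holds only \emph{locally}: logistic loss is not strongly convex globally (as $d''(u)\to 0$ when $|u|\to\infty$), so the margin bound cannot be applied at $\hat\beta$ directly. I would resolve this by the usual convexity-localisation device: set $\hat\beta_\kappa := \beta_0 + \kappa(\hat\beta - \beta_0)$ and take the largest $\kappa\in(0,1]$ for which $\hat\beta_\kappa$ stays in the ball above; convexity of the penalised objective lets the entire argument be run at $\hat\beta_\kappa$, and the scaling hypotheses $17\lambda s(e^\eta/\epsilon_0+1)^2/\phi^2 \leq \eta/(2K)$ and $8KM^2(e^\eta/\epsilon_0+1)^2\lambda_0 s/(\eta\phi^2)\leq 1$ are precisely what force the resulting radius bound to be strictly interior, hence $\kappa=1$ and $\hat\beta$ itself lies in the neighbourhood.

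With localisation secured, the remaining steps are routine. The basic inequality together with the empirical-process bound places $\hat\beta-\beta_0$ in the cone $\|(\hat\beta-\beta_0)_{S^c}\|_1 \leq 3\|(\hat\beta-\beta_0)_S\|_1$, so the compatibility condition (Definition~\ref{cc}) applies and gives $\|(\hat\beta-\beta_0)_S\|_1 \leq \sqrt{s}\,\|X(\hat\beta-\beta_0)\|_2/\phi$. Inserting the margin lower bound $c_0^2\|X(\hat\beta-\beta_0)\|_2^2/(2n)$ for the excess risk and decoupling the cross term $\lambda\sqrt{s}\,\|X(\hat\beta-\beta_0)\|_2/\phi$ from the quadratic term via $ab \leq a^2/2 + b^2/2$ yields the stated inequality, with the approximation error $R(f_{\beta_0}|X)-R(f_0|X)$ — small by hypothesis, being bounded by $\lambda^2 s(e^\eta/\epsilon_0+1)^2/(6\phi^2)$ — added in to convert the bound on $R(f_{\hat\beta}|X)-R(f_{\beta_0}|X)$ into one on $R(f_{\hat\beta}|X)-R(f_0|X)$. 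Tracking the numerical constants through these manipulations produces the factor $17\lambda^2 s(e^\eta/\epsilon_0+1)^2/\phi^2$.
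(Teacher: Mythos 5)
Your proposal is correct and takes essentially the same route as the paper: the paper's proof is simply a citation to Lemma 6.8 and Section 6.7 of \cite{hds}, and your sketch --- the basic inequality with the exactly linear empirical process for the canonical logistic loss, Bernstein/Bousquet control of $\|X^T(Y-\mathbb{E}(Y|X))\|_\infty/n$ at level $\lambda_0$, the local margin bound $d''\geq (e^{\eta}/\epsilon_0+1)^{-2}$ combined with the convexity-localisation trick, and compatibility plus the approximation-error hypotheses --- is a faithful reconstruction of precisely that cited argument. The only unverified detail is the numerical constant $17$, which, exactly as in the paper, is inherited from the constants tracked in \cite{hds}.
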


\begin{proof}[Proof of Lemma \ref{logit.rates}]
The proof follows from Lemma 6.8 and Section 6.7 in \cite{hds}.
\end{proof}

%
%
%
%
%
%

\end{document}